\documentclass[11pt]{article}

\usepackage[utf8]{inputenc} 
\usepackage[T1]{fontenc}    
\usepackage{hyperref}       
\usepackage{url}            
\usepackage{booktabs}       
\usepackage{amsfonts}       
\usepackage{nicefrac}       
\usepackage{microtype}      
\usepackage{setspace}
\usepackage[round]{natbib}
\usepackage[left=1.25in,top=1.25in,right=1.25in,bottom=1.25in,head=1.25in]{geometry}

\usepackage{enumerate}
\usepackage{titling}

\newcommand{\ones}{\mathbf 1}
\newcommand{\reals}{{\mbox{\bf R}}}


\newcommand{\prox}{\mathbf{prox}}
\newcommand{\spann}{\mathbf{span}} 

\newcommand{\Expect}{\mathop{\bf E{}}}

\newcommand{\prob}{\mathop{\bf Pr}} 



\newcommand{\eg}{{\it e.g.}}
\newcommand{\ie}{{\it i.e.}}

\newcommand{\nb}{{\it n.b.}} 

\usepackage{graphicx,psfrag,amsmath,amsfonts,verbatim} 
\usepackage[small,bf]{caption}
\usepackage{url}

\ifdefined\aaslides
\else

\usepackage{amsthm}
\newtheorem{theorem}{Theorem}[section]
\newtheorem{lemma}[theorem]{Lemma}

\fi

\newcommand{\minimize}{\mathop{\mbox{minimize}}} 
\newcommand{\minimizewrt}[1]{\underset{#1}{\minimize}}

\newcommand{\subjectto}{\mbox{subject to}}

\usepackage{xcolor}

\newcommand{\optprobstart}{\begin{equation}}
\newcommand{\optprobend}{\end{equation}}

\newcommand{\optprobstartnn}{\begin{equation*}}
\newcommand{\optprobendnn}{\end{equation*}}

\usepackage{listings}

\usepackage{color}

\newcommand{\mtxstart}{\left[\begin{array}}
\newcommand{\mtxend}{\end{array}\right]}	


\usepackage{tikz}

\newcommand{\var}{\mathop{\bf var}}

\usepackage{bm}
\usepackage{wrapfig}
\usepackage{bbm}
\usepackage{algorithm}
\usepackage{algorithmic}
\usepackage{subcaption}

\usepackage{floatpag}
\floatpagestyle{empty}

\newcommand{\mqgm}{MQGM}
\newcommand{\mb}{MB}
\newcommand{\glasso}{GLasso}
\newcommand{\spacejam}{SpaceJam}
\newcommand{\tiger}{TIGER}
\newcommand{\laplace}{Laplace}
\newcommand{\nonparas}{nonparanormal skeptic}
\newcommand{\nonpara}{Nonpara.}

\def\P{\prob}                 
\def\R{\reals}                 
\def\E{\Expect}              
\def\cF{\mathcal{F}}      

\title{The Multiple Quantile Graphical Model}
\author{Alnur Ali$^1$ \and J. Zico Kolter$^{1,2}$ \and Ryan J.~Tibshirani$^{1,3}$} 
\date{$^1$Machine Learning Department, Carnegie Mellon University \\ 
$^2$Computer Science Department, Carnegie Mellon University \\ 
$^3$Department of Statistics, Carnegie Mellon University}

\begin{document}
\maketitle

\begin{abstract}
We introduce the Multiple Quantile Graphical Model (\mqgm),  
which extends the {\it neighborhood selection} approach of
Meinshausen and B{\"u}hlmann for learning sparse graphical models.
The latter is defined by the basic subproblem of modeling the
conditional mean of one variable as a sparse function of all others.
Our approach models a set of conditional quantiles of
one variable as a sparse function of all others, and hence offers 
a much richer, more expressive class of conditional distribution
estimates.  We establish that,
under suitable regularity conditions, the \mqgm~identifies the exact  
conditional independencies with probability tending to one as the
problem size grows, even outside of the usual homoskedastic Gaussian
data model. We develop an efficient algorithm for fitting the
\mqgm~using the alternating direction method of multipliers.   
We also describe a strategy for sampling from the joint distribution
that underlies the \mqgm~estimate. Lastly, 
we present detailed experiments that demonstrate the flexibility and
effectiveness of the \mqgm~in modeling hetereoskedastic 
non-Gaussian data.
\end{abstract}

\newcommand{\fsw}{0.31\textwidth}

\section{Introduction}
\label{sec:intro}

We consider modeling the joint distribution $\P(y_1,\ldots,y_d)$ of $d$ random variables, given $n$ independent draws from this distribution \smash{$y^{(1)},\ldots,y^{(n)} \in \reals^d$}, where possibly $d \gg n$.  Later, we generalize this setup and consider modeling the conditional distribution $\P(y_1,\ldots,y_d | x_1,\ldots,x_p)$, given $n$ independent pairs \smash{$(x^{(1)},y^{(1)}),\ldots,(x^{(n)},y^{(n)}) \in \reals^{p+d}$}. Our starting point is the {\it neighborhood selection} method of \cite{meinshausen2006high}, which is typically considered in the context of multivariate Gaussian data, and seen as a tool for {\it covariance selection} \citep{dempster1972covariance}: when $\P(y_1,\ldots,y_d)$ is a multivariate Gaussian distribution, it is a well-known fact $y_j$ and $y_k$ are conditionally independent given the remaining variables if and ony if the coefficent corresponding to $y_k$ is zero in the (linear) regression of $y_j$ on all other variables (\eg, \cite{lauritzen1996graphical}).  Therefore, in  neighborhood selection we compute, for each $k=1,\ldots,d$, a lasso regression
 --- in order to obtain a small set of conditional dependencies --- of $y_k$ on the remaining variables, \ie,
\begin{equation}
\minimize_{\theta_k \in {\bf R}^d} \; \sum_{i=1}^n \bigg( y_k^{(i)} - \sum_{j \neq k} \theta_{kj} y_j^{(i)} \bigg)^2 + \lambda \| \theta_k \|_1,
\label{eq:mb}
\end{equation}
for a tuning parameter $\lambda > 0$. This strategy can be seen as a {\it pseudolikelihood} approximation \citep{besag1974}, 
\begin{equation}
\P(y_1,\ldots,y_d) \approx \prod_{k=1}^d \P(y_k | y_{\neg k}),
\label{eq:pseudo}
\end{equation}
where $y_{\neg k}$ denotes all variables except $y_k$.  Under the multivariate Gaussian model for $\P(y_1,\ldots,y_d)$, the conditional distributions $\P(y_k | y_{\neg k})$, $k=1,\ldots,d$ here are (univariate) Gaussians, and maximizing the pseudolikelihood in \eqref{eq:pseudo} is equivalent to separately maximizing the conditionals, as is precisely done in \eqref{eq:mb} (with induced sparsity), for $k=1,\ldots,d$.  

Following the pseudolikelihood-based approach traditionally means carrying out three steps: (i) we write down a suitable family of joint distributions for $\P(y_1,\ldots,y_d)$, (ii) we derive the conditionals $\P(y_k | y_{\neg k})$, $k=1,\ldots,d$, and then (iii) we maximize each conditional likelihood by (freely) fitting the parameters. Neighborhood selection, and a number of related approaches that came after it (see Section \ref{sec:mb}), can be all thought of in this workflow. In many ways, step (ii) acts as the bottleneck here, and to derive the conditionals, we are usually limited to a homoskedastic and parameteric family for the joint distribution.

The approach we take in this paper differs somewhat substantially, as we {\it begin} by directly modeling the conditionals in \eqref{eq:pseudo}, without any preconceived model for the joint distribution --- in this sense, it may be seen a type of {\it dependency network} \citep{heckerman2000dependency} for continuous data. We also employ heteroskedastic, nonparametric models for the conditional distributions, which allows us great flexibility in learning these conditional relationships. Our method, called the Multiple Quantile Graphical Model (\mqgm), is a marriage of ideas in high-dimensional, nonparametric, multiple quantile regression with those in the dependency network literature (the latter is typically focused on discrete, not continuous, data).

An outline for this paper is as follows.  Section \ref{sec:bkgd} reviews background material, and Section \ref{sec:model} develops the \mqgm~estimator.  Section \ref{sec:theory} studies basic properties of the \mqgm, and establishes a structure recovery result under appropriate regularity conditions, even for heteroskedastic, non-Gaussian data.  Section \ref{sec:algos} describes an efficient ADMM algorithm for estimation, and Section \ref{sec:exps} presents empirical examples comparing the \mqgm~versus common alternatives. Section \ref{sec:disc} concludes with a discussion.

\section{Background}
\label{sec:bkgd}

\subsection{Neighborhood selection and related methods}
\label{sec:mb}

Neighborhood selection has motivated a number of methods for learning sparse graphical models. The literature here is vast; we do not claim to give a complete treatment, but just mention some relevant approaches. Many pseudolikelihood approaches have been proposed, see, \eg, \cite{rocha2008path,peng2009partial,friedman2010applications,liu2012tiger,khare2014convex,alnur}.  These works exploit the connection between estimating a sparse inverse covariance matrix and regression, and they vary in terms of the optimization algorithms they use and the theoretical guarantees they offer.

In a related but distinct line of research, \cite{yuan2007,BEA:08,friedman2008sparse,rothman2008} proposed $\ell_1$-penalized likelihood estimation in the Gaussian graphical model, a method now generally termed the {\it graphical lasso} (\glasso).  Following this, several recent papers have extended the \glasso~in various ways.  \cite{finegold2011} examined a modification based on the multivariate Student $t$-distribution, for robust graphical modeling. \cite{sohn2012joint,yuan2014partial,wytock2013sparse} considered conditional distributions of the form $\P(y_1,\ldots,y_d|x_1,\ldots,x_p)$.  \cite{lee2013} proposed a model for mixed (both continuous and discrete) data types, generalizing both \glasso~and pairwise Markov random fields.  \cite{liu2009nonparanormal,liu2012high} used copulas for learning non-Gaussian graphical models.

A strength of neighborhood-based (\ie, pseudolikelihood-based) approaches lies in their simplicity; because they essentially reduce to a collection of univariate probability models, they are in a sense much easier to study outside of the typical homoskedastic, Gaussian data setting.  \cite{hoefling2009,yang2012,yang2015} elegantly studied the implications of using univariate exponential family models for the conditionals in \eqref{eq:pseudo}. 
Closely related to pseudoliklihood approaches are dependency networks \citep{heckerman2000dependency}.  Both frameworks focus on the conditional distributions of one variable given all the rest; the difference lies in whether or not the model for conditionals stems from first specifying some family of joint distributions (pseudolikelihood methods), or not (dependency networks).  Dependency networks have been thoroughly studied for discrete data, \eg, \cite{heckerman2000dependency,neville2004dependency}.
For continuous data, \cite{voorman2014} proposed modeling the mean in a Gaussian neighborhood regression as a nonparametric, additive function of the remaining variables, yielding flexible relationships --- this is a type of dependency network for continuous data (though it is not described by the authors in this way).  Our method, the \mqgm, also deals with continuous data, and is the first to our knowledge that allows for fully nonparametric conditional distributions, as well as nonparametric contributions of the neighborhood variables, in each local model.

\subsection{Quantile regression}
\label{sec:qr}

In linear regression, we estimate the conditional mean of $y | x_1,\ldots,x_p$ based on data samples. Similarly, in {\it $\alpha$-quantile regression} \citep{koenker1978}, we estimate the conditional $\alpha$-quantile of $y | x_1,\ldots,x_p$, formally \smash{$Q_{y|x_1,\ldots,x_p} (\alpha) = \inf \{ t : \P(y \leq t | x_1,\ldots,x_p) \geq \alpha \}$}, for a given $\alpha \in [0,1]$, by solving the convex optimization problem:
\begin{equation*}
\minimize_{\theta} \; \sum_{i=1}^n \psi_\alpha \bigg( y^{(i)} - \sum_{j=1}^q \theta_j x_j^{(i)} \bigg),
\end{equation*}
where \smash{$\psi_\alpha(z) = \max \{ \alpha z, (\alpha - 1) z \}$} is the quantile loss (also referred to as the ``pinball'' or ``tilted absolute'' loss).
Quantile regression can be useful when the conditional distribution in question is suspected to be heteroskedastic and/or non-Gaussian, \eg, heavy-tailed, or if we wish to understand properties of the distribution other than the mean, \eg, tail behavior.  In multiple quantile regression, we solve several quantile regression problems simultaneously, each corresponding to a different quantile level; these problems can be coupled somehow to increase efficiency in estimation (see details in the next section). Again, the literature on quantile regression is quite vast (especially that from econometrics), and we only give a short review here. A standard text is \cite{koenker2005quantile}.  Nonparametric modeling of quantiles is a natural extension from the (linear) quantile regression approach outlined above; in the univariate case (one conditioning variable), \cite{koenker1994} suggested a method using smoothing splines, and \cite{takeuchi2006nonparametric} described an approach using kernels.  More recently, \cite{koenker2011} studied the multivariate nonparametric case (more than one conditioning variable), using additive models.  In the high-dimensional setting, where $p$ is large, \cite{belloni2011,kato2011,fan2014adaptive} studied $\ell_1$-penalized quantile regression and derived estimation and recovery theory
for non-(sub-)Gaussian data.  We extend results in \cite{fan2014adaptive} to prove structure recovery guarantees for the \mqgm~(in Section \ref{sec:recovery}). 

\section{The multiple quantile graphical model}
\label{sec:model}

Many choices can be made with regards to the final form of the \mqgm, and to help in understanding these options, we break down our presentation in parts.
First fix some ordered set $\mathcal{A}=\{\alpha_1,\ldots,\alpha_r\}$ of quantile levels, \eg, $\mathcal{A}=\{0.05, 0.10, \ldots, 0.95\}$.  For each variable $y_k$, and each level $\alpha_\ell$, we model the $\alpha_\ell$-conditional quantile given the other variables, using an additive expansion of the form:
\begin{equation}
Q_{y_k | y_{\neg k}}(\alpha_\ell) = b^*_{\ell k} + \sum_{j \neq k}^d f_{\ell k j}^*(y_j), 
\label{eq:invcdf}
\end{equation}
where \smash{$b_{\ell k}^* \in \R$} is an intercept term, and \smash{$f_{\ell k j}^*$}, $j=1,\ldots,d$ are smooth, but not parametric in form. 

\smallskip
\paragraph{Generic functional form of the \mqgm} In its most general form, the \mqgm~estimator is defined as a collection of optimization problems, over $k=1,\ldots,d$ and $\ell=1,\ldots,r$:
\begin{equation}
\label{eq:mqgm}
\minimize_{b_{\ell k}, \, f_{\ell k j} \in \cF_{\ell k j}, \, j=1,\ldots,d} \;
\sum_{i=1}^n \psi_{\alpha_\ell} \bigg( y_k^{(i)} - b_{\ell k} - \sum_{j \not= k} f_{\ell k j} (y_j^{(i)}) \bigg) +
\sum_{j \not= k} \Big( \lambda_1 P_1 (f_{\ell k j}) + 
\lambda_2 P_2 (f_{\ell k j}) \Big)^\omega.
\end{equation}
Here $\lambda_1,\lambda_2 \geq 0$ are tuning parameters.  Also, $\cF_{\ell k j}$, $j=1,\ldots,d$ are spaces of univariate functions, $\omega>0$ is a fixed exponent, and $P_1,P_2$ are sparsity and smoothness penalty functions, respectively, all to be decided as part of the modeling process. We give three examples below; several other variants are possible outside of what we describe.

\paragraph{Example 1: basis expansion model} Consider taking \smash{$\cF_{\ell k j}=\spann\{\phi_1^j,\ldots,\phi_m^j\}$}, the span of $m$ basis functions, \eg, radial basis functions (RBFs) with centers placed at appropriate locations across the domain of variable $j$, for each $j=1,\ldots,d$.  This means that each $f_{\ell k j} \in \cF_{\ell k j}$ can be expressed as \smash{$f_{\ell k j}(x) = \theta_{\ell k j}^T \phi^j(x)$}, for a coefficient vector \smash{$\theta_{\ell k j} \in \R^m$}, where
\smash{$\phi^j(x) = (\phi^j_1(x),\ldots,\phi^j_m(x))$}. Also consider an exponent $\omega=1$, and the sparsity and smoothness penalties
\begin{equation*}
P_1(f_{\ell k j}) = \|\theta_{\ell k j}\|_2 \quad \text{and} \quad 
P_2(f_{\ell k j}) = \|\theta_{\ell k j}\|_2^2,
\end{equation*}
respectively, which are group lasso and ridge penalties, respectively.  With these choices in place, the \mqgm~problem in \eqref{eq:mqgm} can be rewritten in finite-dimensional form:
\begin{equation}
\label{eq:mqgm_theta}
\minimize_{b_{\ell k}, \, \theta_{\ell k} = (\theta_{\ell k 1}, \ldots, 
\theta_{\ell k d})} \;
\psi_{\alpha_\ell} \Big( Y_k - b_{\ell k} \mathbf{1} - \Phi \theta_{\ell k} \Big) + 
\sum_{j \not= k} \Big( \lambda_1 \|\theta_{\ell k j}\|_2 +
\lambda_2 \|\theta_{\ell k j}\|_2^2 \Big).
\end{equation}
Above, we used the abbreviation \smash{$\psi_{\alpha_{\ell}}(z) = \sum_{i=1}^n \psi_{\alpha_\ell} (z_i)$} for a vector  \smash{$z=(z_1,\ldots,z_n) \in \R^n$}, and also \smash{$Y_k = (y_k^{(1)},\ldots,y_k^{(n)}) \in \R^n$} for the observations along variable $k$, \smash{$\mathbf{1} = (1,\ldots,1) \in \R^n$}, and \smash{$\Phi \in \R^{n \times dm}$} for the basis matrix, with blocks of columns \smash{$\Phi_{ij} = \phi(y_j^{(i)})^T \in \R^m$}.

The basis expansion model is simple and tends to works well in practice. For the majority of the paper, we will focus on this model; in principle, everything that follows (methodologically, theoretically, algorithmically) extends to the next two models we describe, as well as many other variants.

\smallskip
\paragraph{Example 2: smoothing splines model} Now consider taking \smash{$\cF_{\ell k j}=\spann\{g_1^j,\ldots,g_n^j\}$}, the span of $m=n$ natural cubic splines with knots at \smash{$y_j^{(1)},\ldots,y_j^{(n)}$}, for $j=1,\ldots,d$.  As before, we can then write \smash{$f_{\ell k j}(x) = \theta_{\ell k j}^T g^j(x)$} with coefficients \smash{$\theta_{\ell k j} \in \R^n$}, for $f_{\ell k j} \in \cF_{\ell k j}$.  The work of \cite{meier2009highdimensional}, on high-dimensional additive smoothing splines, suggests a choice of exponent $\omega=1/2$, and penalties
\begin{equation*}
P_1(f_{\ell k j}) = \|G^j \theta_{\ell k j}\|_2^2 \quad \text{and} \quad 
P_2(f_{\ell k j}) = \theta_{\ell k j}^T \Omega^j \theta_{\ell k j},
\end{equation*}
for sparsity and smoothness,
respectively, where \smash{$G^j \in \R^{n\times n}$} is a spline basis matrix with entries \smash{$G^j_{i i'} = g^j_{i'}(y_j^{(i)})$}, and \smash{$\Omega^j$} is the smoothing spline penalty matrix containing integrated products of pairs of twice differentiated basis functions.  The \mqgm~problem in \eqref{eq:mqgm} can be translated into a finite-dimensional form, very similar to what we have done in \eqref{eq:mqgm_theta}, but we omit this for brevity.

\smallskip
\paragraph{Example 3: RKHS model} Consider taking $\cF_{\ell k j}= \mathcal{H}_j$, a univariate reproducing kernel Hilbert space (RKHS), with kernel function \smash{$\kappa^j(\cdot,\cdot)$}.  The representer theorem allows us to express each function $f_{\ell k j} \in \mathcal{H}_j$ in terms of the representers of evaluation, \ie, \smash{$f_{\ell k j}(x) = \sum_{i=1}^n (\theta_{\ell k j})_i \kappa^j(x,y^{(i)}_j)$}, for a coefficient vector \smash{$\theta_{\ell k j} \in \R^n$}.  The work of \cite{raskutti2012minimax}, on high-dimensional additive RKHS modeling, suggests a choice of exponent $\omega=1$, and sparsity and smoothness penalties
\begin{equation*}
P_1(f_{\ell k j}) = \|K^j \theta_{\ell k j}\|_2 \quad \text{and} \quad 
P_2(f_{\ell k j}) = \sqrt{\theta_{\ell k j}^T K^j \theta_{\ell k j}},
\end{equation*}
respectively, where \smash{$K^j \in \R^{n\times n}$} is the kernel matrix with entries \smash{$K^j_{i i'} = \kappa^j(y_j^{(i)}, y_j^{(i')})$}. Again, the \mqgm~problem in \eqref{eq:mqgm} can be written in finite-dimensional form, now an SDP, omitted for brevity.

\paragraph{Structural constraints} Different kinds of structural constraints can be placed on top of the \mqgm~optimization problem in order to guide the estimated component functions to meet particular shape requirements.  An important example are {\it non-crossing constraints} (commonplace in nonparametric, multiple quantile regression \citep{koenker2005quantile,takeuchi2006nonparametric}): here, we optimize \eqref{eq:mqgm} jointly over $\ell=1,\ldots,r$, subject to
\begin{equation}
\label{eq:noncrossing}
b_{\ell k} + \sum_{j\not=k} f_{\ell k j}(y_j^{(i)}) \leq 
b_{\ell' k} + \sum_{j\not=k} f_{\ell' k j}(y_j^{(i)}),
\quad \text{for all $\alpha_\ell < \alpha_{\ell'}$, and $i=1,\ldots,n$}.
\end{equation}
This ensures that the estimated quantiles obey the proper ordering, at the observations. For concreteness, we consider the implications for the basis regression model, in Example 1 (similar statements hold for the other two models). For each $\ell=1,\ldots,r$, denote by \smash{$F_{\ell k}(b_{\ell k}, \theta_{\ell k})$} the criterion in \eqref{eq:mqgm_theta}.  Introducing the non-crossing constraints requires coupling \eqref{eq:mqgm_theta} over $\ell=1,\ldots,r$, so that we now have the following optimization problems, for each target variable $k=1,\ldots,d$:
\begin{equation}
\minimize_{B_k, \Theta_k} \; 
\sum_{\ell=1}^r F_{\ell k} (b_{\ell k}, \theta_{\ell k}) 
\quad \subjectto \quad (\mathbf{1} B_k^T + \Phi \Theta_k) D^T \geq 0, 
\label{eq:mqgm_theta_nc}
\end{equation}
where we denote \smash{$B_k=(b_{1k},\ldots,b_{rk}) \in \R^r$}, \smash{$\Phi \in \R^{n \times dm}$} the basis matrix as before, \smash{$\Theta_k \in \R^{dm \times r}$} given by column-stacking $\theta_{\ell k} \in \R^{dm}$, $\ell=1,\ldots,r$, and $D \in \reals^{(r-1) \times r}$ is the usual discrete difference operator, \ie,
\[
D = \left[\begin{array}{rrrrr}
-1 & 1 & 0 & \ldots &  0\\
0 & -1 & 1 & \ldots&  0\\
\vdots &  & \ddots & \ddots & \\ 
0 & 0 & \ldots & -1 & 1 
\end{array}\right].
\]
(The inequality in \eqref{eq:mqgm_theta_nc} is to be interpreted componentwise.) Computationally, coupling the subproblems across $\ell=1,\ldots,r$ clearly adds to the overall difficulty of the \mqgm, but statistically this coupling acts as a regularizer, by constraining the parameter space in a useful way, thus increasing our efficiency in fitting multiple quantile levels from the given data.

For a triplet $\ell,k,j$, {\it monotonicity constraints} are easy to add, \ie, \smash{$f_{\ell k j}(y_j^{(i)}) \leq  f_{\ell k j}(y_j^{(i')})$} for all \smash{$y_j^{(i)} < y_j^{(i')}$}. {\it Convexity constraints}, where we require \smash{$f_{\ell k j}$} to be convex over the observations, for a particular $\ell,k,j$, are also straightforward.  Lastly, {\it strong non-crossing constraints}, where we enforce \eqref{eq:noncrossing} but over all inputs \smash{$z \in \R^d$} (not just over the observations) are also possible with positive basis functions.

\paragraph{Exogenous variables and conditional random fields} So far, we have considered modeling the joint distribution $\P(y_1,\ldots,y_d)$, corresponding to learning a Markov random field (MRF).  It is not hard to extend our framework to model the conditional distribution $\prob(y_1,\ldots,y_d|x_1,\ldots,x_p)$ given some exogenous variables $x_1,\ldots,x_p$, corresponding to learning a conditional random field (CRF).  To extend the basis regression model, we introduce the additional parameters \smash{$\theta^x_{\ell k} \in \R^p$} in \eqref{eq:mqgm_theta}, and the loss now becomes \smash{$\psi_{\alpha_\ell} ( Y_k - b_{\ell k} \mathbf{1}^T - \Phi \theta_{\ell k} - X \theta^x_{\ell k})$}, where \smash{$X \in \reals^{n \times q}$} is filled with the exogenous observations \smash{$x^{(1)},\ldots,x^{(n)} \in \reals^q$}; the other models are changed similarly. 

\section{Basic properties and theory}
\label{sec:theory}


\subsection{Quantiles and conditional independence}
\label{sec:indep}

In the model \eqref{eq:invcdf}, if a particular variable $y_j$ has no contribution, \ie, satisfied \smash{$f_{\ell k j}^*=0$} across all quantile levels $\alpha_\ell$, $\ell=1,\ldots,r$, what does this imply about the conditional independence between $y_k$ and $y_j$, given the rest?  Outside of the multivariate normal model (where the feature transformations need only be linear), nothing can be said in generality.  But we argue that conditional independence can be understood in a certain {\it approximate sense} (\ie, in a projected approximation of the data generating model).  We begin with a simple lemma. Its proof is elementary, and given in the supplement.

\begin{lemma}
\label{lem:indep}
Let $U,V,W$ be random variables, and suppose that all conditional quantiles of $U|V,W$ do not depend on $V$, \ie, $Q_{U|V,W}(\alpha)=Q_{U|W}(\alpha)$ for all $\alpha \in [0,1]$.  Then $U$ and $V$ are conditionally independent given $W$. 
\end{lemma}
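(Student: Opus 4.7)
The plan is to show that the hypothesis forces the conditional CDF of $U$ given $(V,W)$ to coincide with the conditional CDF of $U$ given $W$ alone, and then to turn this equality of conditional distributions into conditional independence by a direct factorization.

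First I would exploit the standard duality between a distribution function $F$ and its generalized inverse $Q(\alpha)=\inf\{t : F(t)\geq \alpha\}$, namely $Q(\alpha)\leq t$ if and only if $F(t)\geq \alpha$. Applied pointwise in $(V,W)$, this says that for every $t$ and every $\alpha\in[0,1]$,
\[
\P(U\leq t \mid V,W)\geq \alpha \iff Q_{U\mid V,W}(\alpha)\leq t,
\]
with the analogous statement for $Q_{U\mid W}$. Since by assumption $Q_{U\mid V,W}(\alpha)=Q_{U\mid W}(\alpha)$ for every $\alpha$, the two conditional probabilities $\P(U\leq t\mid V,W)$ and $\P(U\leq t\mid W)$ must dominate exactly the same set of thresholds $\alpha$, so they agree almost surely for each fixed $t$.

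Second, I would promote this pointwise-in-$t$ equality to an equality of conditional laws. Taking a countable dense set of $t$'s (e.g.\ the rationals) and removing a single null set, we have $\P(U\leq t\mid V,W)=\P(U\leq t\mid W)$ simultaneously for all rational $t$, and then right-continuity of CDFs extends this to every real $t$ on the same event of full probability. Thus the regular conditional distribution of $U$ given $(V,W)$ coincides almost surely with the regular conditional distribution of $U$ given $W$, so for every Borel set $A$,
\[
\P(U\in A\mid V,W) = \P(U\in A\mid W) \quad \text{a.s.}
\]

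Finally, I would conclude conditional independence by the usual tower argument: for any Borel sets $A,B$,
\[
\P(U\in A,\, V\in B\mid W) = \E\bigl[\mathbf{1}_{\{V\in B\}}\,\P(U\in A\mid V,W)\,\big|\,W\bigr] = \P(U\in A\mid W)\,\P(V\in B\mid W),
\]
which is exactly $U \condindep V \mid W$. The only delicate step is the second one, where one must be careful that the null sets arising in the pointwise-in-$t$ equalities can be unioned over a countable dense set and that right-continuity legitimately extends the equality to all $t$; everything else is a routine application of the CDF/quantile duality and iterated expectation.
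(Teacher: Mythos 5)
Your proof is correct and follows essentially the same route as the paper's: both recover the conditional CDF from the conditional quantile function (you via the Galois connection $Q(\alpha)\leq t \iff F(t)\geq\alpha$, the paper via $F(t)=\sup\{\alpha : Q(\alpha)\leq t\}$), observe that it therefore cannot depend on $V$, and conclude conditional independence. The only difference is that you spell out the measure-theoretic bookkeeping (null sets over a countable dense set of $t$'s, right-continuity, and the tower-property factorization) that the paper leaves implicit.
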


By the lemma, if we knew that $Q_{U|V,W}(\alpha)=h(\alpha,U,W)$ for a function $h$, then it would follow that $U,V$ are conditionally independent given $W$ (\nb, the converse is true, as well).  The \mqgm~problem in \eqref{eq:mqgm}, with sparsity imposed on the coefficients, essentially aims to achieve such a representation for the conditional quantiles; of course we cannot use a {\it fully nonparametric} representation of the conditional distribution $y_k | y_{\neg k}$ and instead we use an {\it $r$-step approximation} to the conditional cumulative distribution function (CDF) of $y_k | y_{\neg k}$ (corresponding to estimating $r$ conditional quantiles), and (say) in the basis regression model, limit the dependence on conditioning variables to be in terms of an additive function of RBFs in $y_j$, $j \neq k$. Thus, if at the solution in \eqref{eq:mqgm_theta} we find that \smash{$\hat\theta_{kj\ell}=0$}, $\ell=1,\ldots,r$, we may interpret this to mean that $y_k$ and $y_j$ are conditionally independent given the remaining variables, but according to the distribution defined by the {\it projection} of $y_k|y_{\neg k}$ onto the space of models considered in \eqref{eq:mqgm_theta} ($r$-step conditional CDFs, which are additive expansions in $y_j$, $j \neq k$).  This interpretation is no more tenuous (arguably, less so, as the model space here is much larger) than that needed when applying standard neighborhood selection to non-Gaussian data.



\subsection{Gibbs sampling and the ``joint'' distribution}
\label{sec:joint}

When specifying a form for the conditional distributions in a pseudolikelihood approximation as in \eqref{eq:pseudo}, it is natural to ask: what is the corresponding joint distribution?  Unfortunately, for a general collection of conditional distributions, there need not exist a compatible joint distribution, even when all conditionals are continuous \citep{wang2008}.  Still, pseudolikelihood approximations (a special case of composite likelihood approximations), possess solid theoretical backing, in that maximizing the pseudolikelihood relates closely to minimizing a certain (expected composite) Kullback-Leibler divergence, measured to the true conditionals \citep{varin2005}.  Recently, \cite{chen2015,yang2015} made nice progress in describing specific conditions on conditional distributions that give rise to a valid joint distribution, though their work was specific to exponential families.  A practical answer to the question of this subsection is to use Gibbs sampling, which attempts to draw samples consistent with the fitted conditionals; this is precisely the observation of \cite{heckerman2000dependency}, who show that Gibbs sampling from discrete conditionals converges to a unique stationary distribution, although this distribution may not actually be compatible with the conditionals.  The following result establishes the analogous claim for continuous conditionals; its proof is in the supplement. We demonstrate the practical value of Gibbs sampling through various examples in Section \ref{sec:exps}.


\begin{lemma}
\label{lem:gibbs}
Assume that the conditional distributions $\prob(y_k | y_{\neg k})$, $k=1,\ldots,d$ take only positive values on their domain.  Then, for any given ordering of the variables, Gibbs sampling converges to a unique stationary distribution that can be reached from any initial point. (This stationary distribution depends on the ordering.)
\end{lemma}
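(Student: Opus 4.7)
The plan is to view one full sweep of the Gibbs sampler as a single step of a time-homogeneous Markov chain on the joint state space $\mathcal{Y} \subseteq \R^d$, and then deduce convergence from a standard minorization-plus-irreducibility argument in the style of Meyn and Tweedie. Fix the ordering $1,2,\ldots,d$ without loss of generality. Given a current state $y \in \mathcal{Y}$, one sweep produces a new state $y'$ with transition density
\[
K(y, y') = \prod_{k=1}^d \P\!\left(y'_k \,\big|\, y'_1,\ldots,y'_{k-1},y_{k+1},\ldots,y_d\right),
\]
where each factor is the conditional density being sampled from at stage $k$. Because each conditional $\P(y_k \mid y_{\neg k})$ is assumed strictly positive on its domain by hypothesis, every factor above is strictly positive, so $K(y,y') > 0$ for all $(y,y') \in \mathcal{Y} \times \mathcal{Y}$.

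Next I would convert this pointwise positivity into the two ingredients needed for convergence. First, irreducibility: for any measurable set $A \subseteq \mathcal{Y}$ with positive Lebesgue measure, and for any $y \in \mathcal{Y}$, we have $\int_A K(y,y')\,dy' > 0$, so the chain can reach $A$ in one step from any starting point. Thus the chain is $\varphi$-irreducible with $\varphi$ taken to be Lebesgue measure restricted to $\mathcal{Y}$. Second, aperiodicity follows trivially, since $K(y,\cdot)$ assigns positive mass to every neighborhood of $y$ itself. One can go further and derive a Doeblin-type minorization on any compact set $C \subset \mathcal{Y}$: since $K$ is positive and (under the mild assumption that the conditionals vary continuously, or by a local argument) bounded below on $C \times C$, there exist $\epsilon > 0$ and a probability measure $\nu$ on $\mathcal{Y}$ with $K(y,\cdot) \geq \epsilon\, \nu(\cdot)$ for all $y \in C$.

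With $\varphi$-irreducibility and aperiodicity in hand, a standard result for Markov chains on general state spaces (\eg, Theorem 13.0.1 of Meyn and Tweedie) guarantees that if an invariant probability measure exists it is unique and the chain converges to it in total variation from any starting point. Existence of an invariant distribution is where the strict positivity really earns its keep: the global minorization $K(y,y') \geq \epsilon\, g(y')$ on $\mathcal{Y}$ (taking $g$ to be any suitably normalized density pointwise dominated by $K$) yields a uniform drift/minorization that by Doeblin's theorem produces a unique stationary distribution $\pi$ satisfying the geometric ergodicity bound $\|K^n(y,\cdot) - \pi\|_{\mathrm{TV}} \leq (1-\epsilon)^n$. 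This gives exactly the stated conclusion.

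The main obstacle I anticipate is measure-theoretic cleanliness rather than real mathematical depth: one must be careful that ``positive on its domain'' translates to a minorization that is uniform over the whole joint state space, and if $\mathcal{Y}$ is unbounded this may require restricting to compacta and invoking a standard extension (a small-set argument plus a drift condition toward some compact $C$). Once the positivity is leveraged into both irreducibility and a minorization, everything else is a citation. Finally, I would add a brief remark that the stationary distribution depends on the ordering because the transition kernel $K$ itself does, which is expected since the given conditionals are generally not mutually compatible as in the discrete case treated by Heckerman et al.
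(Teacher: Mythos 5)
Your proof takes essentially the same route as the paper's: both view one full sweep as a single step of a time-homogeneous Markov chain, factor the sweep's transition kernel into the product of the fitted conditionals, observe that strict positivity of each conditional makes the kernel everywhere positive, and then invoke a standard ergodicity result for positive-kernel chains (the paper cites Bishop; you cite Meyn--Tweedie and Doeblin). If anything you are more careful than the paper about the one genuine subtlety --- that on an unbounded continuous state space, pointwise positivity alone does not give a uniform minorization, so existence of the stationary distribution strictly requires a small-set-plus-drift argument --- a gap the paper's one-line appeal to ergodicity silently inherits.
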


\subsection{Graph structure recovery}
\label{sec:recovery}

When \smash{$\log{d}=O(n^{2/21})$}, and we assume somewhat standard regularity conditions (listed as A1--A4 in the supplement), we will show that the \mqgm~estimate recovers the underlying conditional independencies with high probability (interpreted in the projected model space, as explained in Section \ref{sec:indep}).  Importantly, we do not require a Gaussian, sub-Gaussian, or even parametric assumption on the data generating process; instead, we assume i.i.d.\ draws \smash{$y^{(1)},\ldots,y^{(n)} \in \reals^d$}, where the conditional distributions $y_k|y_{\neg k}$ have quantiles that are specified by the model in \eqref{eq:invcdf} for $k=1,\ldots,d$, $\ell=1,\ldots,r$, and further, each \smash{$f_{\ell k j}^*(x) = \theta_{\ell k j}^T \phi^j(x)^*$} for coefficients \smash{$\theta_{\ell k j}^* \in \R^m$}, $j=1,\ldots,d$, as in the basis expansion model.  

Let $E^*$ denote the corresponding edge set of conditional dependencies from these neighborhood models, \ie, 
\smash{$\{k,j\} \in E^* \iff \max_{\ell=1,\ldots,r} \max\{\|\theta_{\ell kj}^*\|_2, |\theta_{\ell jk}^*\|_2\}> 0$}. We define the estimated edge set \smash{$\hat{E}$} in the analogous way, based on the solution in \eqref{eq:mqgm_theta}. Without a loss of generality, we assume the features have been scaled to satisfy $\|\Phi_j\| \leq \sqrt{n}$ for $j=1,\ldots,dm$. The following is our recovery result; its proof is provided in the supplement.

\begin{theorem}
\label{thm:recovery}
Assume \smash{$\log{d}=O(n^{2/21})$}, and conditions A1--A4 in the supplement. Assume that the tuning parameters $\lambda_1,\lambda_2$ satisfy
\begin{equation*}
\lambda_1 \asymp \sqrt{mn \log( d^2mr / \delta) \log^3 n} 
\quad \text{and} \quad
\lambda_2 = o(n^{41/42} / \theta^*_{\max}),
\end{equation*}
where \smash{$\theta^*_{\max}=\max_{\ell,k,j} \|\theta^*_{\ell k j}\|_2$}. Then for $n$ large enough, the  \mqgm~estimate in \eqref{eq:mqgm_theta} exactly recovers the underlying conditional dependencies, \ie, \smash{$\hat{E}=E^*$}, with probability at least $1-\delta$.
\end{theorem}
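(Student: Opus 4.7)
The plan is to prove the theorem via a primal--dual witness (PDW) construction, carried out one neighborhood at a time and then combined by a union bound. Because the objective in \eqref{eq:mqgm_theta} decouples across $k=1,\ldots,d$ and across $\ell=1,\ldots,r$ (no non-crossing constraints are imposed in the statement), I fix a pair $(k,\ell)$ and analyze the corresponding group-lasso $+$ ridge penalized quantile regression. Writing $S = \{j \ne k : \|\theta^*_{\ell k j}\|_2 > 0\}$, the witness procedure builds a candidate $(\hat b_{\ell k},\hat\theta_{\ell k})$ by (i) solving the restricted program that enforces $\theta_{\ell k j}=0$ for $j \notin S$, and (ii) producing a valid subgradient of the quantile loss and of the group norms at this restricted solution so that the KKT stationarity equation of the full program holds. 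If, in addition, every inactive group $j \notin S$ obtains a group-lasso subgradient of Euclidean norm strictly less than $\lambda_1$, then strict convexity of the penalized problem (ensured by the ridge term $\lambda_2\|\theta\|_2^2$) forces $(\hat b_{\ell k},\hat\theta_{\ell k})$ to be the unique minimizer, and the estimated support equals $S$. Taking a union bound across $(j,k,\ell)$ then yields $\hat E = E^*$.

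Next I would control the restricted estimator. Because the pinball loss is nondifferentiable, restricted strong convexity must come from the expected loss rather than the sample loss directly; using the smoothness of the conditional density of $y_k \mid y_{\neg k}$ bundled into assumptions A1--A4, I would lower-bound the population Hessian on the span of active basis functions and transfer this bound to the empirical criterion via a local concentration argument (exactly the device in \cite{fan2014adaptive}, adapted here to the group structure and boosted by the quadratic $\lambda_2$ term which provides additional curvature regardless of the data). Pairing this with a bound on the empirical score $\tfrac1n\sum_i \Phi_{iS}^\T \hat\psi_i$, where $\hat\psi_i$ is the subgradient of $\psi_{\alpha_\ell}$ at the fitted residual, yields a rate of the form $\|\hat\theta_{\ell k, S}-\theta^*_{\ell k, S}\|_2 = O_P(\lambda_1 \sqrt{|S|m}/n)$. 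Combined with a minimum-signal (beta-min) condition that I expect to be part of A1--A4, this guarantees that every truly nonzero group remains nonzero in $\hat\theta_{\ell k,S}$.

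The main obstacle is the strict dual feasibility check for the inactive groups. For each $j \notin S$ one must show $\tfrac1n\|\Phi_j^\T \hat\psi\|_2 < \lambda_1/n$, after subtracting the ridge contribution $\lambda_2 \hat\theta_{\ell k j}$ projected onto the inactive block; the requirement $\lambda_2 = o(n^{41/42}/\theta^*_{\max})$ is exactly what ensures that this ridge piece is negligible compared to the threshold. Since $\hat\psi_i$ is bounded in $[\alpha_\ell-1,\alpha_\ell]$ entrywise, the population score $\Phi_j^\T \psi^*$ is subgaussian with variance $\asymp n$, giving the desired bound at the truth; the difficulty is replacing $\psi^*$ by $\hat\psi$ uniformly over the local neighborhood of size $O(\lambda_1\sqrt{|S|m}/n)$. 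I would handle this via a bracketing entropy argument on the class of residuals $\{y_k - b - \theta^\T\phi\}$ intersected with the local ball, which is the step that produces the $\log^3 n$ factor in the prescription for $\lambda_1$ and, after accounting for the $d^2mr/\delta$ factor in the logarithm coming from the union bound over $(j,k,\ell)$ and over the $m$-dimensional basis, caps the allowed ambient dimension at $\log d = O(n^{2/21})$.

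Finally, I would assemble the pieces: on the high-probability event that strict dual feasibility holds for every inactive $j$ and the restricted estimator has the correct support, PDW delivers $\hat E = E^*$ simultaneously for all nodes, with failure probability at most $\delta$. I expect most of the proof length to be absorbed by the uniform empirical-process bound of the previous paragraph; the PDW skeleton and the restricted-estimator analysis are routine adaptations of standard group-lasso arguments, whereas marrying them to a non-smooth, heteroskedasticity-tolerant quantile loss under only the weak moment assumptions of A1--A4 is where the real work lies.
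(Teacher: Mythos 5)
Your proposal follows essentially the same route as the paper: a primal--dual witness built from the restricted problem on the true support, strict dual feasibility for the inactive groups verified by splitting the score into a term at the truth (Hoeffding), a deterministic shift controlled by irrepresentability and distributional smoothness, and an empirical-process remainder handled by the entropy arguments of \cite{fan2014adaptive}, with the restricted-estimator rate supplying the local ball and a union bound over $(k,\ell)$ finishing the argument. Two small notes: in the paper the condition $\lambda_2 = o(n^{41/42}/\theta^*_{\max})$ is consumed inside the estimation lemma (to make the ridge terms $o(1)$ in the basic inequality), not in the inactive-block KKT equation, where the witness sets $\hat\theta_{\ell k j}=0$ so no ridge contribution appears; and the beta-min condition you anticipate is in fact absent from A1--A4 --- the paper's argument as written certifies only that the witness solves the full problem, so your instinct that a minimum-signal assumption is needed to rule out false negatives and get $\hat E = E^*$ exactly is, if anything, more careful than the source.
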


The theorem shows that the nonzero pattern in the \mqgm~estimate identifies, with high probability, the underlying conditional independencies. But to be clear, we emphasize that the \mqgm~estimate is {\it not} an estimate of the inverse covariance matrix itself (this is also the case with neighborhood regression, \spacejam~of \cite{voorman2014}, and many other methods for learning graphical models).

\section{Computational approach}
\label{sec:algos}
By design, the \mqgm~problem in \eqref{eq:mqgm_theta} separates into $d$ 
subproblems, across $k=1,\ldots,d$ (it therefore suffices to consider only a
single subproblem, so we omit notational dependence on $k$ for auxiliary variables). While these subproblems are
challenging for off-the-shelf solvers (even for only moderately-sized
graphs), the key terms here all admit efficient
{\it proximal operators} \citep{parikh2013proximal}, which makes
operator splitting methods like the alternating direction method of
multipliers \citep{boyd2011distributed} a natural choice.  As an illustration, we consider the non-crossing constraints in the basis regression model below.
Reparameterizing so that we may apply ADMM:
\begin{equation}
\begin{array}{ll}
\minimizewrt{\Theta_k,B_k,V,W,Z} & 
\psi_{\mathcal{A}}(Z) + \lambda_1 \sum_{\ell=1}^r
\sum_{j=1}^d \|W_{\ell j}\|_2 + \frac{\lambda_2}{2} \|W\|_F^2 + I_+(V D^T) \smallskip \\ 
\subjectto & V = \mathbf{1} B_k^T + \Phi \Theta_k , \; W =
\Theta_k, \; Z = Y_k \mathbf{1}^T - \mathbf{1} B_k^T 
- \Phi \Theta_k,
\end{array}
\label{eq:admmprob}
\end{equation}
where for brevity \smash{$\psi_{\mathcal{A}}(A) = \sum_{\ell=1}^{r} \sum_{j=1}^d \psi_{\alpha_\ell}(A_{\ell j})$}, and $I_+(\cdot)$ is the indicator function of the space of elementwise nonnegative matrices. The augmented Lagrangian associated with \eqref{eq:admmprob} is:  
\begin{equation}
\begin{aligned}
&L_\rho(\Theta_k, B_k, V, W, Z, U_V, U_W, U_Z) = \psi_{\mathcal{A}}(Z) +
\lambda_1 \sum_{\ell=1}^r \sum_{j=1}^d \|W_{\ell j}\|_2  + 
\frac{\lambda_2}{2} \|W\|_F^2 +  I_+(VD^T) \\ 
&\;+\, \frac{\rho}{2} \Big(\|\mathbf{1} B_k^T +
\Phi \Theta_k - V + U_V\|_F^2 + 
\|\Theta_k - W + U_W\|_F^2 + \|Y_k \mathbf{1}^T - \mathbf{1} B_k^T     
- \Phi \Theta_k -Z + U_Z\|_F^2 \Big),
\end{aligned}
\label{eq:auglag} 
\end{equation}
where $\rho > 0$ is the augmented Lagrangian parameter, and
$U_V,U_W,U_Z$ are dual variables corresponding to the
equality constraints on $V,W,Z$, respectively.  Minimizing \eqref{eq:auglag} over $V$ yields:
\begin{equation}
\label{eq:V}
V \leftarrow P_{\textrm{iso}}\left(\mathbf{1} B_k^T + 
\Phi \Theta_k + U_V\right),
\end{equation}
where $P_{\textrm{iso}}(\cdot)$ denotes the row-wise projection operator onto the isotonic cone (the space of componentwise nondecreasing vectors), an $O(nr)$ operation here \citep{nickdp}.
Minimizing \eqref{eq:auglag} over $W_{\ell j}$ yields the update: 
\begin{equation}
\label{eq:W}
W_{\ell j} \leftarrow \frac{(\Theta_k)_{\ell j} + (U_W)_{\ell j}}{1+\lambda_2/\rho} \bigg(1-\frac{\lambda_1/\rho}{\|(\Theta_k)_{\ell j} + (U_W)_{\ell j}\|_2} \bigg)_+,
\end{equation}
where $(\cdot)_+$ is the positive part operator. This
can be seen by deriving the proximal operator of the function 
$f(x)=\lambda_1 \|x\|_2+(\lambda_2/2)\|x\|_2^2$. 
Minimizing \eqref{eq:auglag} over $Z$ yields the update:
\begin{equation}
\label{eq:Z}
Z \leftarrow \prox_{(1/\rho)\psi_{\mathcal{A}}}
(Y_k \mathbf{1}^T - \mathbf{1} b_k^T     
- \Phi \Theta_k + U_Z),
\end{equation}
where $\prox_f(\cdot)$ denotes the proximal operator of
a function $f$.  For the multiple quantile loss function 
\smash{$\psi_{\mathcal{A}}$}, this is
a kind of generalized soft-thresholding. The proof is given in
the supplement. 
\begin{lemma}
\label{lem:prox}
Let $P_+(\cdot)$ and $P_-(\cdot)$ be the elementwise positive and negative part operators, respectively,
and let $a = (\alpha_1,\ldots,\alpha_r)$. Then \smash{$\prox_{t\psi_{\mathcal{A}}}(A) = P_+(A - t \mathbf{1} a^T) + P_-(A - t \mathbf{1} a^T).$}
\end{lemma}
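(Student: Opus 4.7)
The plan is to exploit the fact that $\psi_{\mathcal{A}}$ decomposes entrywise, so that the proximal operator likewise decouples. Concretely, $\psi_{\mathcal{A}}(A) = \sum_{i,\ell} \psi_{\alpha_\ell}(A_{i\ell})$ is a sum of functions that each depend on a single entry of $A$, and a standard fact about separable sums (\eg, \citep{parikh2013proximal}) gives $[\prox_{t\psi_{\mathcal{A}}}(A)]_{i\ell} = \prox_{t\psi_{\alpha_\ell}}(A_{i\ell})$. The entire claim therefore reduces to deriving the scalar proximal operator $\prox_{t\psi_\alpha}(x)$ for a single quantile level $\alpha \in [0,1]$ and then reassembling into matrix form, noting that the column index $\ell$ of $A$ selects $\alpha_\ell$, which is exactly what the broadcast $\mathbf{1} a^T$ encodes.

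For the scalar computation I would write $\psi_\alpha$ in its piecewise-linear form $\psi_\alpha(z) = \max(\alpha z, (\alpha - 1) z)$ and record its subdifferential: $\partial \psi_\alpha(z) = \{\alpha\}$ for $z>0$, $\partial \psi_\alpha(z) = \{\alpha - 1\}$ for $z<0$, and $\partial \psi_\alpha(0) = [\alpha - 1, \alpha]$. The proximal objective $t\psi_\alpha(z) + \tfrac12 (z-x)^2$ is strictly convex and coercive, so the stationarity condition $x \in z^\star + t\,\partial \psi_\alpha(z^\star)$ has a unique solution. Splitting into cases gives $z^\star = x - t\alpha$ when $x > t\alpha$; $z^\star = x - t(\alpha - 1)$ when $x < t(\alpha - 1)$; and $z^\star = 0$ on the intermediate interval $t(\alpha - 1) \leq x \leq t\alpha$.

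The last step is to recognize this three-branch formula as a sum of a positive part and a negative part of shifted arguments, i.e.\ as $P_+(x - t\alpha) + P_-\!\bigl(x - t(\alpha-1)\bigr)$, so that the value on each of the three regions is picked out by exactly one of the two terms while the other vanishes. Vectorizing, the shift $\alpha$ in column $\ell$ becomes the row vector $a^T$ broadcast via $\mathbf{1} a^T$, and one recovers the stated closed form. There is no real analytic obstacle here; the only care required is the bookkeeping at the transition points $x = t\alpha$ and $x = t(\alpha - 1)$, to confirm that the $P_+$ and $P_-$ pieces cover their respective half-lines without overlap and jointly reproduce the optimizer from the subdifferential analysis.
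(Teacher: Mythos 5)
Your proof is correct and follows essentially the same route as the paper's: reduce to the scalar case by separability of $\psi_{\mathcal{A}}$, split on the sign of the minimizer to obtain $x - t\alpha$ when $x > t\alpha$, $x - t(\alpha-1)$ when $x < t(\alpha-1)$, and $0$ in between, then reassemble entrywise. One remark: the expression you derive, $P_+(x - t\alpha) + P_-\bigl(x - t(\alpha-1)\bigr)$, is the correct closed form, whereas the lemma as printed applies both $P_+$ and $P_-$ to the \emph{same} argument $A - t\mathbf{1}a^T$, which would collapse to $A - t\mathbf{1}a^T$ itself (since $P_+(z)+P_-(z)=z$); this is evidently a typo in the statement, and the paper's own proof arrives at exactly the three-case formula you do.
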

Finally, differentiation in \eqref{eq:auglag} with respect to $B_k$ and $\Theta_k$ yields the simultaneous updates:
\begin{equation}
\begin{split}
\left [\begin{array}{c}\Theta_k \\ B_k^T \end{array} \right] 
& \leftarrow 
\frac{1}{2}
\left[
\begin{array}{cc}
\Phi^T \Phi + \frac{1}{2} I & \Phi^T \ones \\
\ones^T \Phi & \ones^T \ones
\end{array}
\right]^{-1} \left (
\left [ I \; \mathbf{0} \right ]^T(W - U_W) +
\left [ \Phi \; \mathbf{1}  \right ]^T(Y_k 
\mathbf{1}^T - Z + U_Z + V - U_V) \right ). 
\end{split}
\label{eq:theta}
\end{equation}

A complete description of our ADMM algorithm for solving the \mqgm~problem is in the supplement.

\paragraph{Gibbs sampling}
Having fit the conditionals $y_k | y_{\neg k}$,
$k=1,\ldots d$, we may want to make predictions 
or extract joint distributions over subsets of variables.  As
discussed in Section \ref{sec:joint}, there is no general analytic
form for these joint distributions, but the pseudolikelihood
approximation  
underlying the \mqgm~suggests a natural Gibbs sampler.  A careful
implementation that respects the additive model in \eqref{eq:invcdf}
yields a highly efficient Gibbs sampler, especially for CRFs; the 
supplement gives details.

\section{Empirical examples}
\label{sec:exps}

\subsection{Synthetic data}
\label{sec:synth}
We consider synthetic examples, comparing to neighborhood selection (\mb), the graphical lasso (\glasso), \spacejam~\cite{voorman2014}, the \nonparas~\cite{liu2012high}, \tiger~\cite{liu2012tiger}, and neighborhood selection using the absolute loss (\laplace).

\paragraph{Ring example}
As a simple but telling example, we drew $n=400$ samples from a ``ring'' distribution in $d=4$ dimensions.  We used $m=10$ expanded features and $r=20$ quantile levels. Data were generated by first drawing a random angle $\nu \sim \textrm{Uniform}(0,1)$, then a random radius $R \sim \mathcal{N}(0,0.1)$, and finally computing the coordinates $y_1 = R \cos \nu$, $y_2 = R \sin \nu$ and $y_3, y_4 \sim \mathcal{N}(0,1)$, \ie, $y_1$ and $y_2$ are the only dependent variables here.
Figure \ref{fig:ring:dist:left} plots samples (blue) of the coordinates $(y_1,y_2)$ as well as new samples (red) from the \mqgm, \mb, \glasso, and \spacejam~fitted to these same (blue) samples; the samples from the \mqgm, obtained by using our Gibbs sampler (see the supplement for further details), appear to closely match the samples from the underlying ring.
\begin{figure}
\centering
\includegraphics[width=0.3\textwidth]{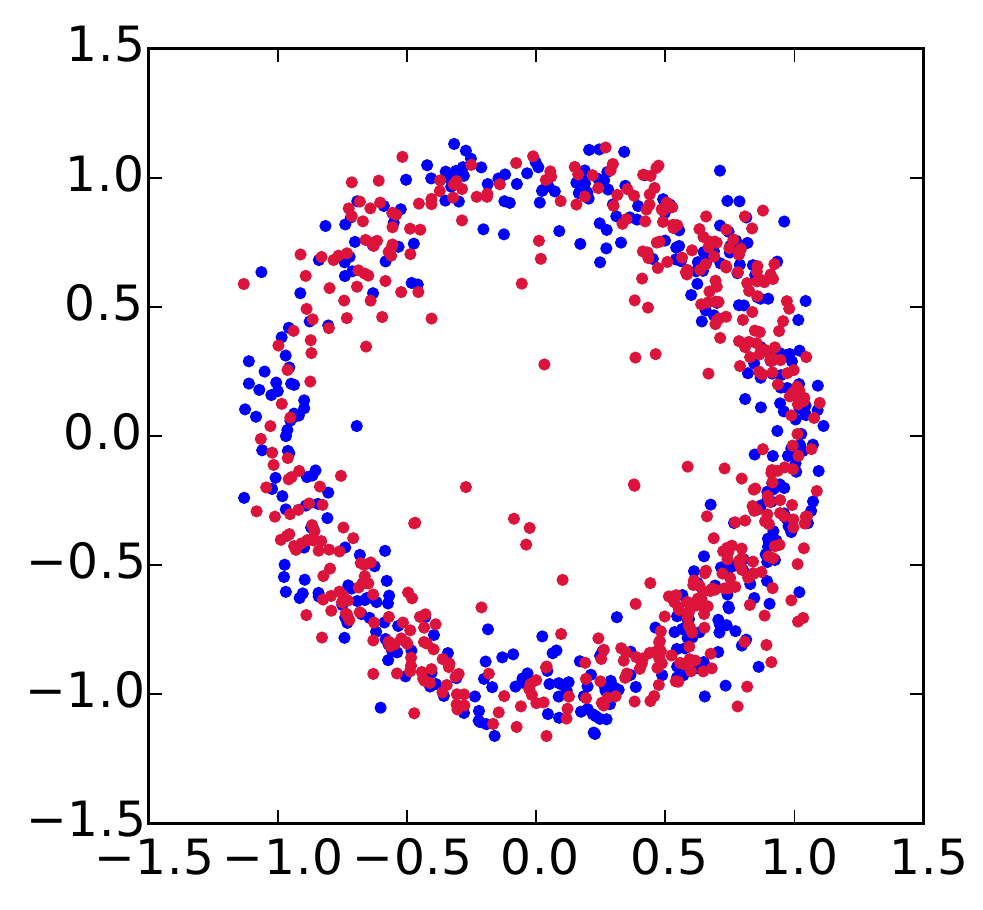}
\hfill
\includegraphics[width=0.3\textwidth]{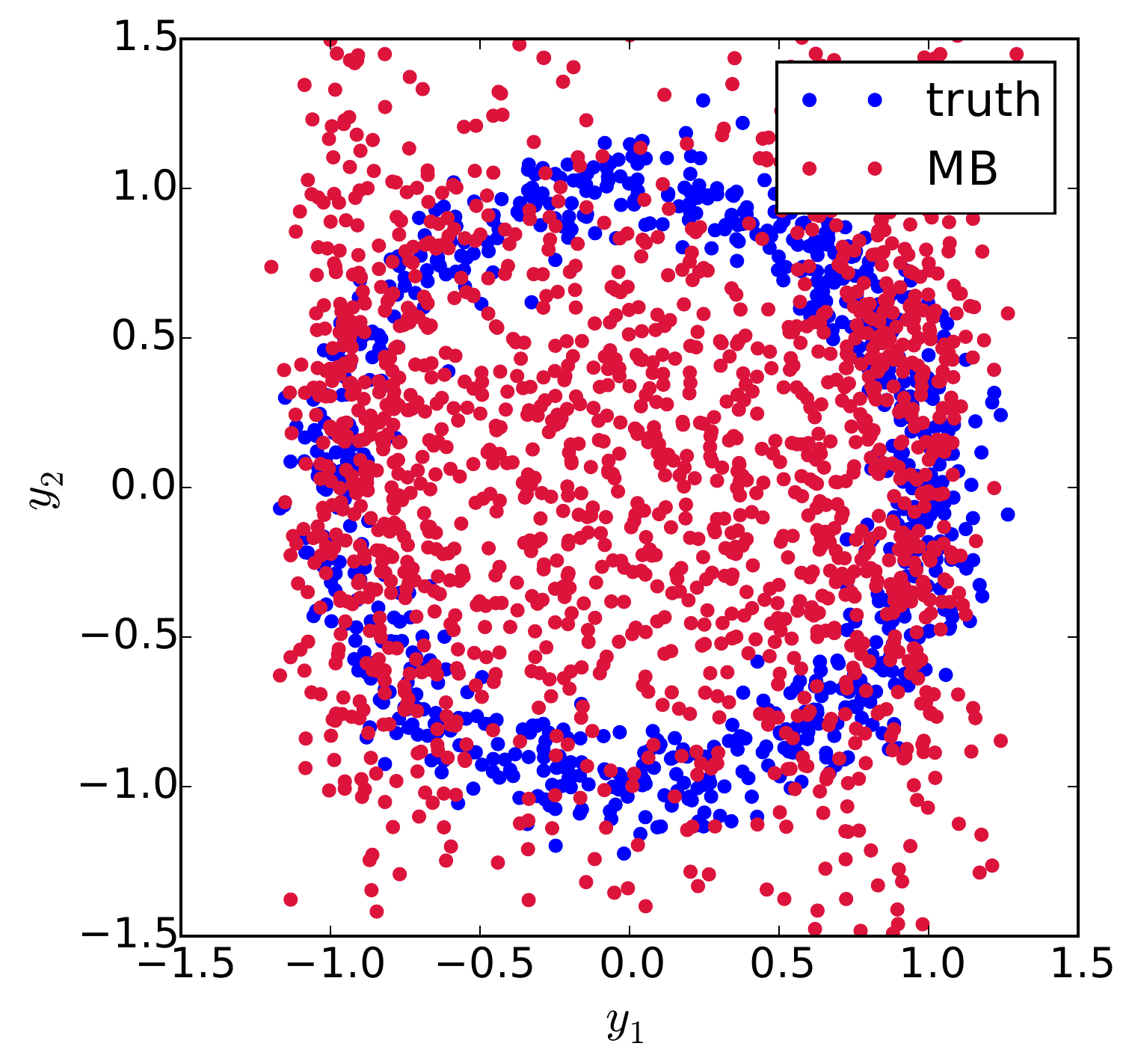}
\hfill
\includegraphics[width=0.3\textwidth]{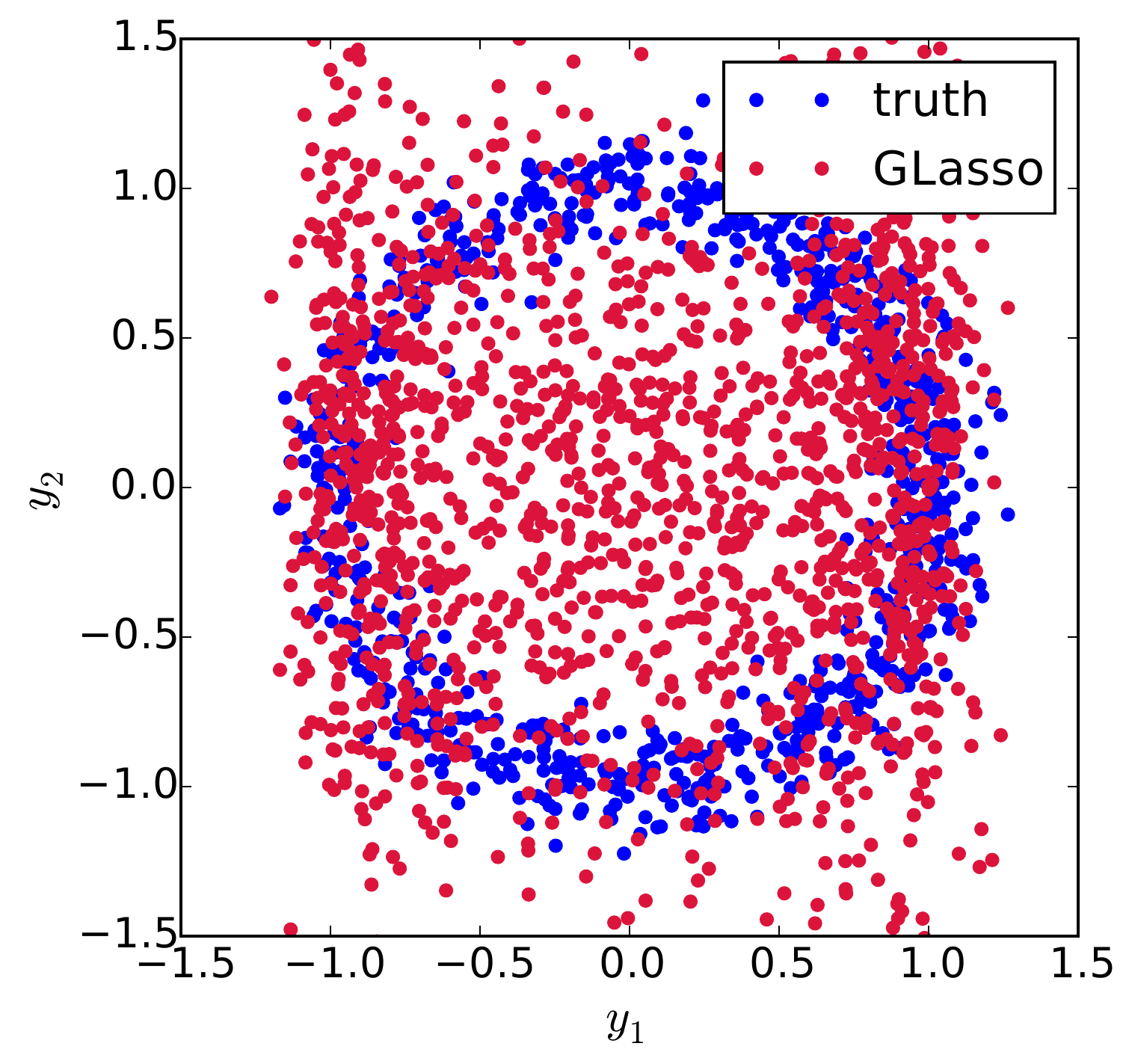} \\
\includegraphics[width=0.3\textwidth]{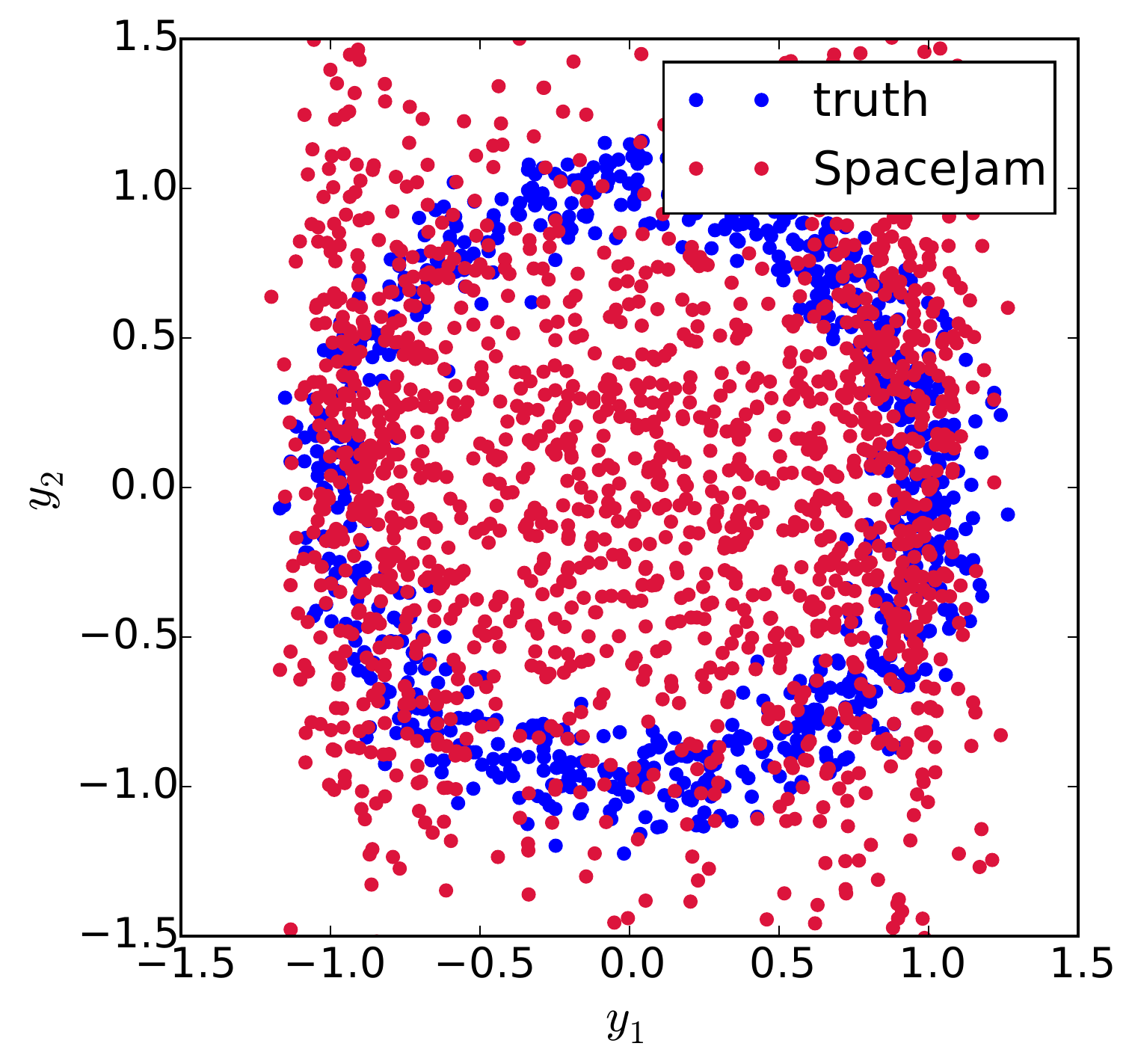}
\hfill
\includegraphics[width=0.3\textwidth]{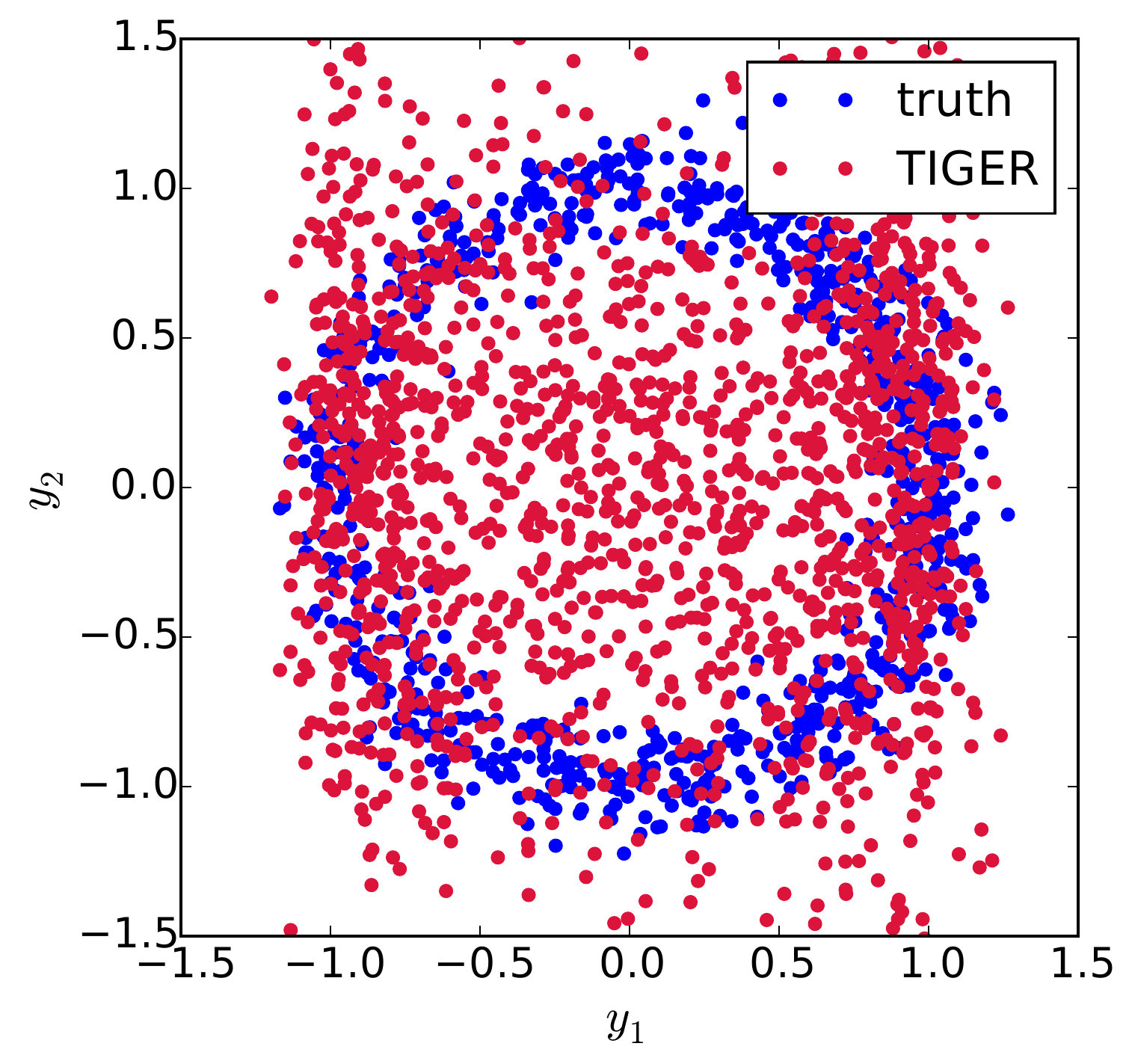}
\hfill
\includegraphics[width=0.3\textwidth]{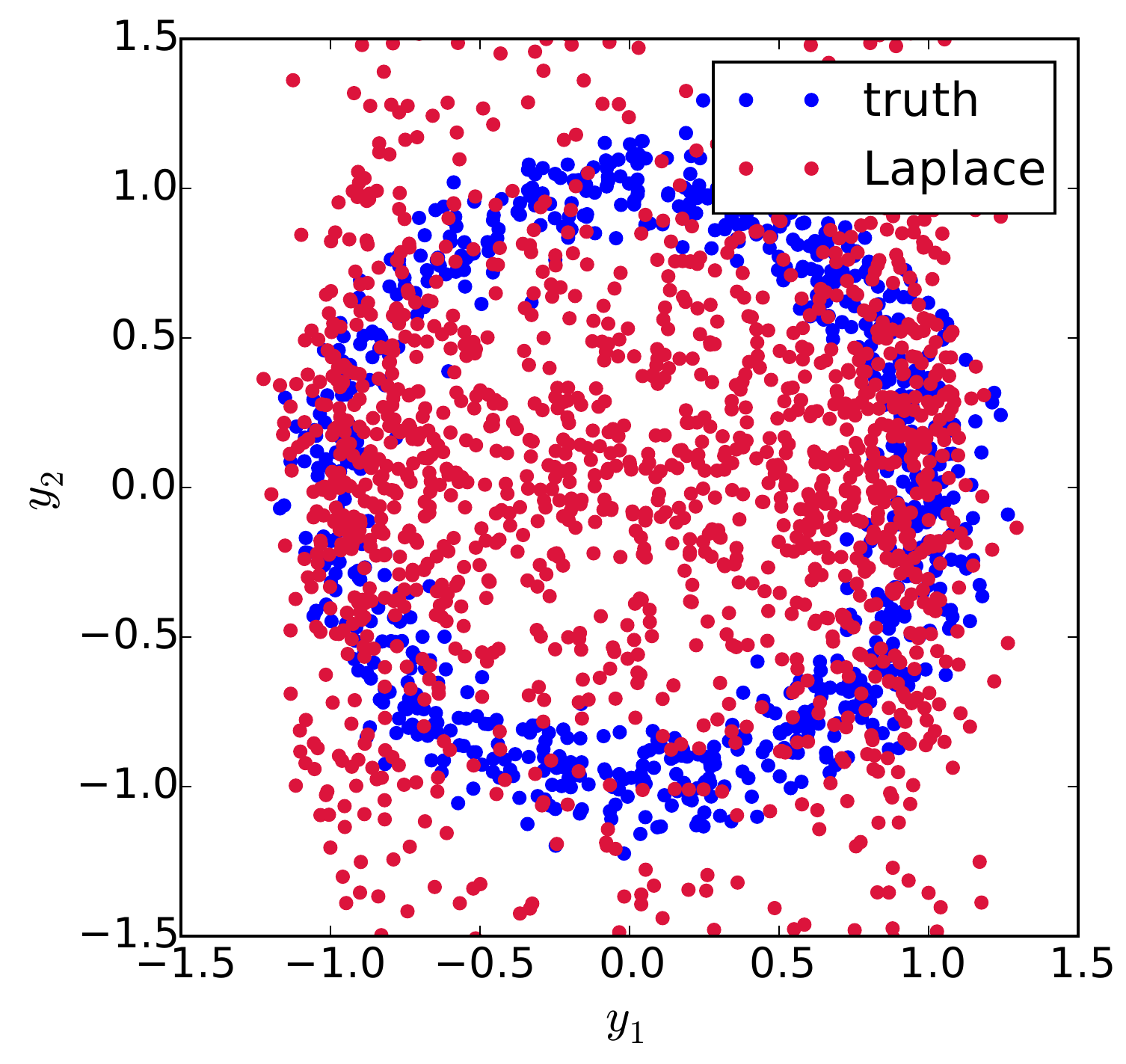}
\caption{Data from the ring distribution (blue) as well as new samples (red) from the \mqgm, \mb, \glasso, \spacejam, \tiger, and \laplace~fitted to the same (blue) data; the samples from the \mqgm~were obtained by using our Gibbs sampler.}
\label{fig:ring:dist:left}
\end{figure}

Figure \ref{fig:ring:dist:right} shows the conditional independencies recovered by the \mqgm, \mb, \glasso, \spacejam, the \nonparas, \tiger, and \laplace, when run on the ring data. We visualize these independencies by forming a $d \times d$ matrix with the cell $(j,k)$ set to white if $j,k$ are conditionally independent given the others, and black otherwise.  Across a range of tuning parameters for each method,
the \mqgm~is the only one that successfully recovers the underlying conditional dependencies.
\begin{figure}[t!]
\centering
\includegraphics[width=0.6\textwidth]{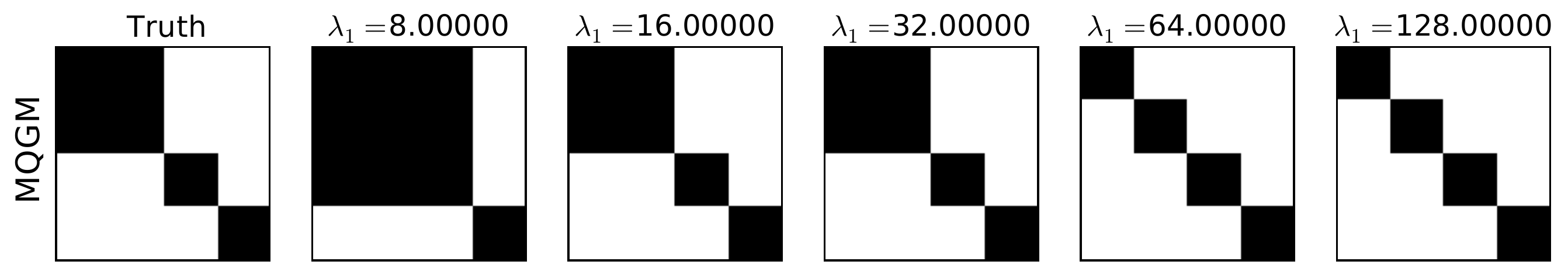} \\
\includegraphics[width=0.6\textwidth]{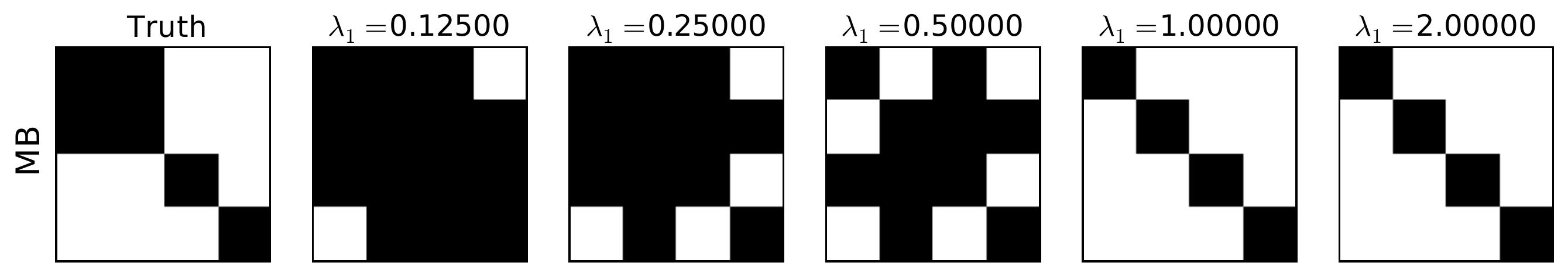} \\
\includegraphics[width=0.6\textwidth]{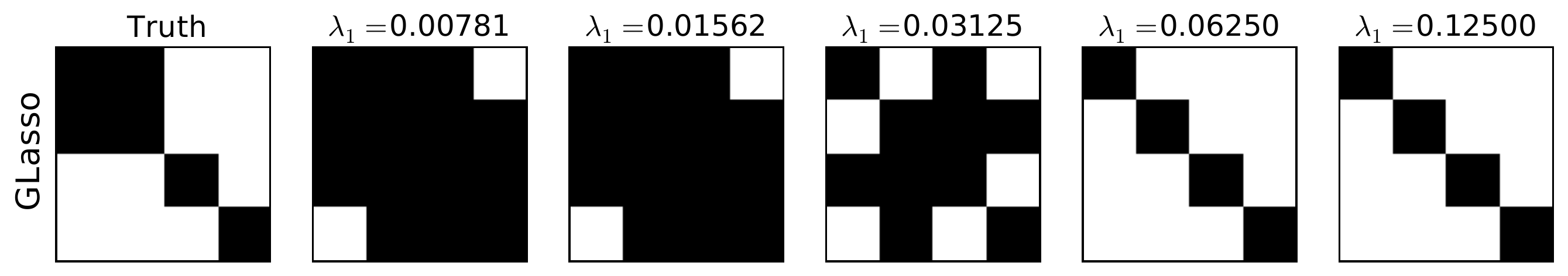} \\
\includegraphics[width=0.6\textwidth]{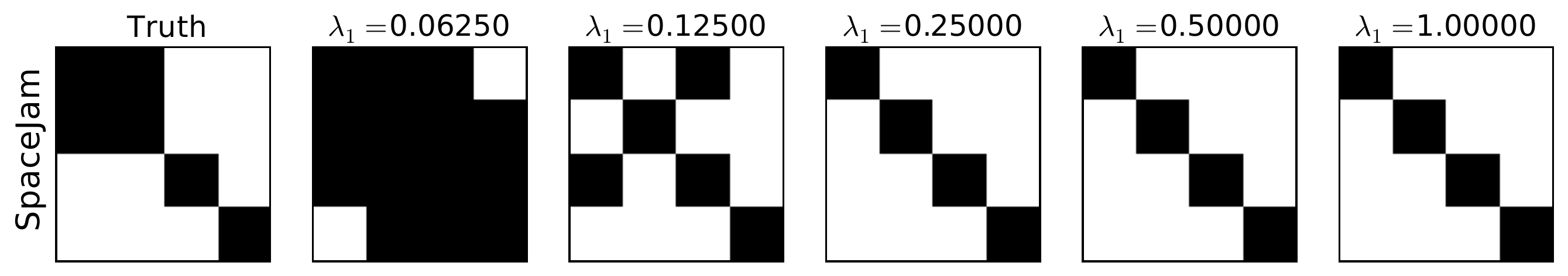} \\
\includegraphics[width=0.6\textwidth]{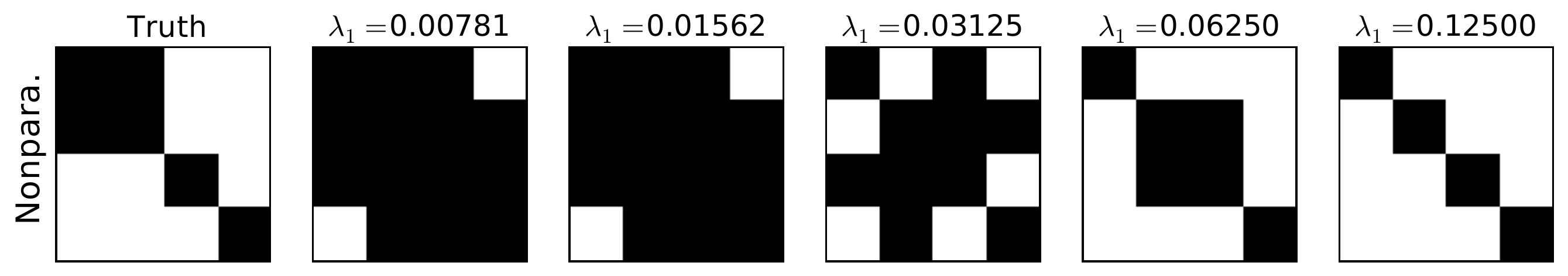} \\
\includegraphics[width=0.6\textwidth]{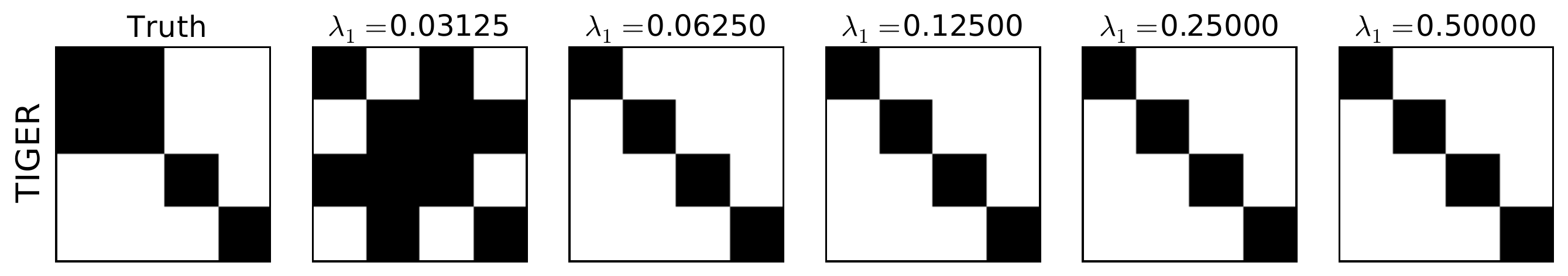} \\
\includegraphics[width=0.6\textwidth]{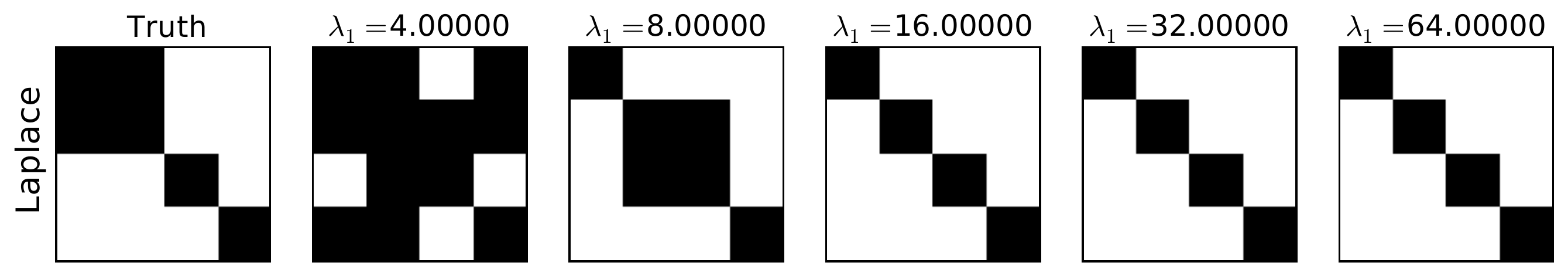}
\caption{Conditional independencies recovered by the \mqgm, \mb, \glasso, \spacejam, the \nonparas, \tiger, and \laplace~on the ring data; black means conditional dependence.  The \mqgm~is the only method that successfully recovers the underlying conditional dependencies.}
\label{fig:ring:dist:right}
\end{figure}

Table \ref{tab:calib} presents an evaluation of the conditional CDFs given by the \mqgm, \mb, \glasso, \spacejam, \tiger, and \laplace~when run on the ring data.  For each method, we averaged the total variation distances and Kolmogorov-Smirnoff statistics between the fitted and true conditional CDFs across all variables, and then reported the best values obtained across a range of tuning parameters (further details are given in the supplement); the \mqgm~outperforms all its competitors, in both metrics.
\begin{table}[]
\caption{Total variation (TV) distance and Kolmogorov-Smirnoff (KS) statistic values for the \mqgm, \mb, \glasso, \spacejam, \tiger, and \laplace~on the ring data; lower is better, best in {\bf bold}.}
\label{tab:calib}
\centering
\begin{tabular}{lcc}
& TV & KS \\
\mqgm & \textbf{20.873} & \textbf{0.760}\\
\mb & 92.298 & 1.856\\
\glasso & 92.479 & 1.768\\
\spacejam & 91.568 & 1.697\\
\tiger & 88.284 & 1.450\\
\laplace & 127.406 & 1.768\\

\end{tabular}
\end{table}

\paragraph{Larger examples}
To investigate performance at larger scales, we drew $n \in \{50, 100, 300\}$ samples from a multivariate normal and Student $t$-distribution (with 3 degrees of freedom), both in $d=100$ dimensions, and parameterized by a random, sparse, diagonally dominant $d \times d$ inverse covariance matrix, following the procedure in \cite{peng2009partial,khare2014convex,NIPS2014_5576,alnur}. 
Over the same set of sample sizes, with $d=100$, we also considered an autoregressive setup in which we drew samples of pairs of adjacent variables from the ring distribution.
In all three data settings (normal, $t$, and autoregressive), we used $m=10$ and $r=20$ for the \mqgm. To summarize the performances, we considered a range of tuning parameters for each method, computed corresponding false and true positive rates (in detecting conditional dependencies),
and then computed the corresponding area under the curve (AUC), following, \eg, \cite{peng2009partial,khare2014convex,NIPS2014_5576,alnur}. 
Table \ref{tab:synth} reports the median
AUCs (across 50 trials) for all three of these examples; the \mqgm~outperforms all other methods on the autoregressive example, as well as on the small-$n$ normal and Student $t$ examples.
\begin{table}
\centering
\caption{AUC values for the \mqgm, \mb, \glasso, \spacejam, the \nonparas, \tiger, and \laplace~for the normal, $t$, and autoregressive data settings; higher is better, best in {\bf bold} (standard errors are $\approx 10^{-4}$ or smaller).}
\label{tab:synth}
\begin{scriptsize}
\begin{tabular}{lcccccccccc}
& \multicolumn{3}{c}{Normal} & \multicolumn{3}{c}{Student $t$} & \multicolumn{3}{c}{Autoregressive} \\
& $n = 50$ & $n = 100$ & $n = 300$ & $n = 50$ & $n = 100$ & $n = 300$ & $n = 50$ & $n = 100$ & $n = 300$ \\
\mqgm & \textbf{0.953} & \textbf{0.976} & 0.988 & \textbf{0.928} & 0.947 & 0.981 & \textbf{0.726} & \textbf{0.754} & \textbf{0.955}\\
\mb & 0.850 & 0.959 & 0.994 & 0.844 & 0.923 & 0.988 & 0.532 & 0.563 & 0.725\\
\glasso & 0.908 & 0.964 & \textbf{0.998} & 0.691 & 0.605 & 0.965 & 0.541 & 0.620 & 0.711\\
\spacejam & 0.889 & 0.968 & 0.997 & 0.893 & \textbf{0.965} & 0.993 & 0.624 & 0.708 & 0.854\\
\nonpara & 0.881 & 0.962 & 0.996 & 0.862 & 0.942 & \textbf{0.998} & 0.545 & 0.590 & 0.612\\
\tiger & 0.732 & 0.921 & 0.996 & 0.420 & 0.873 & 0.989 & 0.503 & 0.518 & 0.718\\
\laplace & 0.803 & 0.931 & 0.989 & 0.800 & 0.876 & 0.991 & 0.530 & 0.554 & 0.758\\

\end{tabular}
\end{scriptsize}
\end{table}

\subsection{Modeling flu epidemics}
\label{sec:flu}
We study $n=937$ weekly flu incidence reports from September 28, 1997 through August 30, 2015, across 10 regions in the United States (see the left panel of Figure \ref{fig:flu:heatmap:left}), obtained from \cite{cdc}.
We considered $d=20$ variables: the first 10 encode the current week's flu incidence (precisely, the percentage of doctor's visits in which flu-like symptoms are presented) in the 10 regions, and the last 10 encode the same but for the prior week.
We set $m = 5$, $r=99$, and also introduced exogenous variables to encode the week numbers, so $p=1$.
Thus, learning the \mqgm~here corresponds to learning the structure of a spatiotemporal graphical model, and reduces to solving 20 multiple quantile regression subproblems, each of dimension $(19 \times 5 + 1) \times 99 = 9504$. All subproblems took about 1 minute on a 6 core 3.3 Ghz Core i7 X980 processor.

The left panel of Figure \ref{fig:flu:heatmap:right} plots the wallclock time (seconds) for solving one subproblem using ADMM versus SCS \citep{o2015conic}, a cone solver that has been advocated as a reasonable choice for a class of problems encapsulating \eqref{eq:mqgm}; ADMM outperforms SCS by roughly two orders of magnitude.  The right panel of Figure \ref{fig:flu:heatmap:left} presents the conditional independencies recovered by the \mqgm. 
Nonzero entries in the upper left $10 \times 10$ submatrix correspond to dependencies between the $y_k$ variables for $k=1,\ldots,10$; \eg, the nonzero (0,2) entry suggests that region 1 and 3's flu reports are dependent.  The lower right $10 \times 10$ submatrix corresponds to the $y_k$ variables for $k=11,\ldots,20$, and the nonzero banded entries suggest that at any region the previous week's flu incidence (naturally) influences the next week's.
The left panel of Figure \ref{fig:flu:heatmap:left} visualizes these relationships by drawing an edge between dependent regions; region 6 is highly connected, suggesting that it is a bellwether for other regions, which is a qualitative observation also made by the CDC.
To draw samples from the fitted distributions, we ran our Gibbs sampler over the year, generating 1000 total samples, making 5 passes over all coordinates between each sample, and with a burn-in period of 100 iterations.  The right panel of Figure \ref{fig:flu:heatmap:right} plots samples from the marginal distribution of the percentages of flu reports at region six (other regions are in the supplement) throughout the year, revealing the heteroskedastic nature of the data; we also see that flu incidence (naturally) increases towards the end of the year.
\begin{figure}
\centering
\hspace{0.1in}
\includegraphics[width=0.48\textwidth]{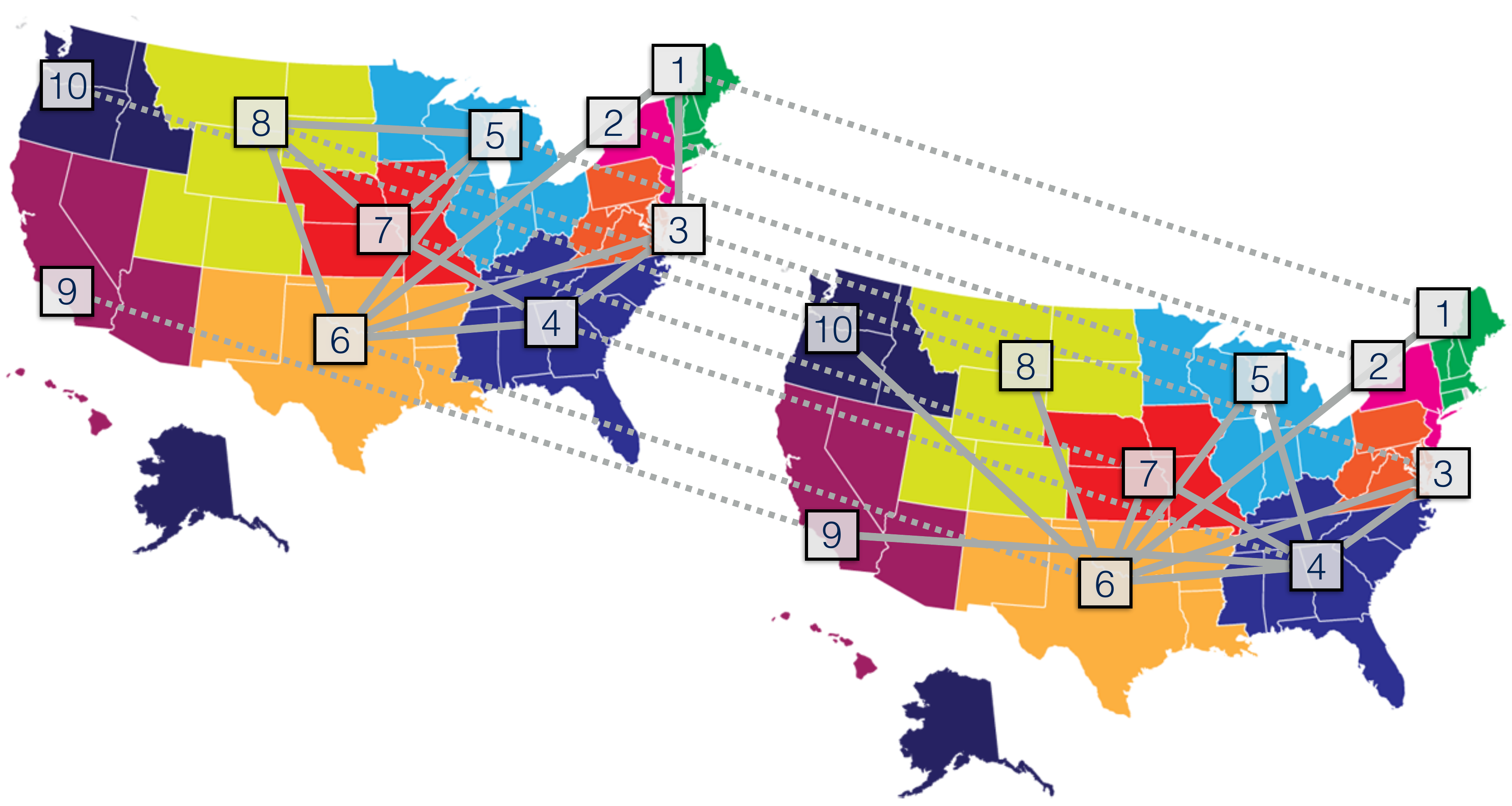}
\hfill
\includegraphics[width=0.3\textwidth]{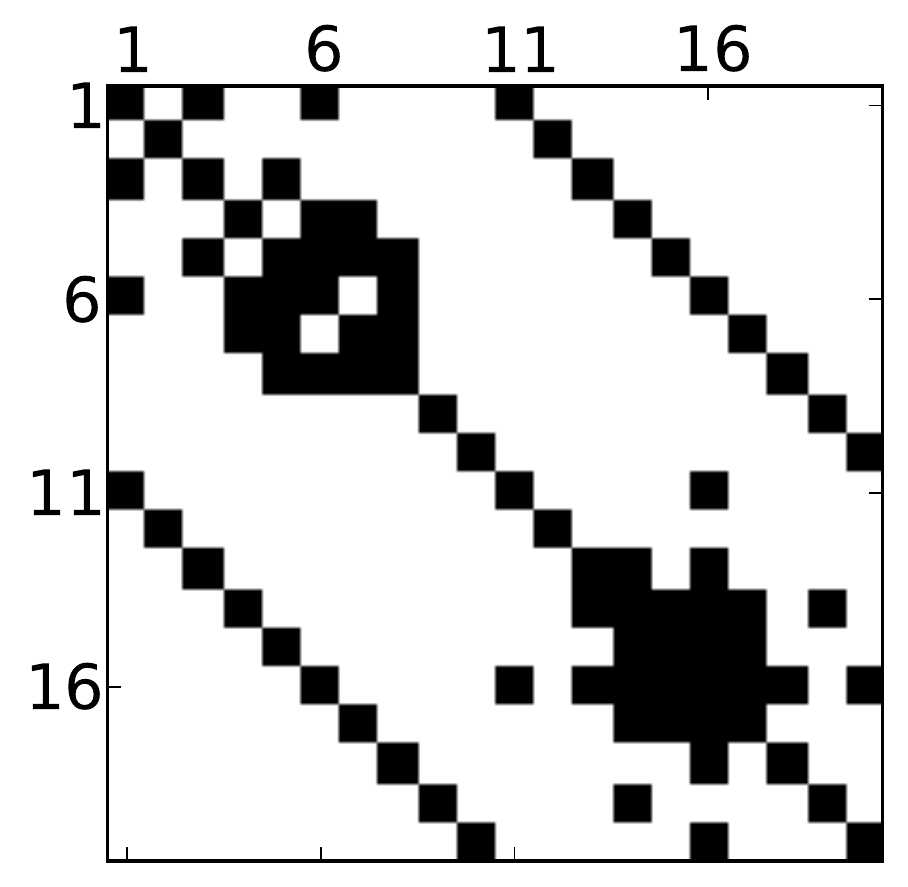}
\hspace{0.2in}
\caption{Conditional dependencies recovered by the \mqgm~on the flu data; each of the first ten cells on the right corresponds to a region of the U.S. on the left, and black means dependence.}
\label{fig:flu:heatmap:left}
\end{figure}
\begin{figure}
\centering
\includegraphics[width=0.48\textwidth]{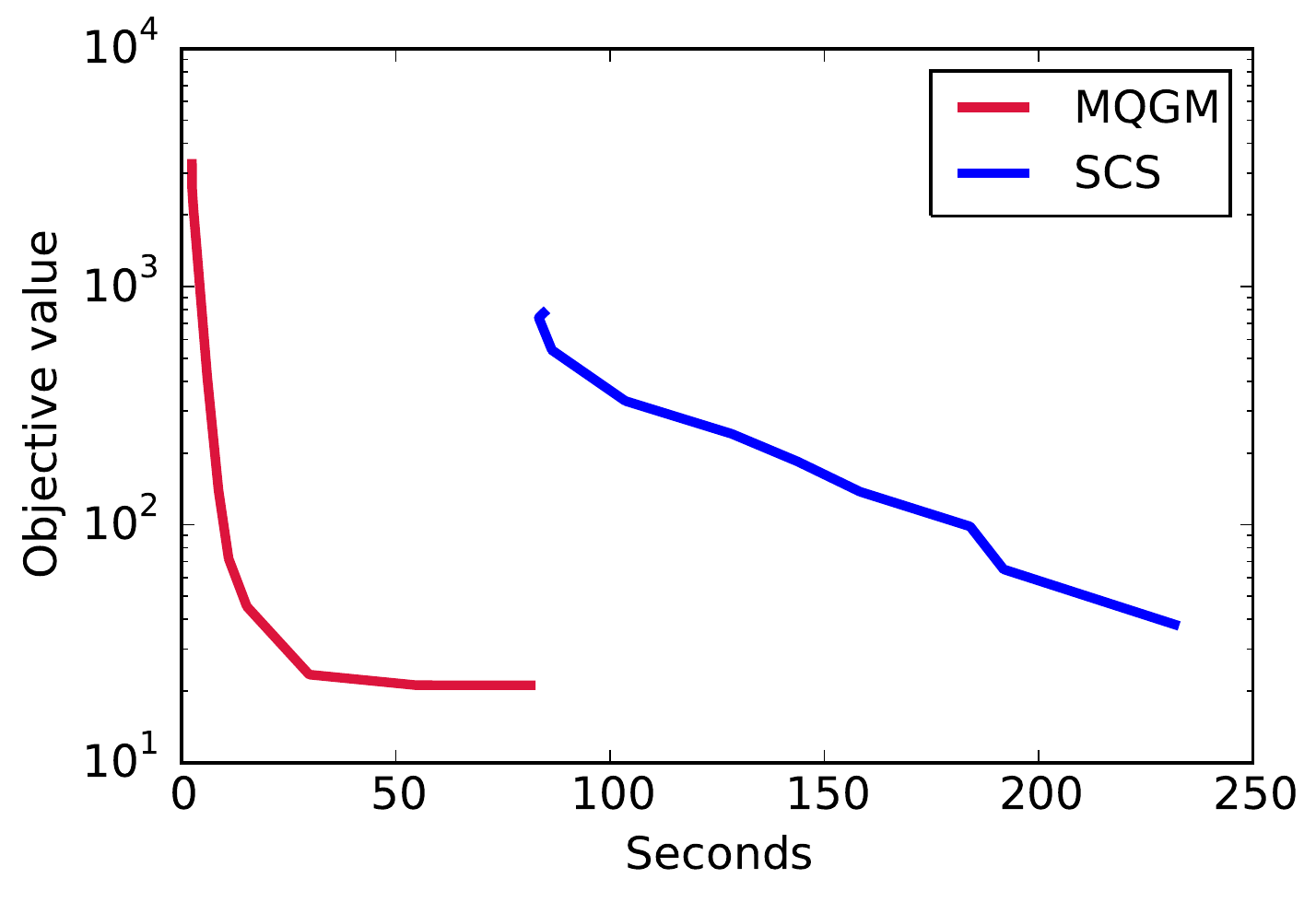}
\hfill
\includegraphics[width=0.44\textwidth]{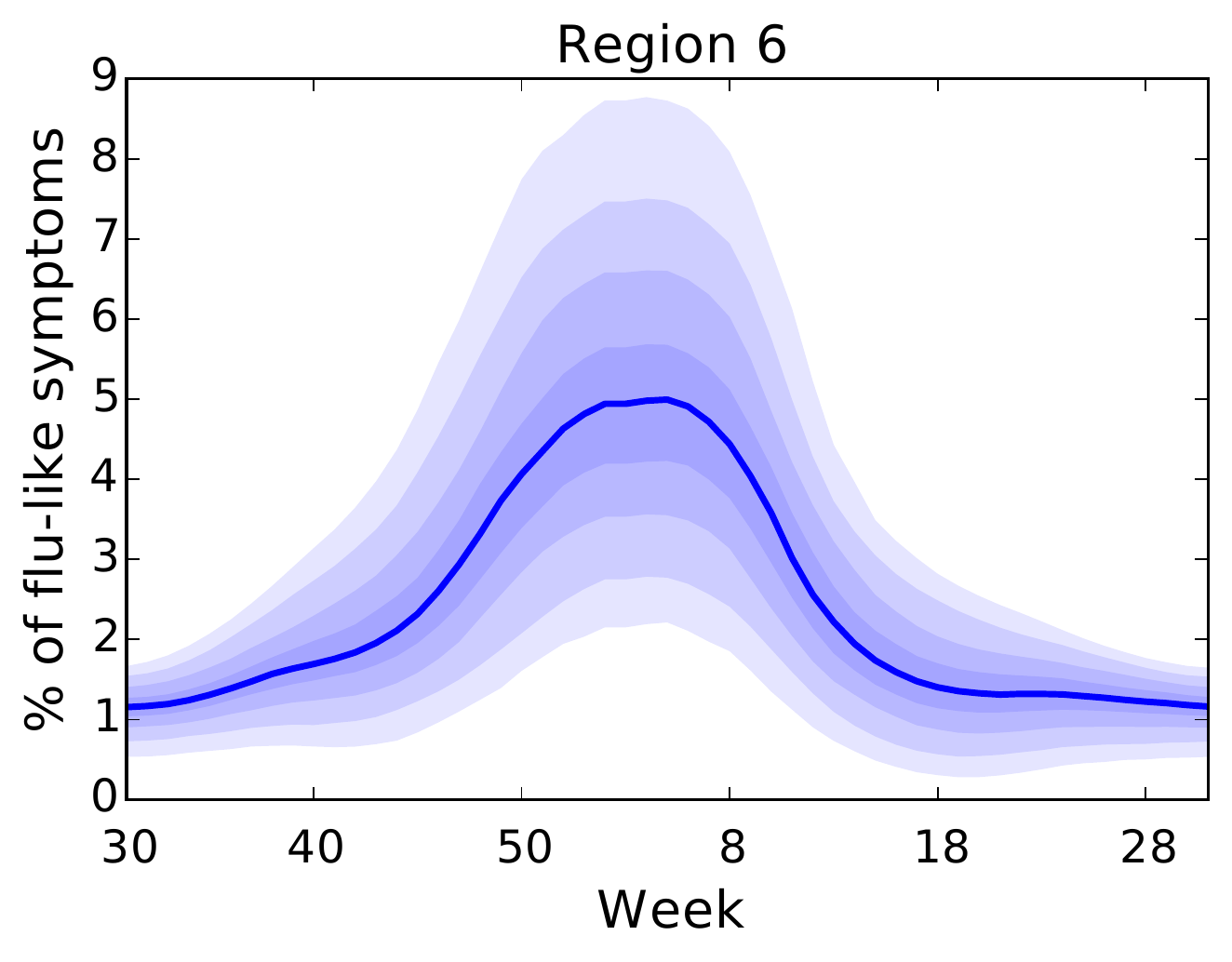}
\caption{Wallclock time (seconds) for solving one subproblem using ADMM versus SCS, on the left. Samples from the fitted marginal  distribution of the weekly flu incidence rates at region six, on the right. Samples at larger quantiles are shaded lighter; the median is in darker blue.} 
\label{fig:flu:heatmap:right}
\end{figure}

Our last example, on wind power data, is presented in the supplement.

\section{Discussion}
\label{sec:disc}

We proposed and studied the Multiple Quantile Graphical Model (\mqgm).  We established theoretical and empirical backing to the claim that the \mqgm~is capable of compactly representing relationships between heteroskedastic, non-Gaussian variables.
We developed efficient algorithms for estimation and sampling in the \mqgm. All in all, we believe that our work represents a step forward in the design of flexible yet tractable graphical models.

\subsubsection*{Acknowledgments}
This work was supported by a Department of Energy Computational Science Graduate Fellowship, under grant number DE-FG02-97ER25308.

\bibliographystyle{plainnat}
\bibliography{refs}

\begin{thebibliography}{50}
\providecommand{\natexlab}[1]{#1}
\providecommand{\url}[1]{\texttt{#1}}
\expandafter\ifx\csname urlstyle\endcsname\relax
  \providecommand{\doi}[1]{doi: #1}\else
  \providecommand{\doi}{doi: \begingroup \urlstyle{rm}\Url}\fi

\bibitem[Ali et~al.(2016)Ali, Khare, Oh, and Rajaratnam]{alnur}
Alnur Ali, Kshitij Khare, Sang-Yun Oh, and Bala Rajaratnam.
\newblock Generalized pseudolikelihood methods for inverse covariance
  estimation.
\newblock Technical report, 2016.
\newblock Available at \url{http://arxiv.org/pdf/1606.00033.pdf}.

\bibitem[Banerjee et~al.(2008)Banerjee, El~Ghaoui, and d'Aspremont]{BEA:08}
Onureena Banerjee, Laurent El~Ghaoui, and Alexandre d'Aspremont.
\newblock Model selection through sparse maximum likelihood estimation for
  multivariate {Gaussian} or binary data.
\newblock \emph{Journal of Machine Learning Research}, 9:\penalty0 485--516,
  2008.

\bibitem[Belloni and Chernozhukov(2011)]{belloni2011}
Alexandre Belloni and Victor Chernozhukov.
\newblock $\ell_1$-penalized quantile regression in high-dimensional sparse
  models.
\newblock \emph{Annals of Statistics}, 39\penalty0 (1):\penalty0 82--130, 2011.

\bibitem[Besag(1974)]{besag1974}
Julian Besag.
\newblock Spatial interaction and the statistical analysis of lattice systems.
\newblock \emph{Journal of the Royal Statistical Society: Series B},
  36\penalty0 (2):\penalty0 192--236, 1974.

\bibitem[Bishop(2006)]{Bishop:2006}
Christopher Bishop.
\newblock \emph{Pattern Recognition and Machine Learning}.
\newblock Springer, 2006.

\bibitem[Boyd et~al.(2011)Boyd, Parikh, Chu, Peleato, and
  Eckstein]{boyd2011distributed}
Stephen Boyd, Neal Parikh, Eric Chu, Borja Peleato, and Jonathan Eckstein.
\newblock Distributed optimization and statistical learning via the alternating
  direction method of multipliers.
\newblock \emph{Foundations and Trends in Machine Learning}, 3\penalty0
  (1):\penalty0 1--122, 2011.

\bibitem[{Centers for Disease Control and Prevention (CDC)}(2015)]{cdc}
{Centers for Disease Control and Prevention (CDC)}.
\newblock Influenza national and regional level graphs and data, August 2015.
\newblock URL \url{http://gis.cdc.gov/grasp/fluview/fluportaldashboard.html}.

\bibitem[Chen et~al.(2015)Chen, Witten, and Shojaie]{chen2015}
Shizhe Chen, Daniela Witten, and Ali Shojaie.
\newblock Selection and estimation for mixed graphical models.
\newblock \emph{Biometrika}, 102\penalty0 (1):\penalty0 47--64, 2015.

\bibitem[Dempster(1972)]{dempster1972covariance}
Arthur Dempster.
\newblock Covariance selection.
\newblock \emph{Biometrics}, 28\penalty0 (1):\penalty0 157--175, 1972.

\bibitem[Fan et~al.(2014)Fan, Fan, and Barut]{fan2014adaptive}
Jianqing Fan, Yingying Fan, and Emre Barut.
\newblock Adaptive robust variable selection.
\newblock \emph{Annals of Statistics}, 42\penalty0 (1):\penalty0 324--351,
  2014.

\bibitem[Finegold and Drton(2011)]{finegold2011}
Michael Finegold and Mathias Drton.
\newblock Robust graphical modeling of gene networks using classical and
  alternative t-distributions.
\newblock \emph{Annals of Applied Statistics}, 5\penalty0 (2A):\penalty0
  1057--1080, 2011.

\bibitem[Friedman et~al.(2008)Friedman, Hastie, and
  Tibshirani]{friedman2008sparse}
Jerome Friedman, Trevor Hastie, and Robert Tibshirani.
\newblock Sparse inverse covariance estimation with the graphical lasso.
\newblock \emph{Biostatistics}, 9\penalty0 (3):\penalty0 432--441, 2008.

\bibitem[Friedman et~al.(2010)Friedman, Hastie, and
  Tibshirani]{friedman2010applications}
Jerome Friedman, Trevor Hastie, and Robert Tibshirani.
\newblock Applications of the lasso and grouped lasso to the estimation of
  sparse graphical models.
\newblock Technical report, 2010.
\newblock Available at \url{http://statweb.stanford.edu/~tibs/ftp/ggraph.pdf}.

\bibitem[Gyrofi et~al.(2002)Gyrofi, Kohler, Krzyzak, and Walk]{gyorfi}
Laszlo Gyrofi, Michael Kohler, Adam Krzyzak, and Harro Walk.
\newblock \emph{A Distribution-Free Theory of Nonparametric Regression}.
\newblock 2002.

\bibitem[Heckerman et~al.(2000)Heckerman, Chickering, Meek, Rounthwaite, and
  Kadie]{heckerman2000dependency}
David Heckerman, David~Maxwell Chickering, David Meek, Robert Rounthwaite, and
  Carl Kadie.
\newblock Dependency networks for inference, collaborative filtering, and data
  visualization.
\newblock \emph{Journal of Machine Learning Research}, 1:\penalty0 49--75,
  2000.

\bibitem[H{\"o}fling and Tibshirani(2009)]{hoefling2009}
Holger H{\"o}fling and Robert Tibshirani.
\newblock Estimation of sparse binary pairwise {Markov} networks using
  pseudo-likelihoods.
\newblock \emph{Journal of Machine Learning Research}, 10:\penalty0 883--906,
  2009.

\bibitem[Hong et~al.(2014)Hong, Pinson, and Fan]{hong}
Tao Hong, Pierre Pinson, and Shu Fan.
\newblock Global energy forecasting competition 2012.
\newblock \emph{International Journal of Forecasting}, 30:\penalty0 357--363,
  2014.

\bibitem[Johnson(2013)]{nickdp}
Nicholas Johnson.
\newblock A dynamic programming algorithm for the fused lasso and
  {$\ell_0$}-segmentation.
\newblock \emph{Journal of Computational and Graphical Statistics}, 22\penalty0
  (2):\penalty0 246--260, 2013.

\bibitem[Kato(2011)]{kato2011}
Kengo Kato.
\newblock Group lasso for high dimensional sparse quantile regression models.
\newblock Technical report, 2011.
\newblock Available at \url{http://arxiv.org/pdf/1103.1458.pdf}.

\bibitem[Khare et~al.(2014)Khare, Oh, and Rajaratnam]{khare2014convex}
Kshitij Khare, Sang-Yun Oh, and Bala Rajaratnam.
\newblock A convex pseudolikelihood framework for high dimensional partial
  correlation estimation with convergence guarantees.
\newblock \emph{Journal of the Royal Statistical Society: Series B},
  77\penalty0 (4):\penalty0 803--825, 2014.

\bibitem[Koenker(2005)]{koenker2005quantile}
Roger Koenker.
\newblock \emph{Quantile Regression}.
\newblock Cambridge University Press, 2005.

\bibitem[Koenker(2011)]{koenker2011}
Roger Koenker.
\newblock Additive models for quantile regression: Model selection and
  confidence bandaids.
\newblock \emph{Brazilian Journal of Probability and Statistics}, 25\penalty0
  (3):\penalty0 239--262, 2011.

\bibitem[Koenker and Bassett(1978)]{koenker1978}
Roger Koenker and Gilbert Bassett.
\newblock Regression quantiles.
\newblock \emph{Econometrica}, 46\penalty0 (1):\penalty0 33--50, 1978.

\bibitem[Koenker et~al.(1994)Koenker, Ng, and Portnoy]{koenker1994}
Roger Koenker, Pin Ng, and Stephen Portnoy.
\newblock Quantile smoothing splines.
\newblock \emph{Biometrika}, 81\penalty0 (4):\penalty0 673--680, 1994.

\bibitem[Lauritzen(1996)]{lauritzen1996graphical}
Steffen Lauritzen.
\newblock \emph{Graphical models}.
\newblock Oxford University Press, 1996.

\bibitem[Lee and Hastie(2013)]{lee2013}
Jason Lee and Trevor Hastie.
\newblock Structure learning of mixed graphical models.
\newblock In \emph{Proceedings of the 16th International Conference on
  Artificial Intelligence and Statistics}, pages 388--396, 2013.

\bibitem[Liu and Wang(2012)]{liu2012tiger}
Han Liu and Lie Wang.
\newblock {TIGER}: A tuning-insensitive approach for optimally estimating
  {Gaussian} graphical models.
\newblock Technical report, 2012.
\newblock Available at \url{http://arxiv.org/pdf/1209.2437.pdf}.

\bibitem[Liu et~al.(2009)Liu, Lafferty, and Wasserman]{liu2009nonparanormal}
Han Liu, John Lafferty, and Larry Wasserman.
\newblock The nonparanormal: Semiparametric estimation of high dimensional
  undirected graphs.
\newblock \emph{Journal of Machine Learning Research}, 10:\penalty0 2295--2328,
  2009.

\bibitem[Liu et~al.(2012)Liu, Han, Yuan, Lafferty, and Wasserman]{liu2012high}
Han Liu, Fang Han, Ming Yuan, John Lafferty, and Larry Wasserman.
\newblock High-dimensional semiparametric {Gaussian} copula graphical models.
\newblock \emph{The Annals of Statistics}, pages 2293--2326, 2012.

\bibitem[Meier et~al.(2009)Meier, van~de Geer, and
  Buhlmann]{meier2009highdimensional}
Lukas Meier, Sara van~de Geer, and Peter Buhlmann.
\newblock High-dimensional additive modeling.
\newblock \emph{Annals of Statistics}, 37\penalty0 (6):\penalty0 3779--3821,
  2009.

\bibitem[Meinshausen and B{\"u}hlmann(2006)]{meinshausen2006high}
Nicolai Meinshausen and Peter B{\"u}hlmann.
\newblock High-dimensional graphs and variable selection with the lasso.
\newblock \emph{Annals of Statistics}, 34\penalty0 (3):\penalty0 1436--1462,
  2006.

\bibitem[Neville and Jensen(2004)]{neville2004dependency}
Jennifer Neville and David Jensen.
\newblock Dependency networks for relational data.
\newblock In \emph{Proceedings of Fourth IEEE International Conference on the
  Data Mining}, pages 170--177. IEEE, 2004.

\bibitem[O'Donoghue et~al.(2013)O'Donoghue, Chu, Parikh, and Boyd]{o2015conic}
Brendan O'Donoghue, Eric Chu, Neal Parikh, and Stephen Boyd.
\newblock Operator splitting for conic optimization via homogeneous self-dual
  embedding.
\newblock Technical report, 2013.
\newblock Available at \url{https://stanford.edu/~boyd/papers/pdf/scs.pdf}.

\bibitem[Oh et~al.(2014)Oh, Dalal, Khare, and Rajaratnam]{NIPS2014_5576}
Sang-Yun Oh, Onkar Dalal, Kshitij Khare, and Bala Rajaratnam.
\newblock Optimization methods for sparse pseudolikelihood graphical model
  selection.
\newblock In \emph{Advances in Neural Information Processing Systems 27}, pages
  667--675, 2014.

\bibitem[Parikh and Boyd(2013)]{parikh2013proximal}
Neal Parikh and Stephen Boyd.
\newblock Proximal algorithms.
\newblock \emph{Foundations and Trends in Optimization}, 1\penalty0
  (3):\penalty0 123--231, 2013.

\bibitem[Peng et~al.(2009)Peng, Wang, Zhou, and Zhu]{peng2009partial}
Jie Peng, Pei Wang, Nengfeng Zhou, and Ji~Zhu.
\newblock Partial correlation estimation by joint sparse regression models.
\newblock \emph{Journal of the American Statistical Association}, 104\penalty0
  (486):\penalty0 735--746, 2009.

\bibitem[Raskutti et~al.(2012)Raskutti, Wainwright, and
  Yu]{raskutti2012minimax}
Garvesh Raskutti, Martin Wainwright, and Bin Yu.
\newblock Minimax-optimal rates for sparse additive models over kernel classes
  via convex programming.
\newblock \emph{Journal of Machine Learning Research}, 13:\penalty0 389--427,
  2012.

\bibitem[Rocha et~al.(2008)Rocha, Zhao, and Yu]{rocha2008path}
Guilherme Rocha, Peng Zhao, and Bin Yu.
\newblock A path following algorithm for sparse pseudo-likelihood inverse
  covariance estimation {(SPLICE)}.
\newblock Technical report, 2008.
\newblock Available at
  \url{https://www.stat.berkeley.edu/~binyu/ps/rocha.pseudo.pdf}.

\bibitem[Rothman et~al.(2008)Rothman, Bickel, Levina, and Zhu]{rothman2008}
Adam Rothman, Peter Bickel, Elizaveta Levina, and Ji~Zhu.
\newblock Sparse permutation invariant covariance estimation.
\newblock \emph{Electronic Journal of Statistics}, 2:\penalty0 494--515, 2008.

\bibitem[Sohn and Kim(2012)]{sohn2012joint}
Kyung-Ah Sohn and Seyoung Kim.
\newblock Joint estimation of structured sparsity and output structure in
  multiple-output regression via inverse covariance regularization.
\newblock In \emph{Proceedings of the 15th International Conference on
  Artificial Intelligence and Statistics}, pages 1081--1089, 2012.

\bibitem[Takeuchi et~al.(2006)Takeuchi, Le, Sears, and
  Smola]{takeuchi2006nonparametric}
Ichiro Takeuchi, Quoc Le, Timothy Sears, and Alexander Smola.
\newblock Nonparametric quantile estimation.
\newblock \emph{Journal of Machine Learning Research}, 7:\penalty0 1231--1264,
  2006.

\bibitem[Varin and Vidoni(2005)]{varin2005}
Cristiano Varin and Paolo Vidoni.
\newblock A note on composite likelihood inference and model selection.
\newblock \emph{Biometrika}, 92\penalty0 (3):\penalty0 519--528, 2005.

\bibitem[Voorman et~al.(2014)Voorman, Shojaie, and Witten]{voorman2014}
Arend Voorman, Ali Shojaie, and Daniela Witten.
\newblock Graph estimation with joint additive models.
\newblock \emph{Biometrika}, 101\penalty0 (1):\penalty0 85--101, 2014.

\bibitem[Wainwright(2009)]{wainwright2009sharp}
Martin Wainwright.
\newblock Sharp thresholds for high-dimensional and noisy sparsity recovery
  using-constrained quadratic programming {(Lasso)}.
\newblock \emph{IEEE Transactions on Information Theory}, 55\penalty0
  (5):\penalty0 2183--2202, 2009.

\bibitem[Wang and Ip(2008)]{wang2008}
Yuchung Wang and Edward Ip.
\newblock Conditionally specified continuous distributions.
\newblock \emph{Biometrika}, 95\penalty0 (3):\penalty0 735--746, 2008.

\bibitem[Wytock and Kolter(2013)]{wytock2013sparse}
Matt Wytock and Zico Kolter.
\newblock Sparse {Gaussian} conditional random fields: Algorithms, theory, and
  application to energy forecasting.
\newblock In \emph{Proceedings of the 30th International Conference on Machine
  Learning}, pages 1265--1273, 2013.

\bibitem[Yang et~al.(2012)Yang, Ravikumar, Allen, and Liu]{yang2012}
Eunho Yang, Pradeep Ravikumar, Genevera Allen, and Zhandong Liu.
\newblock Graphical models via generalized linear models.
\newblock In \emph{Advances in Neural Information Processing Systems 25}, pages
  1358--1366, 2012.

\bibitem[Yang et~al.(2015)Yang, Ravikumar, Allen, and Liu]{yang2015}
Eunho Yang, Pradeep Ravikumar, Genevera Allen, and Zhandong Liu.
\newblock Graphical models via univariate exponential family distributions.
\newblock \emph{Journal of Machine Learning Research}, 16:\penalty0 3813--3847,
  2015.

\bibitem[Yuan and Lin(2007)]{yuan2007}
Ming Yuan and Yi~Lin.
\newblock Model selection and estimation in the {Gaussian} graphical model.
\newblock \emph{Biometrika}, 94\penalty0 (1):\penalty0 19--35, 2007.

\bibitem[Yuan and Zhang(2014)]{yuan2014partial}
Xiao-Tong Yuan and Tong Zhang.
\newblock Partial {Gaussian} graphical model estimation.
\newblock \emph{IEEE Transactions on Information Theory}, 60\penalty0
  (3):\penalty0 1673--1687, 2014.

\end{thebibliography}

\newpage
\allowdisplaybreaks

\renewcommand\thefigure{S.\arabic{figure}}    
\renewcommand\thesection{S.\arabic{section}}
\renewcommand\theequation{S.\arabic{equation}}     
\setcounter{figure}{0}
\setcounter{section}{0}
\setcounter{equation}{0}

\title{Supplement to ``The Multiple Quantile Graphical Model''}
\maketitle

\section{Proof of Lemma \ref{lem:indep}}
\label{supp:indep}

If the conditional quantiles satisfy $Q_{U|V,W}(\alpha) =
Q_{U|W}(\alpha)$ for all $\alpha \in [0,1]$, then the conditional  
CDF must obey the same property,
\ie, $F_{U|V,W}(t)=F_{U|W}(t)$ for all $t$ in the support of $U$.
This is simply because any CDF may be expressed in terms
of its corresponding quantile function (\ie, inverse CDF), as in
\begin{equation*}
F_{U|V,W}(t) = \sup\{\alpha \in [0,1] : Q_{U|V,W}(\alpha) \leq t \},
\end{equation*}
and the right-hand side does not depend on $V$, so neither can the
left-hand side.   But this precisely implies that the distribution of
$U|V,W$ equals that of $U|W$, \ie, $U$ and $V$ are conditionally
independent given $W$.  We note that the converse of the statement in
the lemma is true as well, by just reversing all the arguments here.  
\hfill\qedsymbol

\section{Proof of Lemma \ref{lem:gibbs}}
This result can be seen as a generalization of Theorem 3 in \cite{heckerman2000dependency}.

First, we define an {\it iteration} of Gibbs sampling to be a single pass through all the variables (without a loss of generality, we take this order to be $y_1,\ldots,y_d$).  Now, consider a particular iteration of Gibbs sampling; let $\tilde{y}_1, \ldots, \tilde{y}_d$ be the values assigned to the variables on the previous iteration.  Then the transition kernel for our Gibbs sampler is given by
\begin{align}
\prob(y_1,\ldots,y_d | \tilde{y}_1,\ldots,\tilde{y}_d) & = \prob(y_d | y_{d-1}, \ldots, y_1, \tilde{y}_1, \ldots, \tilde{y}_d) \prob(y_{d-1}, \ldots, y_1 | \tilde{y}_1, \ldots, \tilde{y}_d) \label{eq:1} \\
& = \prob(y_d | y_{d-1}, \ldots, y_1) \prob(y_{d-1}, \ldots, y_1 | \tilde{y}_1, \ldots, \tilde{y}_d) \label{eq:2} \\
& = \prob(y_d | y_{d-1}, \ldots, y_1) \prob(y_{d-1} | y_{d-2}, \ldots y_{1}, \tilde{y}_d) \cdots \prob(y_1 | \tilde{y}_2, \ldots, \tilde{y}_d), \label{eq:3}
\end{align}
where \eqref{eq:1} follows by the definition of conditional probability, \eqref{eq:2} by conditional independence, and \eqref{eq:3} by repeated applications of these tools.  Since each conditional distribution is assumed to be (strictly) positive, we have that the transition kernel is also positive, which in turn implies \citep[page 544]{Bishop:2006} that the induced Markov chain is ergodic with a unique stationary distribution that can be reached from any initial point.
\hfill\qedsymbol

\section{Statement and discussion of regularity conditions for Theorem \ref{thm:recovery}}

For each $k=1,\ldots,r$, $\ell=1,\ldots,r$, let us define the ``effective'' (independent) error terms \smash{$\epsilon_{\ell ki}=y_k^{(i)} - b_{\ell k}^* - \sum_{j \neq k} \phi(y_j^{(i)})^T \theta_{\ell kj}^*$}, over $i=1,\ldots,n$.  Denote by \smash{$F_{\epsilon_{\ell k}}$} the conditional CDF of 
\smash{$\epsilon_{\ell ki} | y_{\neg k}^{(i)}$}, $i=1,\ldots,n$, which by construction satisfies \smash{$F_{\epsilon_{\ell k}}(0) = \alpha_\ell$}.  Also define the underlying support
\begin{equation*}
S_{\ell k} = \big\{ j \in \{1,\ldots,d\} : \theta_{\ell k j}^* \not= 0
\big\}.  
\end{equation*}

Here we take a moment to explain a somewhat subtle indexing issue with
the columns of the feature matrix \smash{$\Phi \in \R^{n \times dm}$}.
For a  single fixed index $j=1,\ldots,d$, we will extract an
appropriate block of columns of \smash{$\Phi \in \R^{n \times dm}$}, 
corresponding to the basis expansion of variable $j$, by
writing $\Phi_j$.  More precisely, we use $\Phi_j$ to denote the block 
of $m$ columns  
\begin{equation}
\label{eq:basis_exp}
[\Phi_{(j-1)m+1}, \Phi_{(j-1)m+2}, \ldots, \Phi_{jm}].
\end{equation}
We do this because it simplifies notation considerably. (Occasionally,
to be transparent, we will use the more exhaustive notation on the
right-hand side in \eqref{eq:basis_exp}, but this is to be treated as
an exception, and the default is to use the concise notation as in
$\Phi_j$.) The same rule will be used for subsets of indices among
$1,\ldots,d$, so that \smash{$\Phi_{S_{\ell k}}$} denotes the
appropriate block of \smash{$m|S_{\ell k}|$} columns corresponding to
the basis expansions of the variables in $S_{\ell k}$.

For all $k=1,\ldots,d$, $\ell=1,\ldots,r$, we will 
assume the following regularity conditions.

\begin{enumerate}
\item[\bf A1.] {\it Groupwise irrepresentability:} for $j \in S_{\ell k}^c$, we require that \smash{$\| \Phi_j^T \Phi_{S_{\ell k}}\|_F < \lambda_1 / (6 f_{\epsilon_{\ell k}}(0) \gamma)$}, where $S_{\ell k} = \{ j \in \{1,\ldots,dm\} : \theta_{\ell kj}^* \neq 0 \}$, \smash{$f_{\epsilon_{\ell k}}$} is the density of \smash{$F_{\epsilon_{\ell k}}$}, and $\gamma > 0$ is a quantity prescribed by Lemma \ref{supp:estimation}.



\item[\bf A2.] {\it Distributional smoothness:} we assume that \smash{$| F_{\epsilon_{\ell k}}(x) - F_{\epsilon_{\ell k}}(0) - x f_{\epsilon_{\ell k}}(0) | \leq C_1 x^2$} for all $|x| \leq C_2$, where $C_1,C_2>0$ are constants.

\item[\bf A3.] {\it Correlation restriction:} we assume that \smash{$C_3 \leq (f_{\epsilon_{\ell k}}(0)/n) \lambda_{\min} (\Phi_{S_{\ell k}}^T \Phi_{S_{\ell k}} ) \leq C_4$} for constants $C_3, C_4 > 0$, where $\lambda_{\min}(A)$ denotes the minimum eigenvalue of $A$.

\item[\bf A4.] {\it Basis and support size restrictions:} we assume that $m = O(n^{1/9})$ and $s = O(n^{1/21})$, where $s=|S_{\ell k}|$. We also assume, with probability tending to one, that $\Phi_{\max}=\Omega(1)$ and $\Phi_{\max} = o(n^{1/21} / \log^{1/2} n)$, where we write $\Phi_{\max}$ to denote the maximum absolute entry of the basis matrix $\Phi$. 
\end{enumerate}

Next, we provide some intuition for these conditions.

\bigskip
{\it Condition A1.}
Fix some $j \in S_{\ell k}^c$.  For notational convenience, we let
\[
A = \Phi_j^T \Phi_{S_{\ell k}} \in \reals^{m \times sm}.
\]
Observe that each entry of $A$ can be expressed as
\begin{equation}
A_{ip} = n \rho_{i,p} \| \Phi_{(j-1)m + i} \|_2 \| \Phi_{p} \|_2, \label{eq:aip}
\end{equation}
for $i=1,\ldots,m$, $p$ denoting an index into the basis expansion of the columns $\Phi_{S_{\ell k}}$, and $\rho_{i,p}$ denoting the sample correlation coefficient for the columns $\Phi_i$ and $\Phi_p$.  Since $\| A_p \|_F \leq \sqrt{m} \| A_p \|_\infty$, we have that
\[
\max_{i,p} \; \rho_{i,p} < \frac{\lambda_1}{6 n^2 f_{\epsilon_{\ell k}}(0) \sqrt{m}} 
\]
is sufficient for condition A1; here, we have also used the column scaling assumption $\| \Phi_p \|_2 \leq \sqrt{n}$.

So, roughly speaking, bounded correlation between each pair of columns in the submatrices $\Phi_j$ and $\Phi_{S_{\ell k}}$ is enough for condition A1 to hold; note that this is trivially satisfied when $\Phi_i^T \Phi_p = 0$, for $i=1,\ldots,m$, and $p$ as defined above. Condition A1 is therefore similar to, \eg, the mutual incoherence condition of \cite{wainwright2009sharp} for the lasso, which is given by
\begin{align*}
\left\| \Phi_{S^c}^T \Phi_S \left( \Phi_S^T \Phi_S \right)^{-1} \right\|_\infty \leq 1-\tilde{\gamma} \iff \max_{j \in S^c} \left\| \left( \Phi_S^T \Phi_S \right)^{-1} \Phi_S^T \Phi_{j} \right\|_1 \leq 1-\tilde{\gamma}, \label{eq:inco}
\end{align*}
where again $\Phi_S$ extracts the appropriate block of columns of $\Phi$, $\| \cdot \|_\infty$ here denotes the $\ell_\infty$ operator norm (maximum $\ell_1$ norm of a row), $\| \cdot \|_1$ here denotes the elementwise $\ell_1$ norm, and $\tilde{\gamma} \in (0,1]$ is a constant.  This condition can be seen as requiring bounded correlation between each column in the submatrix $\Phi_{S^c}$ and all columns in the submatrix $\Phi_S$.

\bigskip
{\it Condition A2.}
This condition is similar to requiring that $f_{\epsilon_{\ell k}}(x)$ be Lipschitz, over some $x$ in a neighborhood of 0. We can show that the Laplace distribution, \eg, satisfies this condition.

The density and distribution functions for the Laplace distribution with location zero and unit scale are given by
\[
f_{\epsilon_{\ell k}}(x) = (1/2) \exp ( -|x| )
\]
and
\[
F_{\epsilon_{\ell k}}(x) =
\begin{cases}
1 - (1/2) \exp ( - x ) & \textrm{if } x \geq 0 \\
(1/2) \exp ( x ) & \textrm{if } x < 0,
\end{cases}
\]
respectively. 

Now, suppose $0 \leq x \leq C_2$.  Then we can express condition A2 as
\begin{align*}
& | f_{\epsilon_{\ell k}}(x) - f_{\epsilon_{\ell k}}(0) - x f_{\epsilon_{\ell k}}(0) | \leq C_1 x^2 \iff -2 C_1 x^2 \leq \exp(-x) + x - 1 \leq 2 C_1 x^2.
\end{align*}

For the first inequality, since $1-x \leq \exp(-x)$, it is sufficient to check that $0 \leq C_1 x^2$, which is true for $C_1 > 0$ and all $x$.  For the second inequality, by differentiating and again using $1-x \leq \exp(-x)$, we have that the function
\begin{equation}
2 C_1 x^2 - \exp(-x) - x + 1
\end{equation}
is nondecreasing in $x \geq 0$; thus, it is sufficient to check that this function is nonnegative for $x = 0$, which is true.

Now, suppose $-C_2 \leq x < 0$.  Then we can express condition A2 as
\begin{align*}
& | f_{\epsilon_{\ell k}}(x) - f_{\epsilon_{\ell k}}(0) - x f_{\epsilon_{\ell k}}(0) | \leq C_1 x^2 \iff -2 C_1 x^2 \leq \exp(x) - x - 1 \leq 2 C_1 x^2.
\end{align*}

By symmetry with the preceding case, the first inequality here holds.  The second inequality here also holds, since $\exp(x) - 2 C_1 x^2 - x - 1$ is continuous and increasing in $x < 0$; taking the limit as $x \uparrow 0$ gives that this function is nonpositive as required.

\bigskip
{\it Condition A3.}
This condition is a generalization of the minimum eigenvalue condition of \cite{wainwright2009sharp}, \ie, 
$
c_{\min} \leq \lambda_{\min} \left( (1/n) \Phi_S^T \Phi_S \right),
$
for some constant $c_{\min} > 0$, and where we write $\Phi_S$ to extract the appropriate block of columns of $\Phi$.

\bigskip
{\it Condition A4.} This condition allows the number of basis functions $m$ in the expansion to grow with $n$, at a polynomial rate (with fractional exponent). This is roughly in line with standard nonparametric regression; \eg, when estimating a continuous differentiable function via a spline expansion, one typically takes the number of basis functions $m$ to scale as $n^{1/3}$, and the more derivatives that are assumed, the smaller the fractional exponent \citep{gyorfi}.  The condition also restricts, for any given variable, the number of variables $s$ that contribute to its neighborhood model to be polynomial in $n$ (with a smaller fractional exponent).  

Finally, the condition assumes that the entries of the basis matrix $\Phi$ (\ie, the matrix of transformed variables) to be at least of constant order, and at most of polynomial order (with small fractional exponent), with $n$. We note that this implicitly places a restriction on the tails of distribution governing the data \smash{$y^{(i)}_j$}, $i=1,\ldots,n$, $j=1,\ldots,d$.  However, the restriction is not a strong one, because it allows the maximum to grow polynomially large with $n$ (whereas a logarithmic growth would be expected, \eg, for normal data).  Furthermore, it is possible to trade off the restrictions on $m$, $s$, $\Phi_{\max}$, and $d$ (presented in the statement of the theorem), making each of these restrictions more or less stringent, if required.

\section{Proof of Theorem \ref{thm:recovery}}
\label{supp:recovery}

The general strategy that we use here for support recovery is inspired
by that in \cite{fan2014adaptive}, for $\ell_1$-penalized quantile 
regression. 

Fix some $k=1,\ldots,d$ and $\ell=1,\ldots,r$.   We consider the
conditional distribution \smash{$y_k | y_{\neg k}$}, whose 
$\alpha_\ell$-quantile is assumed to satisfy \eqref{eq:invcdf}. 
Hence, to be perfectly clear, all expectations and probability
statements in what follows are to be interpreted with respect to the 
observations \smash{$y_k^{(i)}$, $i=1,\ldots,n$} conditional on 
\smash{$y_j^{(i)}$, $i=1,\ldots,n$}, for $j \not= k$ (and thus we can
treat the feature matrix $\Phi$ as fixed throughout). In the setting 
assumed by the theorem, the conditional quantile model in
\eqref{eq:invcdf} is, more explicitly,
\begin{equation*}
Q_{y_k | y_{\neg k}}(\alpha_\ell) = b_{\ell k}^* + \sum_{j \neq k}^d
(\theta_{\ell k j}^*)^T \phi^j(y_j),   
\end{equation*}
for some unknown parameters \smash{$b^*_{\ell k}$} and  
\smash{$\theta_{\ell k j}^*$}, $j=1,\ldots,d$. For simplicity, in this
proof, we will drop the intercept term 
completely both from the model (denoted \smash{$b_{\ell k}^*$}) and
the optimization problem in \eqref{eq:mqgm} (here denoted
\smash{$b_{\ell k}$}) that defines  
the estimator in question. Including the intercept is not at all
difficult, and it just requires some extra bookkeeping at various
places.  Recall that we define  
\begin{equation*}
S_{\ell k} = \big\{ j \in \{1,\ldots,d\} : \theta_{\ell k j}^* \not= 0
\big\},  
\end{equation*}
and analogously define
\begin{equation*}
\hat{S}_{\ell k} = \big\{ j \in \{1,\ldots,d\} : \hat\theta_{\ell k j} 
\not= 0 \big\},  
\end{equation*}
where \smash{$\hat\theta_{\ell k}=(\hat\theta_{\ell k 1}, \ldots, 
\hat\theta_{\ell k d}) \in \R^{dm}$} is the solution in
\eqref{eq:mqgm_theta}.  

We will show that, with probability at least $1-\delta/(dr)$, it holds 
that \smash{$S_{\ell k} = \hat{S}_{\ell k}$}.  A union bound (over all
choices $k=1,\ldots,d$ and $\ell=1,\ldots,r$) will then tell us that 
\smash{$E^* = \hat{E}$} with probability at least $1-\delta$,
completing the proof.

To certify that \smash{$S_{\ell k} = \hat{S}_{\ell k}$}, we will show
that the unique solution in \eqref{eq:mqgm_theta} is given by
\begin{equation}
\label{eq:proposed_sol}
\hat\theta_{\ell k (S_{\ell k})} = \tilde\theta_{\ell k (S_{\ell k})},
\quad 
\hat\theta_{\ell k (S_{\ell k}^c)} = 0,
\end{equation}
where \smash{$\tilde\theta_{\ell k (S_{\ell k})}$}  
solves the ``restricted'' optimization problem:
\begin{equation}
\minimize_{\theta_{\ell k (S_{\ell k})}} \;
\psi_{\alpha_\ell} \Big(Y_k - \Phi_{S_{\ell k}} \theta_{\ell k
  (S_{\ell k})}\Big) + 
\lambda_1 \sum_{j \in S_{\ell k}} \|\theta_{\ell k j} \|_2 + 
\frac{\lambda_2}{2} \| \theta_{\ell k (S_{\ell k})} \|_2^2.
\label{eq:restricted}
\end{equation}


Now, to prove that \smash{$\hat\theta_{\ell k}$} as defined above in 
\eqref{eq:proposed_sol} indeed the solution in \eqref{eq:mqgm_theta},  
we need to check that it satisfies the KKT conditions for
\eqref{eq:mqgm_theta}, namely
\begin{align}
\Phi_{S_{\ell k}}^T v_{\ell} \Big(Y_k -
\Phi_{S_{\ell k}} \tilde\theta_{\ell k (S_{\ell k})}\Big) 
- \lambda_2 \tilde{\theta}_{\ell k (S_{\ell k})} & = 
\lambda_1 u_{\ell k (S_{\ell k})},   
\label{eq:opt1} \\ 
\Phi_{S_{\ell k}^c}^T v_{\ell} \Big(Y_k - 
\Phi_{S_{\ell k}} \tilde{\theta}_{\ell k (S_{\ell k})} \Big) & = 
\lambda_1 u_{\ell k (S_{\ell k}^c)}, 
\label{eq:opt2} 
\end{align}
where \smash{$v_{\ell} (Y_k - \Phi_{S_{\ell k}} \tilde\theta_{\ell k
    (S_{\ell k})} ) \in \R^n$} is a subgradient of 
\smash{$\psi_{\alpha_\ell}(\cdot)$} at \smash{$Y_k - \Phi_{S_{\ell k}} 
  \tilde{\theta}_{\ell k (S_{\ell k})}$}, \ie,    
\begin{equation*}
\Big[v_{\ell} \Big(Y_k - \Phi_{S_{\ell k}} 
\tilde\theta_{\ell k (S_{\ell k})}\Big)\Big]_i   
= \alpha_\ell - I_{-}\Big(y_k^{(i)} - \Phi_{i (S_{\ell k})}  
\tilde\theta_{\ell k (S_{\ell k})}\Big), 
\quad i=1,\ldots,n
\end{equation*}
where $I_-(\cdot)$ is the indicator function of the nonpositive 
real line, and where each $u_{\ell k j} \in \R^m$ is a subgradient of   
$\|\cdot\|_2$ at \smash{$\tilde\theta_{\ell k j}$}, \ie,    
\begin{equation*}
u_{\ell k j}  \in
\begin{cases}
\{\tilde\theta_{\ell k j} / \| \tilde\theta_{\ell k j} \|_2\} &
\text{if} \;\, \theta_{\ell k j} \neq 0 \\ 
\{ x \in \R^m : \|x\|_2 \leq 1 \} & \text{if} \;\, 
\theta_{\ell k j} = 0, 
\end{cases}
\end{equation*}
for $j=1,\ldots,d$.  Note that, since 
\smash{$\tilde\theta_{\ell k(S_{\ell k})}$} is optimal for the
restricted problem \eqref{eq:restricted}, we know that there exists a 
collection of subgradients \smash{$u_{\ell k(S_{\ell k})}$} to satisfy
\eqref{eq:opt1}, from the KKT conditions for \eqref{eq:restricted} itself.  

It remains to satisfy \eqref{eq:opt2}, and for this, we can use   
\smash{$u_{\ell k j} = 
\Phi_j^T v_{\ell}(Y_k - \Phi_{S_{\ell k}} \tilde{\theta}_{\ell k
  (S_{\ell k})})$} as a valid choice of subgradient, for each
\smash{$j \in S_{\ell k}^c$}, provided that  
\begin{equation}
\label{eq:subg_bound}
\Big\|\Phi_j^T v_{\ell} \Big(Y_k - \Phi_{S_{\ell k}}
\tilde\theta_{\ell k (S_{\ell k})} \Big)\Big\|_2 < \lambda_1, \quad
\text{for}\;\, j \in S_{\ell k}^c. 
\end{equation}
Define \smash{$z_j(\vartheta) = 
\Phi_j^T v_{\ell}(Y_k - \Phi_{S_{\ell k}} \vartheta)$, for 
$j \in S_{\ell k}^c$}, and define a ball  
\begin{equation*}
B^* = \big\{ \vartheta \in \R^{sm} : 
\| \vartheta - \theta_{\ell k (S_{\ell k})}^* \|_2 \leq \gamma \big\},
\end{equation*}
where we write $s=|S_{\ell k}|$. 
To show \eqref{eq:subg_bound}, then, it suffices to show that
\begin{equation}
\label{eq:ball_bound}
\underbrace{\tilde\theta_{\ell k(S_{\ell k})} 
\in B^* \vphantom{\sup_{\vartheta \in B^*}}}_{E_1},
\quad \text{and} \quad
\underbrace{\max_{j \in S_{\ell k}^c} \; 
\sup_{\vartheta \in B^*} \;
\|z_j(\vartheta)\|_2 < \lambda_1}_{E_2}.
\end{equation}
In Lemma \ref{lem:estimation}, given in Section \ref{supp:estimation}, it is 
shown that the event $E_1$ defined above occurs with probability at
least $1-\delta/(2dr)$, with a choice of radius 
\newcommand{\ourgamma}{C \left( \frac{\lambda_1 s \sqrt{m}}{n} + \sqrt{\frac{s \log n}{n}} \right)}
\begin{equation*}
\gamma = \ourgamma,
\end{equation*}
for a constant $C>0$. Below we show that $E_2$ occurs with probability at least $1-\delta/(2dr)$, as 
well. 

For $j=1,\ldots,d$, let us expand 
\begin{multline}
\label{eq:z_expand}
z_j(\vartheta) = 
\underbrace{ \Phi_j ^T v_{\ell} (\epsilon_{\ell k}) }_{ \Delta_1^j } +
\underbrace{ \Phi_j^T \E \Big[ v_{\ell} \Big(Y_k -
 \Phi_{S_{\ell k}} \vartheta \Big) - v_{\ell} (\epsilon_{\ell k})
 \Big]}_{ \Delta_2^j } + {} \\
\underbrace{ \Phi_j^T \bigg(
    v_{\ell} \Big(Y_k -\Phi_{S_{\ell k}} \vartheta) -
    v_{\ell}(\epsilon_{\ell k} \Big) - 
    \E \Big[ v_{\ell} (Y_k - 
    \Phi_{S_{\ell k}} \vartheta) - v_{\ell} (\epsilon_{\ell k}) \Big]
    \bigg)}_{ \Delta_3^j}, 
\end{multline}
where \smash{$\epsilon_{\ell k}=
(\epsilon_{\ell k 1},\ldots,\epsilon_{\ell k n}) \in \R^n$} is a
vector of the effective error terms, which recall, is defined by  
\smash{$\epsilon_{\ell k}=Y_k - \Phi \theta_{\ell k}^*$}. Therefore,
to show that the event $E_2$ in \eqref{eq:ball_bound} holds, we can
show that for each $p=1,2,3$,
\begin{equation*}
\max_{j \in S_{\ell k}^c} \; 
\sup_{\vartheta \in B^*} \;
\| \Delta^j_p \|_2 < \frac{\lambda_1}{3}.
\end{equation*}
Further, to show that $E_2$ holds with probability at least
$1-\delta/(2dr)$, we can show that the above holds for $p=1,3$ each
with probability at least $1-\delta/(4dr)$, as the statement for $p=2$
is deterministic. We now bound the terms
\smash{$\Delta^j_1,\Delta^j_2, \Delta^j_3$} one by one.  

\bigskip
{\it Bounding \smash{$\| \Delta_1^j \|_2$}.}  
Fix \smash{$j \in S_{\ell k}^c$}, and write
\begin{equation*}
\Phi_j^T v_{\ell}(\epsilon_{\ell k}) = 
\bigg(\sum_{i=1}^n \Phi_{i,(j-1)m+1} 
v_{\ell} (\epsilon_{\ell k i}),\ldots,
\sum_{i=1}^n \Phi_{i,jm} v_{\ell} (\epsilon_{\ell k i}) \bigg),
\end{equation*}
where, as a reminder that the above quantity is a vector, we have
returned momentarily to the more exhaustive notation for 
indexing the columns of $\Phi$, as in the right-hand side of
\eqref{eq:basis_exp}. 

Straightforward calculations reveal that, for each $i=1,\ldots,n$, and
$p=1,\ldots,m$,
\begin{equation*}
\Expect \Phi_{i,(j-1)m+p} v_{\ell} (\epsilon_{\ell k i}) = 0, 
\quad \text{and} \quad
-| \Phi_{i,(j-1)m+p} | \leq \Phi_{i,(j-1)m+p} v_{\ell} (\epsilon_{\ell
  k i}) \leq | \Phi_{i,(j-1)m+p} |.  
\end{equation*}
Hence,
\begin{align*}
\P \Big( \| \Phi_j^T v_\ell(\epsilon_{\ell k i}) \|_2 \geq 
\sqrt{m} t \Big) &\leq
\P \bigg( \bigg|\sum_{i=1}^n \Phi_{i,(j-1)m+p} 
v_{\ell} (\epsilon_{\ell k i}) \bigg| \geq t, \;\,
\text{some $p=1,\ldots,m$} \bigg) \\
&\leq \sum_{p=1}^m
2 \exp \bigg( -\frac{t^2}{2 \sum_{i=1}^n 
\Phi_{i,(j-1)m+p}^2} \bigg) \\
&\leq 2m \exp\bigg( -\frac{t^2}{2n} \bigg).
\end{align*}
Above, the first inequality used the simple fact that $\|x\|_2 \leq
\sqrt{m} \|x\|_\infty$ for \smash{$x \in \R^m$}; the second used
Hoeffding's bound and the union bound; and the third used our
assumption that the columns of $\Phi$ have norm at most $\sqrt{n}$.
Therefore, taking \smash{$t=\lambda_1/(3\sqrt{m})$}, we see that, by
the above and the union bound,
\begin{equation*}
\P \bigg(\max_{j \in S_{\ell k}^c} \; 
\| \Delta^j_1 \|_2 < \frac{\lambda_1}{3} \bigg) \geq 1 - 2dm 
\exp\bigg(-\frac{\lambda_1^2}{18mn}\bigg). 
\end{equation*}
By choosing 
\smash{$\lambda_1 = C' \sqrt{ 18mn \log(8 d^2mr / \delta) }$} for a constant $C'>0$, we  
see that the probability in question is at least $1-\delta/(4dr)$, as desired.

\bigskip
{\it Bounding \smash{$\| \Delta_2^j \|_2$}.}  Recall that
\smash{$F_{\epsilon_{\ell k}}(\cdot)$} is used to denote the CDF of 
the effective error distribution, and 
\smash{$f_{\epsilon_{\ell k}}(\cdot)$} 
is used for its its density.  By construction,
\smash{$F_{\epsilon_{\ell k}} (0) = \alpha_\ell$}.  Direct
calculation, using the definition of $v_\ell(\cdot)$, shows that, 
for any $\vartheta \in B^*$, and each $i=1,\ldots,n$,
\begin{equation*}
\Expect \Big[ v_{\ell} (\epsilon_{\ell k}) - v_{\ell} 
\Big(Y_k - \Phi_{S_{\ell k}} \vartheta \Big) \Big] = 
F_{\epsilon_{\ell k}} \Big( \Phi_{S_{\ell k}}\big(\vartheta -
\theta_{\ell k (S_{\ell k})}^*\big) \Big) - F_{\epsilon_{\ell k}} ({\mathbf 0}),
\end{equation*} 
where we apply $F_{\epsilon_{\ell k}}$ componentwise, and so
\begin{equation*}
\Phi_j^T \Expect \Big[ v_{\ell} (\epsilon_{\ell k}) - v_{\ell} 
\Big(Y_k - \Phi_{S_{\ell k}} \vartheta \Big) \Big] =
f_{\epsilon_{\ell k}} (0) \Phi_j^T \Phi_{S_{\ell k}} \big(\vartheta - 
\theta_{\ell k (S_{\ell k})}^*\big) + \Delta_4^j 
\end{equation*}
with \smash{$\Delta_4^j \in \reals^m$} being the appropriate
remainder term, \ie, 
\begin{equation*}
\left[ \Delta_4^j \right]_t = \sum_{i=1}^n \Phi_{i t}
\Big[ F_{\epsilon_{\ell k}} \Big( \Phi_{i (S_{\ell k})} \big(
\vartheta - \theta_{\ell k (S_{\ell k})}^* \big) \Big) - 
F_{\epsilon_{\ell k}} (0) - f_{\epsilon_{\ell k}} (0) 
\Phi_{i (S_{\ell k})} 
\big ( \vartheta - \theta_{\ell k (S_{\ell k})}^*\big) \Big],
\end{equation*}
for $t = j(m-1)+1, \ldots, jm$.

Now, we have that
\begin{equation*}
\big\| 
f_{\epsilon_{\ell k}} (0) \Phi_j^T \Phi_{S_{\ell k}} \big(\vartheta - 
\theta_{\ell k (S_{\ell k})}^*\big) \big\|_2 \leq 
f_{\epsilon_{\ell k}} (0) \big\|\Phi_j^T \Phi_{S_{\ell k}}\big\|_F
\big\|\vartheta - \theta_{\ell k (S_{\ell k})}^*\big\|_2 \leq 
\frac{\lambda_1}{6},
\end{equation*}
where we have used 
\smash{$\|\vartheta - \theta_{\ell k (S_{\ell k})}^*\|_2 \leq \gamma$}
and the groupwise irrepresentability condition in A1.

We also have the following two facts, which we will use momentarily:
\begin{align}
\Phi_{\max}^3 n s \gamma^2 & = o(\lambda_1) \label{eq:4} \\
\sqrt{s} \Phi_{\max} \gamma & \to 0. \label{eq:5}
\end{align}
Note that \eqref{eq:4} can be obtained as follows.  Since $(1/2)(x+y)^2 \leq x^2 + y^2$ for $x,y \in \reals$, we can plug in
\[
\gamma = \ourgamma,
\]
and check that both terms on the right-hand side of
\[
\frac{\Phi_{\max}^3 n s}{\lambda_1} \left( \frac{\lambda_1^2 s^2 m}{n^2} + \frac{s \log n}{n} \right) = \frac{\Phi_{\max}^3 s^3 \lambda_1 m}{n} + \frac{\Phi_{\max}^3 s^2 \log n}{\lambda_1}
\]
tend to zero.  For the first term on the right-hand side, it is enough to show that
\[
\Phi_{\max}^6 s^6 m^3 \log(d^2 m r) (\log^3 n) / n \to 0,
\]
where we have plugged in \smash{$\lambda_1 = C'\sqrt{mn \log( d^2mr / \delta) \log^3 n}$}.  Using the assumptions in condition A4, we get that $\log(d^2 m r) = O(\log d + \log m) = O(n^{2/21})$, and furthermore that
\[
\Phi_{\max}^6 s^6 m^3 \log(d^2 m r) (\log^3 n) / n = o\left( \frac{n^{1/3} \cdot n^{2/21} \cdot n^{6/21} \cdot n^{6/21}}{\log^3 n} \right) \frac{\log^3 n}{n} \to 0,
\]
as required.  A similar calculation shows that the second term on the right-hand side also tends to zero, \ie, $\Phi_{\max}^3 s^2 (\log n) / \lambda_1 \to 0$, which establishes \eqref{eq:4}.  Lastly, \eqref{eq:5} follows since its left-hand side is dominated by the left-hand side of \eqref{eq:4}.

So, we now compute
\begin{align*}
\|\Delta_4^j\|_2 &\leq \sqrt{m} \max_{t} \; \sum_{i=1}^n \bigg|\Phi_{i t}  
\Big[ F_{\epsilon_{\ell k}} \Big( \Phi_{i (S_{\ell k})} \big(
\vartheta - \theta_{\ell k (S_{\ell k})}^* \big) \Big) - {} \\
& \hspace{200pt}
F_{\epsilon_{\ell k}} (0) - f_{\epsilon_{\ell k}} (0) \Phi_{i (S_{\ell k})} 
\big ( \vartheta - \theta_{\ell k (S_{\ell k})}^*\big) \Big] \bigg| 
\\ & \leq C_1 \Phi_{\max} \sqrt{m} \sum_{i=1}^n \Big(
\Phi_{i (S_{\ell k})} \big(\vartheta - \theta_{\ell k (S_{\ell k})}^*
\big) \Big)^2 \\
& \leq C_1 \Phi_{\max} \sqrt{m} \sum_{i=1}^n 
\| \Phi_{i (S_{\ell k})} \|_2^2 
\| \vartheta - \theta_{\ell k (S_{\ell k})}^* \|_2^2 \\  
& \leq C_1 \Phi_{\max}^3 \sqrt{m} n s \gamma^2 \\ 
& = o ( \lambda_1 ).
\end{align*}
Here the first inequality follows from the fact that $\|x\|_2 \leq 
\sqrt{m} \|x\|_\infty$ for \smash{$x \in \R^m$}, and the triangle
inequality; the second follows from the distributional smoothness 
condition in A2, which is applicable since 
\eqref{eq:5} holds; the third
uses Cauchy-Schwarz; the fourth uses our column norm assumption on
$\Phi$, 
and \smash{$\|\vartheta - \theta_{\ell k (S_{\ell k})}^*\|_2 \leq
  \gamma$}; the last uses \eqref{eq:4}.  As
\smash{$\|\Delta_4^j\|_2 = o(\lambda_1)$}, it will certainly be
strictly less than $\lambda_1/6$ for $n$ large enough.  We have hence
shown, noting that none of our above arguments have depended on the
particular choice of $j=1,\ldots,d$ or $\vartheta \in B^*$,
\begin{equation*}
\max_{j \in S_{\ell k}^c} \; \sup_{\vartheta \in B^*} \;
\| \Delta^j_2 \|_2 < \frac{\lambda_1}{3}.
\end{equation*}

\bigskip
{\it Bounding \smash{$\| \Delta_3^j \|_2$}.}  For this part, we can
use the end of the proof of Lemma 2 in \cite{fan2014adaptive}, which
uses classic entropy-based techniques to establish a bound very
similar to that which we are seeking. By carefully
looking at the conditions required for this lemma, we see that under
the distributional smoothness condition in A2, condition A3, and also 
\begin{align*}
\sqrt{n \log(dm)} & = o(\lambda_1) \\
n \Phi_{\max} \gamma^2 & = o(\lambda_1) \\
(1 + \gamma \Phi_{\max}^2 s^{3/2}) \log^2 n & = o(\lambda_1^2 / n),
\end{align*}
all following directly from condition A4 by calculations similar to the ones we used when bounding $\| \Delta_2^j \|$, we have 
\begin{equation*}
\P \bigg(
\max_{j \in S_{\ell k}^c} \; \sup_{\vartheta \in B^*} \;
\|\Delta_3^j\|_2 \geq \frac{\lambda_1}{3} \bigg) \leq 
\P \bigg(
\max_{j \in S_{\ell k}^c} \; \sup_{\vartheta \in B^*} \;
\|\Delta_3^j\|_\infty \geq \frac{\lambda_1}{3\sqrt{m}} \bigg);
\end{equation*} 
the probability on the right-hand side can be made arbitrarily
small for large $n$, by the arguments at the end of Lemma 2 in
\cite{fan2014adaptive}, and hence clearly smaller than the desired
\smash{$\delta/(4dr)$} level.

\bigskip
{\it Putting it together.}  Returning to the logic in
\eqref{eq:subg_bound}, \eqref{eq:ball_bound}, \eqref{eq:z_expand}, we
have shown that the subgradient condition in \eqref{eq:subg_bound}
holds with probability at least $1-(\delta/(2dr) + \delta/(4dr) +
\delta/(4dr)) = 1-\delta/(dr)$.  Taking a union bound over
$k=1,\ldots,d$ and $\ell=1,\ldots,r$, which were considered fixed at
the start of our analysis, gives the result stated in the theorem.
\hfill\qedsymbol

\section{Statement and proof of Lemma \ref{lem:estimation}}
\label{supp:estimation}

\newcommand{\subs}{{\ell k (S_{\ell k})}}

We show that with probability at least $1 - \delta / (2dr)$, it holds that \smash{$\tilde{\theta}_\subs \in B^*$}, where \smash{$\tilde{\theta}_\subs$} is the solution to the restricted problem \eqref{eq:restricted}, for some fixed $k=1,\ldots,d$ and $\ell=1,\ldots,r$, and $B^*$ is a ball defined in the proof of Theorem \ref{thm:recovery} in Section \ref{supp:recovery}.  This fact is used a few times in the proof of Theorem \ref{thm:recovery}.

\begin{lemma}
\label{lem:estimation}
Fix some $k=1,\ldots,d$ and $\ell=1,\ldots,r$.  Define the ball
\[
B^* = \{ \vartheta \in \reals^{sm} : \| \vartheta - \theta^*_\subs \|_2 \leq \gamma \}
\]
centered at the underlying coefficients \smash{$\theta^*_\subs$} with radius 
\[
\gamma = \ourgamma,
\]
for some constant $C > 0$.  Then, with probability at least $1 - \delta / (2dr)$, it holds that \smash{$\tilde{\theta}_\subs \in B^*$}, where \smash{$\tilde{\theta}_\subs$} is the solution to the restricted problem \eqref{eq:restricted}.
\end{lemma}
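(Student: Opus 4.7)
The plan is a standard ``basic inequality''/convexity argument for penalized $M$-estimators. Since the objective $L_n(\vartheta)$ in \eqref{eq:restricted} is convex in $\vartheta$, it suffices to show that $L_n(\vartheta) > L_n(\theta^*_\subs)$ for every $\vartheta$ on the boundary of $B^*$, with probability at least $1-\delta/(2dr)$; this forces the minimizer $\tilde\theta_\subs$ into the interior of $B^*$ and therefore into $B^*$ itself.

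Fix any $\vartheta \in B^*$ and let $h = \vartheta - \theta^*_\subs$. The key identity is Knight's decomposition
\begin{equation*}
\psi_{\alpha_\ell}(\epsilon_{\ell k i} - x) - \psi_{\alpha_\ell}(\epsilon_{\ell k i}) = -x\big(\alpha_\ell - I(\epsilon_{\ell k i} \leq 0)\big) + \int_0^x \big[I(\epsilon_{\ell k i} \leq s) - I(\epsilon_{\ell k i} \leq 0)\big]\,ds,
\end{equation*}
applied at $x = \Phi_{i (S_{\ell k})} h$ and summed over $i$, which decomposes the quantile-loss difference into a linear piece $-h^T \Phi_{S_{\ell k}}^T v_\ell(\epsilon_{\ell k})$ plus a nonnegative remainder $D_n(h)$. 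Taking expectations, condition A2 (Taylor expansion of $F_{\epsilon_{\ell k}}$ at $0$) followed by condition A3 (minimum-eigenvalue lower bound on $\Phi_{S_{\ell k}}^T \Phi_{S_{\ell k}}$) yields $\Expect[D_n(h)] \geq (1/2) f_{\epsilon_{\ell k}}(0)\,\|\Phi_{S_{\ell k}} h\|_2^2 - o(n\|h\|_2^2) \gtrsim n\|h\|_2^2$, uniformly over $\|h\|_2 \leq \gamma$. The Taylor-remainder control requires $\|\Phi_{S_{\ell k}} h\|_\infty \leq C_2$, which follows from A4 by the same sort of calculation used to bound $\Delta_2^j$ in the proof of Theorem \ref{thm:recovery}.

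The linear piece is controlled by the identical Hoeffding-plus-union-bound argument used to bound $\Delta_1^j$ in the proof of Theorem \ref{thm:recovery}: groupwise, $\|\Phi_j^T v_\ell(\epsilon_{\ell k})\|_2 \lesssim \sqrt{m n \log(sm/\delta)}$ with high probability, so by Cauchy--Schwarz across the $s$ active groups, $|h^T \Phi_{S_{\ell k}}^T v_\ell(\epsilon_{\ell k})| \lesssim \|h\|_2 \sqrt{smn\log n}$ on an event of probability at least $1 - \delta/(4dr)$. The group-lasso and ridge penalty differences contribute deterministic slacks of order $\lambda_1 \sqrt{s}\,\|h\|_2$ and $\lambda_2 \|\theta^*_\subs\|_2 \|h\|_2$, respectively, by the reverse triangle inequality applied groupwise and the expansion $\|\vartheta\|_2^2 - \|\theta^*_\subs\|_2^2 \geq -2\|\theta^*_\subs\|_2\|h\|_2$; the ridge slack is negligible under the scaling $\lambda_2 = o(n^{41/42}/\theta^*_{\max})$ imposed in Theorem \ref{thm:recovery}. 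On the sphere $\|h\|_2 = \gamma$, the resulting lower bound
\begin{equation*}
L_n(\vartheta) - L_n(\theta^*_\subs) \;\geq\; \tilde{C} n \gamma^2 \;-\; C \gamma\sqrt{smn\log n} \;-\; \lambda_1 \sqrt{s}\,\gamma \;-\; \text{(lower order)}
\end{equation*}
is strictly positive once $\gamma$ is taken at the scale asserted in the lemma, since then the quadratic term dominates each of the linear-in-$\gamma$ terms.

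The main obstacle is the \emph{uniform} control of the random fluctuation $D_n(h) - \Expect[D_n(h)]$ over $\{h : \|h\|_2 = \gamma\}$, since both the integrand and the upper limit of the defining integral depend on $h$. This requires empirical-process machinery: a bracketing-entropy bound for the class of indicator differences $\{I(\epsilon_{\ell k i} \leq \Phi_{i(S_{\ell k})} h) - I(\epsilon_{\ell k i} \leq 0) : \|h\|_2 \leq \gamma\}$, combined with a chaining argument. Conditions A2--A4 are precisely what is needed to invoke the uniform deviation bound derived at the end of the proof of Lemma~2 in \cite{fan2014adaptive}, which gives $\sup_{\|h\|_2 = \gamma}|D_n(h) - \Expect[D_n(h)]| = o(n\gamma^2)$ with probability at least $1 - \delta/(4dr)$. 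Intersecting this event with the Hoeffding event for the linear piece, and invoking convexity of $L_n$, then delivers the claim on an event of probability at least $1 - \delta/(2dr)$.
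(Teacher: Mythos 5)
Your proof is correct, and it reaches the stated radius by a recognizably different route from the paper's. You argue directly on the empirical objective: Knight's identity splits the quantile-loss difference into a mean-zero linear score term (controlled groupwise by Hoeffding plus Cauchy--Schwarz) and a nonnegative remainder whose expectation is bounded below quadratically via A2--A3, and you then show strict positivity of $J_{\ell k}(\vartheta)-J_{\ell k}(\theta^*_\subs)$ on the sphere $\|\vartheta-\theta^*_\subs\|_2=\gamma$, which by convexity traps the minimizer inside. The paper instead adapts the proof of Theorem 1 of \cite{fan2014adaptive} more literally: it introduces the intermediate point $\tilde\vartheta_\subs=\beta\tilde\theta_\subs+(1-\beta)\theta^*_\subs$ (guaranteed to lie in an auxiliary ball of radius $R$, later set to $1/n$), sandwiches the \emph{expected} excess loss $\E[L_{\ell k}(\tilde\vartheta_\subs)-L_{\ell k}(\theta^*_\subs)]$ between the restricted-strong-convexity lower bound $C_5 n\|\tilde\vartheta_\subs-\theta^*_\subs\|_2^2$ and an upper bound of the form $\sup_{B}|\Delta|+2\lambda_1 sR\sqrt m$ (using $J_{\ell k}(\tilde\vartheta_\subs)\le J_{\ell k}(\theta^*_\subs)$), invokes Lemma 1 of \cite{fan2014adaptive} for $\sup_B|\Delta|\le 6R\sqrt{sn\log n}$, and solves the resulting inequality for the distance. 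Both arguments outsource the genuinely hard uniform empirical-process control to the lemmas of \cite{fan2014adaptive} and both handle the ridge term by the same $\lambda_2\sqrt s\,\theta^*_{\max}$ argument; your version is somewhat more self-contained in that it makes the score/remainder decomposition and the source of the quadratic margin explicit, while the paper's intermediate-point device has the advantage of not requiring a uniform bound over the full target ball $B^*$ at the outset (only over the small auxiliary ball of radius $R$). One small bookkeeping note: your linear score bound forces the margin inequality $n\gamma^2\gtrsim\gamma\sqrt{smn\log n}$, which needs $\gamma\gtrsim\sqrt{sm\log n/n}$ rather than the bare $\sqrt{s\log n/n}$ appearing in the stated radius; this is harmless because the first term $\lambda_1 s\sqrt m/n$ of $\gamma$ already dominates $\sqrt{sm\log n/n}$ under the prescribed scaling of $\lambda_1$, but it is worth saying explicitly.
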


\begin{proof}
We will follow the strategy for the proof of Theorem 1 in \cite{fan2014adaptive} closely.  We begin by considering the ball
\[
B = \{ \vartheta \in \reals^{sm} : \| \vartheta - \theta^*_\subs \|_2 \leq R \}
\]
with center \smash{$\theta^*_\subs$} and radius $R$.  We also introduce some useful notational shorthand, and write the quantile loss term in the restricted problem \eqref{eq:restricted} as
\[
L_{\ell k}( \vartheta ) = \psi_{\alpha_\ell} \left( Y_k - \Phi_{S_{\ell k}} \vartheta \right).
\]
Below, we show that a particular function of $R$ serves as an upper bound for the quantity \smash{$\Expect [ L_{\ell k}( \tilde{\vartheta}_\subs ) - L_{\ell k}( \theta^*_\subs ) ]$}, where the expectation here is taken over draws of the data, and \smash{$\tilde{\vartheta}_\subs$} is a particular point in $B$ that we define in a moment.  This in turn implies, with probability at least $1 - \delta/(2dr)$, that \smash{$\tilde{\theta}_\subs \in B^*$}, as claimed.

\bigskip
First, we define \smash{$\tilde{\vartheta}_\subs$} more precisely: it is a point on the line segment between the solution to the restricted problem \smash{$\tilde{\theta}_\subs$} and the underlying coefficients \smash{$\theta^*_\subs$}, \ie,
\[
\tilde{\vartheta}_\subs = \beta \tilde{\theta}_\subs + (1-\beta) \theta^*_\subs,
\]
for a particular choice
\[
\beta = \frac{R}{R + \| \tilde{\theta}_\subs - \theta^*_\subs \|_2},
\]
which guarantees that \smash{$\tilde{\vartheta}_\subs \in B$} even if \smash{$\tilde{\theta}_\subs \notin B$} , as  we establish next.  Observe that we always have
\begin{align*}
& \| \tilde{\theta}_\subs - \theta^*_\subs \|_2  \leq R + \| \tilde{\theta}_\subs - \theta^*_\subs \|_2 \\
\iff \; & R \frac{\| \tilde{\theta}_\subs - \theta^*_\subs \|_2}{R + \| \tilde{\theta}_\subs - \theta^*_\subs \|_2}  \leq R \\
\iff \; & \beta \| \tilde{\theta}_\subs - \theta^*_\subs \|_2 \leq R \\
\iff \; & \| \beta \tilde{\theta}_\subs - \beta \theta^*_\subs + \theta^*_\subs - \theta^*_\subs \|_2 \leq R \\
\iff \; & \| \tilde{\vartheta}_\subs - \theta^*_\subs \|_2 \leq R,
\end{align*}
as claimed.  The second line here follows by rearranging and multiplying through by $R$; the third by using the definition of $\beta$ above; the fourth by adding and subtracting the underlying coefficients; and the fifth by using the definition of \smash{$\tilde{\vartheta}_\subs$}.

\bigskip
Now, the beginning of the proof of Theorem 1 in \cite{fan2014adaptive} establishes, for any \smash{$\tilde{\vartheta}_\subs \in B$}, for some constant $C_5 > 0$, and using condition A3, that
\begin{equation}
\Expect \left[ L_{\ell k}( \tilde{\vartheta}_\subs ) - L_{\ell k}( \theta^*_\subs ) \right] \geq C_5 n \| \tilde{\vartheta}_\subs - \theta^*_\subs \|_2^2, \label{eq:balls}
\end{equation}
and so, by direct calculation, since
\begin{equation}
\| \tilde{\vartheta}_\subs - \theta^*_\subs \|_2 \leq R \iff \beta \| \tilde{\theta}_\subs - \theta^*_\subs \|_2 \leq R \iff \| \tilde{\theta}_\subs - \theta^*_\subs \|_2 \leq R / 2, \label{eq:balls2}
\end{equation}
it suffices to obtain a suitable upper bound for \smash{$\Expect [ L_{\ell k}( \tilde{\vartheta}_\subs ) - L_{\ell k}( \theta^*_\subs ) ]$}, in order to get the result in the statement of the lemma.  To this end, we introduce one more piece of shorthand, and denote the objective for the restricted problem \eqref{eq:restricted} as $J_{\ell k}(\vartheta)$.

\bigskip
We proceed with the following chain of (in)equalities:
\begin{align}
& \Expect \left[ L_{\ell k}(\tilde{\vartheta}_\subs) - L_{\ell k}(\theta^*_\subs) \right] \notag \\
& = \Expect \left[ L_{\ell k}(\tilde{\vartheta}_\subs) - L_{\ell k}(\theta^*_\subs) \right] + J_{\ell k}(\tilde{\vartheta}_\subs) - J_{\ell k}(\tilde{\vartheta}_\subs) + {} \notag \\
& \hspace{270pt} J_{\ell k}(\theta^*_\subs) - J_{\ell k}(\theta^*_\subs) \label{eq:expand} \\
& = \underbrace{ L_{\ell k}(\theta^*_\subs) - \Expect L_{\ell k}(\theta^*_\subs) - L_{\ell k}(\tilde{\vartheta}_\subs) + \Expect L_{\ell k}(\tilde{\vartheta}_\subs) }_{ \Delta(\theta^*_\subs, \tilde{\vartheta}_\subs) } + {} \notag \\
& \qquad  J_{\ell k}(\tilde{\vartheta}_\subs) - J_{\ell k}(\theta^*_\subs) + 
\lambda_1 \sum_{j \in S_{\ell k}} \| \theta_{\ell k j}^* \|_2 - \lambda_1 \sum_{j \in S_{\ell k}} \| \tilde{\vartheta}_{\ell k j} \|_2 {} \notag \\
& \hspace{210pt}
- (\lambda_2 / 2) \| \tilde{\vartheta}_\subs \|_2^2 + (\lambda_2 / 2) \| \theta^*_\subs \|_2^2 \label{eq:rearr} \\
& \leq \Delta(\theta^*_\subs, \tilde{\vartheta}_\subs) + J_{\ell k}(\tilde{\vartheta}_{\ell k (S_{\ell k})}) - J_{\ell k}(\theta_{\ell k (S_{\ell k})}^*) + \lambda_1 \sum_{j \in S_{\ell k}} \| \theta_{\ell k j}^* - \tilde{\vartheta}_{\ell k j} \|_2 + o(1) \label{eq:trieq} \\
& \leq \Delta(\theta^*_\subs, \tilde{\vartheta}_\subs) + J_{\ell k}(\tilde{\vartheta}_\subs) - J_{\ell k}(\theta_{\ell k (S_{\ell k})}^*) + \lambda_1 s R \sqrt{m} + o(1) \label{eq:ball} \\
& \leq \Delta(\theta^*_\subs, \tilde{\vartheta}_\subs) + 2\lambda_1 s R \sqrt{m} \label{eq:cvxity} \\
& \leq \sup_{\tilde{\vartheta}_\subs \in B} | \Delta(\theta^*_\subs, \tilde{\vartheta}_\subs) | + 2\lambda_1 s R \sqrt{m}. \label{eq:conclusion}
\end{align}
Here, \eqref{eq:expand} follows by adding and subtracting like terms, and \eqref{eq:rearr} by rearranging \eqref{eq:expand}. In \eqref{eq:trieq} we use the triangle inequality and the following argument to show that the terms involving $\lambda_2$ are $o(1)$.  Under the assumption that \smash{$\lambda_2=o(n^{41/42} / \theta_{\max}^*)$}, combined with the restriction that $s=o(n^{1/21})$, we have \smash{$\lambda_2=o(n/(\sqrt{s}\theta_{\max}^*))$}. Therefore, under our choice of $R=1/n$ (as specified below), we have 
\begin{equation*}
\lambda_2 \sqrt{s}\theta_{\max}^* R \to 0.
\end{equation*}
This in turn is used to argue that
\begin{align*}
- (\lambda_2 / 2) \| \tilde{\vartheta}_\subs \|_2^2 + (\lambda_2 / 2) \| \theta^*_\subs \|_2^2 &= 
(\lambda_2/2) \| \tilde{\vartheta}_\subs  - \theta^*_\subs \|_2^2 {} \\
& \hspace{85pt} - \lambda_2 \| \tilde{\vartheta}_\subs \|_2^2 + \lambda_2 \tilde{\vartheta}_\subs^T \theta^*_\subs \\
&\leq (\lambda_2/2) R^2 - \lambda_2 \| \tilde{\vartheta}_\subs \|_2 ( \| \tilde{\vartheta}_\subs \|_2 - \| \theta^*_\subs \|_2) \\
& \leq (\lambda_2/2) R^2 + \lambda_2 \| \tilde{\vartheta}_\subs \|_2 R \\
& \leq (\lambda_2/2) R^2 + \lambda_2 \| \theta^*_\subs \|_2 R \\
&\leq  (\lambda_2/2) R^2 + \lambda_2 \sqrt{s} \theta^*_{\max} R  \to 0.
\end{align*}
In the second to last line, we have applied \smash{$\| \tilde{\vartheta}_\subs \|_2 \leq \| \theta^*_\subs \|_2$}, as, outside of this case, the term in question \smash{$- (\lambda_2 / 2) \| \tilde{\vartheta}_\subs \|_2^2 + (\lambda_2 / 2) \| \theta^*_\subs \|_2^2$} would be negative, anyway.

Continuing on, \eqref{eq:ball} holds because \smash{$\| \theta^*_\subs - \tilde{\vartheta}_\subs \|_2 \leq R$} implies \smash{$\| \theta^*_{\ell k j} - \tilde{\vartheta}_{\ell k j} \|_2 \leq R$}. Finally, \eqref{eq:cvxity} follows because of the following argument.  Since $J_{\ell k}$ is convex, we can use the definition of \smash{$\tilde{\vartheta}_\subs$} and get
\[
J_{\ell k}(\tilde{\vartheta}_\subs) \leq \beta J_{\ell k}(\tilde{\theta}_\subs) + (1-\beta) J_{\ell k}(\theta^*_\subs) = J_{\ell k}(\theta^*_\subs) + \beta ( J_{\ell k}(\tilde{\theta}^*_\subs) - J_{\ell k}(\theta^*_\subs) );
\]
notice that the last term here is nonpositive, since \smash{$\tilde{\theta}_\subs$} is the solution to the restricted problem \eqref{eq:restricted}, and thus we have that
\[
J_{\ell k}(\tilde{\vartheta}_\subs) \leq J_{\ell k}(\theta^*_{\ell k (S_{\ell k})}),
\]
which lets us move from \eqref{eq:ball} to \eqref{eq:cvxity}.

\bigskip
Lemma 1 in \cite{fan2014adaptive} states, with probability at least $1-\delta$, where $\delta = \exp(-C_6 s \log n)$ and $C_6 > 0$ is some constant, that
\[
\sup_{\tilde{\vartheta}_\subs \in B} | \Delta(\theta^*_\subs, \tilde{\vartheta}_\subs) | \leq 6 R \sqrt{ s n \log n },
\]
so from \eqref{eq:conclusion}, with probability at least $1-\delta$, we see that
\[
\Expect \left[ L_{\ell k}(\tilde{\vartheta}_\subs) - L_{\ell k}(\theta^*_\subs) \right] \leq 6 R \sqrt{s n \log n} + 2 \lambda_1 s R \sqrt{m}
\]
and, using \eqref{eq:balls}, that
\[
n \| \tilde{\vartheta}_\subs - \theta^*_\subs \|_2^2 \leq C' \left( R \sqrt{s n \log n} + \lambda_1 s R \sqrt{m} \right),
\]
for some constant $C' > 0$.

\bigskip
Plugging in $R = 1/n$, dividing through by $n$, and using the fact that the square root function is subadditive, we get, with probability at least $1-\delta$, that
\begin{align*}
\| \tilde{\vartheta}_\subs - \theta^*_\subs \|_2 & \leq C' \left( \frac{(s \log n)^{1/4}}{n^{3/4}} + \frac{(\lambda_1 s)^{1/2} m^{1/4}}{n} \right) \\
& \leq C' \left( \sqrt{\frac{s \log n}{n}} + \frac{\lambda_1 s \sqrt{m}}{n} \right).
\end{align*}

Finally, we complete the proof by applying \eqref{eq:balls2}, in order to get that
\[
\| \tilde{\theta}_\subs - \theta^*_\subs \|_2 \leq \gamma,
\]
where we have defined
\[
\gamma = \ourgamma,
\]
and $C > 0$ is some constant, with probability at least $1-\delta/(2dr)$, for large enough $n$.
\end{proof}

\section{Proof of Lemma \ref{lem:prox}}
\label{supp:prox}

The prox operator \smash{$\prox_{\lambda \psi_{\mathcal{A}}}(A)$} is separable in the entries of its minimizer $X$, so we focus on minimizing over $X_{ij}$ the expression 
\begin{align}
& \max \{ \alpha_j X_{ij}, (\alpha_j - 1) X_{ij} \} + (1 / (2 \lambda)) \left( X_{ij} - A_{ij} \right)^2 \notag \\
& \quad = \alpha_j \max \{ 0, X_{ij} \} + (1 - \alpha_j) \max \{ 0, - X_{ij} \} + (1 / (2 \lambda)) \left( X_{ij} - A_{ij} \right)^2. \label{eq:prox} 
\end{align}
Suppose $X_{ij} > 0$.  Then differentiating \eqref{eq:prox} gives $X_{ij} = A_{ij} - \lambda \alpha_j$ and the sufficient condition $A_{ij} > \lambda \alpha_j$.  Similarly, assuming $X_{ij} < 0$ gives $ X_{ij} = A_{ij} + \lambda (1 - \alpha_j)$ when $A_{ij} < \lambda (\alpha_j - 1)$.  Otherwise, we can take $X_{ij} = 0$.  Putting these cases together gives the result.
\hfill\qedsymbol


\section{ADMM for the \mqgm}
\label{sec:admm}

A complete description of our ADMM-based algorithm for fitting the \mqgm~to data is given in Algorithm \ref{alg:admm}.

\begin{algorithm}
\caption{ADMM for the \mqgm}
\label{alg:admm}
\begin{algorithmic}
	\STATE {\bf Input:} observations \smash{$y^{(1)},\ldots,y^{(n)}
		\in \reals^d$}, feature matrix $\Phi \in \reals^{n \times dm}$,
	quantile levels $\mathcal{A}$, constants $\lambda_1,\lambda_2>0$
	\STATE {\bf Output:} fitted coefficients 
	\smash{$\hat\Theta = (\hat\theta_{\ell kj},
		\hat b_{\ell k})$} 
	\FOR{$k=1,\ldots,d$ (in parallel, if possible)}
	\STATE {\bf initialize} $\Theta_k, B_k, V, W, Z, U_V,
	U_W, U_Z$ 
	\REPEAT
	
	\STATE update $\Theta_k$ using \eqref{eq:theta}
	\STATE update $B_k$ using \eqref{eq:theta}
	\STATE update $V$ using \eqref{eq:V}
	\STATE update $W$ using \eqref{eq:W}
	\STATE update $Z$ using \eqref{eq:Z} and Lemma \ref{lem:prox}
	\STATE update $U_V, U_W, U_Z$:
	\begin{align*}
	U_V & \leftarrow U_V + (\mathbf{1} B_k^T + 
	\Phi_k \Theta - V) \\ 
	U_W & \leftarrow U_W + (\Theta_k - W) \\
	U_Z & \leftarrow U_Z + (Y_k\mathbf{1}^T -
	\mathbf{1} B_k^T - \Phi_k \Theta - Z) 
	\end{align*}
	\UNTIL converged
	\ENDFOR
\end{algorithmic}
\end{algorithm}

\section{Additional details on Gibbs sampling}

\label{sec:gibbs:more}
In the \mqgm, there is no analytic solution
for parameters like the mean, median, or quantiles of these marginal
and conditional distributions, but the pseudolikelihood approximation
makes for a very efficient Gibbs sampling procedure, which we
highlight in this section.  As it is relevant to the computational
aspects of the approach, in this subsection we will make explicit the
conditional random field, where $y_k$ depends on both $y_{\neg k}$ and
fixed input features $x$. 

First, note that since we are representing the distribution of $y_k |
y_{\neg k},x$ via its inverse CDF, to sample from from this
conditional distribution we can simply generate a random $\alpha \sim
\textrm{Uniform}(0,1)$. We then compute
\begin{equation*}
\begin{split}
\hat{Q}_{y_k | y_{\neg k}}(\alpha_\ell) & = \phi(y)^T \theta_{\ell k } + x^T \theta^x_{\ell k } \\
\hat{Q}_{y_k | y_{\neg k}}(\alpha_{\ell+1}) & = \phi(y)^T \theta_{(\ell+1) k } + x^T \theta^x_{(\ell+1) k }
\end{split}
\end{equation*}
\vspace{-0.1in}
for some pair $\alpha_\ell \leq \alpha \leq \alpha_{\ell+1}$ and set $y_k$ to be a linear interpolation of the two values
\begin{equation*}
y_k \leftarrow \hat{Q}_{y_k | y_{\neg k}}(\alpha_\ell) + \frac{\left(\hat{Q}_{y_k | y_{\neg k}}(\alpha_{\ell+1}) - \hat{Q}_{y_k | y_{\neg k}}(\alpha_\ell) \right)(\alpha - \alpha_\ell)}{{\alpha_{\ell+1}} - \alpha_\ell}. 
\end{equation*}

This highlights the desirability of having a range of non-uniformly spaced $\alpha$ terms that reach values close to zero and one as otherwise we may not be able to find a pair of $\alpha$'s that lower and upper bound our random sample $\alpha$.  However, in the case that we model a sufficient quantity of $\alpha$, a reasonable approximation (albeit one that will not sample from the extreme tails) is also simply to pick a random $\alpha_\ell \in \mathcal{A}$ and use just the corresponding column $\theta_{\ell k}$ to generate the random sample.

Computationally, there are a few simple but key points involved in
making the sampling efficient.  First, when sampling from a
conditional distribution, we can precompute $x^T \Theta^x_k$ for each
$k$, and use these terms as a constant offset.  Second, we maintain a
``running'' feature vector $\phi(y) \in \reals^{dm}$, \ie, the
concatenation of features corresponding to each coordinate
$\phi(y_k)$. Each time we sample a new coordinate $y_k$, we generate
just the new features in the $\phi(y_k)$ block, leaving the remaining
features untouched.  Finally, since the $\Theta_k$ terms are sparse,
the inner product $\phi(y)^T \theta_{\ell k}$ will only contain
a few nonzeros terms in the sum, and will be computed more efficiently
if the $\Theta_k$ are stored as a sparse matrices.

\section{Additional details on the evaluation of fitted conditional CDFs}

Here, we elaborate on the evaluation of each method's conditional CDFs that we first presented in Section \ref{sec:synth}.  For simplicity, we describe everything that follows in terms of the conditional CDF $y_1 | y_2$ only, with everything being extended in the obvious way to other conditionals.  (We omit the \nonparas~from our evaluation as it is not clear how to sample from its conditionals, due to the nature of a particular transformation that it uses.)

\bigskip
First, we carried out the following steps in order to compute the true (empirical) conditional CDF.
\begin{enumerate}
\item We drew $n = 400$ samples from the ring distribution, by following the procedure described in Section \ref{sec:synth}; these observations are plotted across the top row of Figure \ref{fig:calib}. 
\item We then partitioned the $y_2$ samples into five equally-sized bins, and computed the true empirical conditional CDF of $y_1$ given each bin of $y_2$ values.
\end{enumerate}

\bigskip
Next, we carried out the following steps in order to compute the estimated (empirical) conditional CDFs, for each method.
\begin{enumerate}
\setcounter{enumi}{2}
\item We fitted each method to the samples obtained in step (1) above.
\item Then, for each method, we drew a sample of $y_1$ given {\it each} $y_2$ sample, using the method's conditional distribution; these conditionals are plotted across the second through fifth rows of Figure \ref{fig:calib} (for representative values of $\lambda_1$).

Operationally, we drew samples from each method's conditionals in the following ways.
\begin{itemize}
\item \mqgm: we used the Gibbs sampler described in Section \ref{sec:gibbs:more}.
\item \mb: we drew $y_1 \sim \mathcal{N}( \hat \theta_1^T y_2^{(i)}, \hat \sigma_{1|2}^2 )$, where $\hat \theta_1$ is the fitted lasso regression coefficient of $y_1$ on $y_2$; $y_2^{(i)}$ for $i=1,\ldots,n$ is the $i$th observation of $y_2$ obtained in step (1) above; and $\hat \sigma_{1|2}^2 = \var( Y_1 - Y_2 \hat \theta_1 )$ denotes the sample variance of the underlying error term $Y_1 - Y_2 \hat \theta_1$ with $Y_i = ( y_i^{(1)}, \ldots, y_i^{(n)} ) \in \reals^n$ collecting all observations along variable $i$.
\item \spacejam: we drew $y_1 \sim \mathcal{N}( \hat \theta_1^T \phi(y_2^{(i)}), \hat \sigma_{1|2}^2 )$, where $\phi$ is a suitable basis function, and $\hat \theta_1$ as well as $\hat \sigma^2_{1|2}$ are defined in ways analogous to the neighborhood selection setup.
\item \glasso: we drew $y_1 \sim \mathcal{N}( \hat \mu_{1|2}, \hat \sigma^2_{1|2} )$, where
\begin{align*}
\hat \mu_{1|2} & = \hat \mu_1 + \hat \Sigma_{12} \hat \Sigma_{22}^{-1} ( y_2^{(i)} - \hat \mu_2 ) \\
\hat \sigma^2_{1|2} & = \hat \Sigma_{11} - \hat \Sigma_{12} \hat \Sigma_{22}^{-1} \hat \Sigma_{21}
\end{align*}
with $\hat \mu_i$ denoting the sample mean of $Y_i$, and $\hat \Sigma$ denoting the estimate of the covariance matrix given by \glasso~(subscripts select blocks of this matrix).
\end{itemize}

\item Finally, we partitioned the $y_2$ samples into five equally-sized bins (just as when computing the true conditional CDF), and computed the estimated empirical conditional CDF of $y_1$ given each bin of $y_2$ values.
\end{enumerate}

\bigskip
Having computed the true as well as estimated conditional CDFs, we measured the goodness of fit of each method's conditional CDFs to the true conditional CDFs, by computing the total variation (TV) distance, \ie, 
\[
(1/2) \sum_{i=1}^q \left| \hat F^{\textrm{method}_j}_{y_1 | y_2}( z^{(i)} | \zeta ) - \hat F^{\textrm{true}}_{y_1 | y_2}( z^{(i)} | \zeta ) \right|,
\]
as well as the (scaled) Kolmogorov-Smirnoff (KS) statistic, \ie,
\[
\max_{i=1,\ldots,q} \left| \hat F^{\textrm{method}_j}_{y_1 | y_2}( z^{(i)} | \zeta ) - \hat F^{\textrm{true}}_{y_1 | y_2}( z^{(i)} | \zeta ) \right|.
\]
Here, \smash{$\hat F^{\textrm{true}}_{y_1 | y_2}( z^{(i)} | \zeta )$} is the true empirical conditional CDF of $y_1 | y_2$, evaluated at $y_1 = z^{(i)}$ and given $y_2 = \zeta$, and \smash{$\hat F^{\textrm{method}_j}_{y_1 | y_2}( z^{(i)} | \zeta )$} is a particular method's (``$\textrm{method}_j$'' above) estimated empirical conditional CDF, evaluated at $y_1 = z^{(i)}$ and given $y_2 = \zeta$.  For each method, we averaged these TV and KS values across the method's conditional CDFs.  Table \ref{tab:calib} reports the best (across a range of tuning parameters) of these averaged TV and KS values.

\begin{figure}[p]
\centering
\includegraphics[width=0.32\columnwidth]{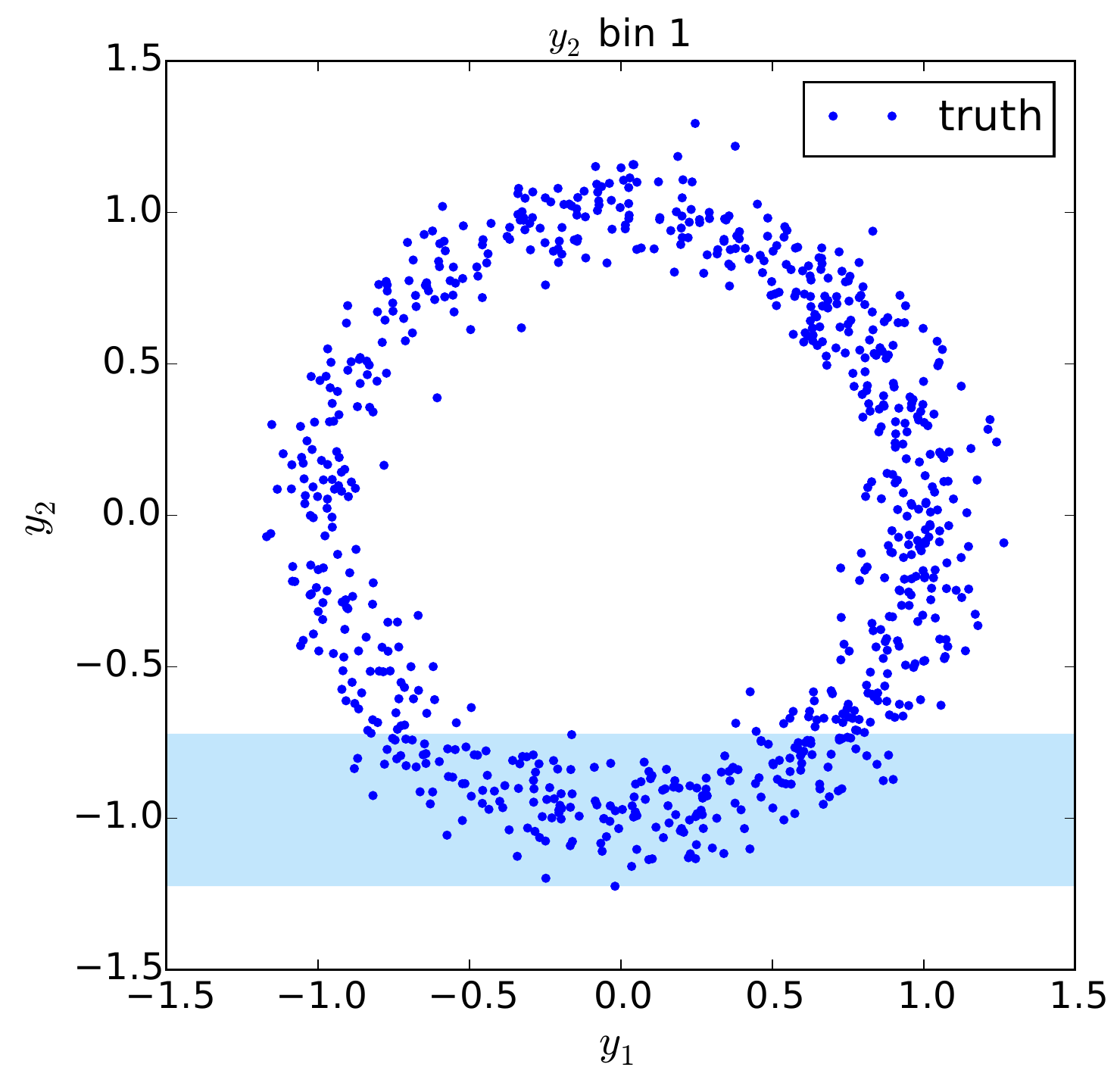}
\includegraphics[width=0.32\columnwidth]{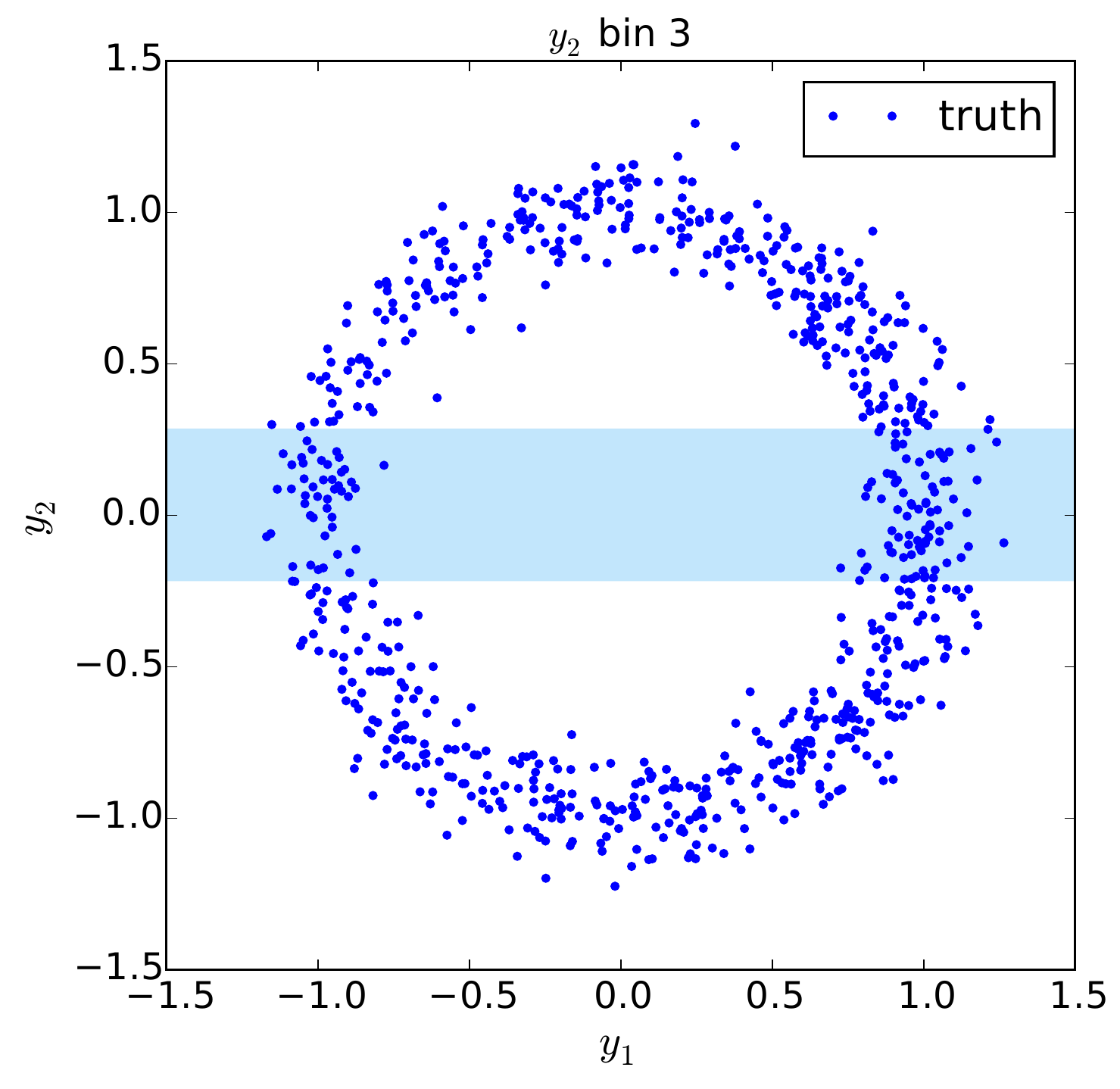}
\includegraphics[width=0.32\columnwidth]{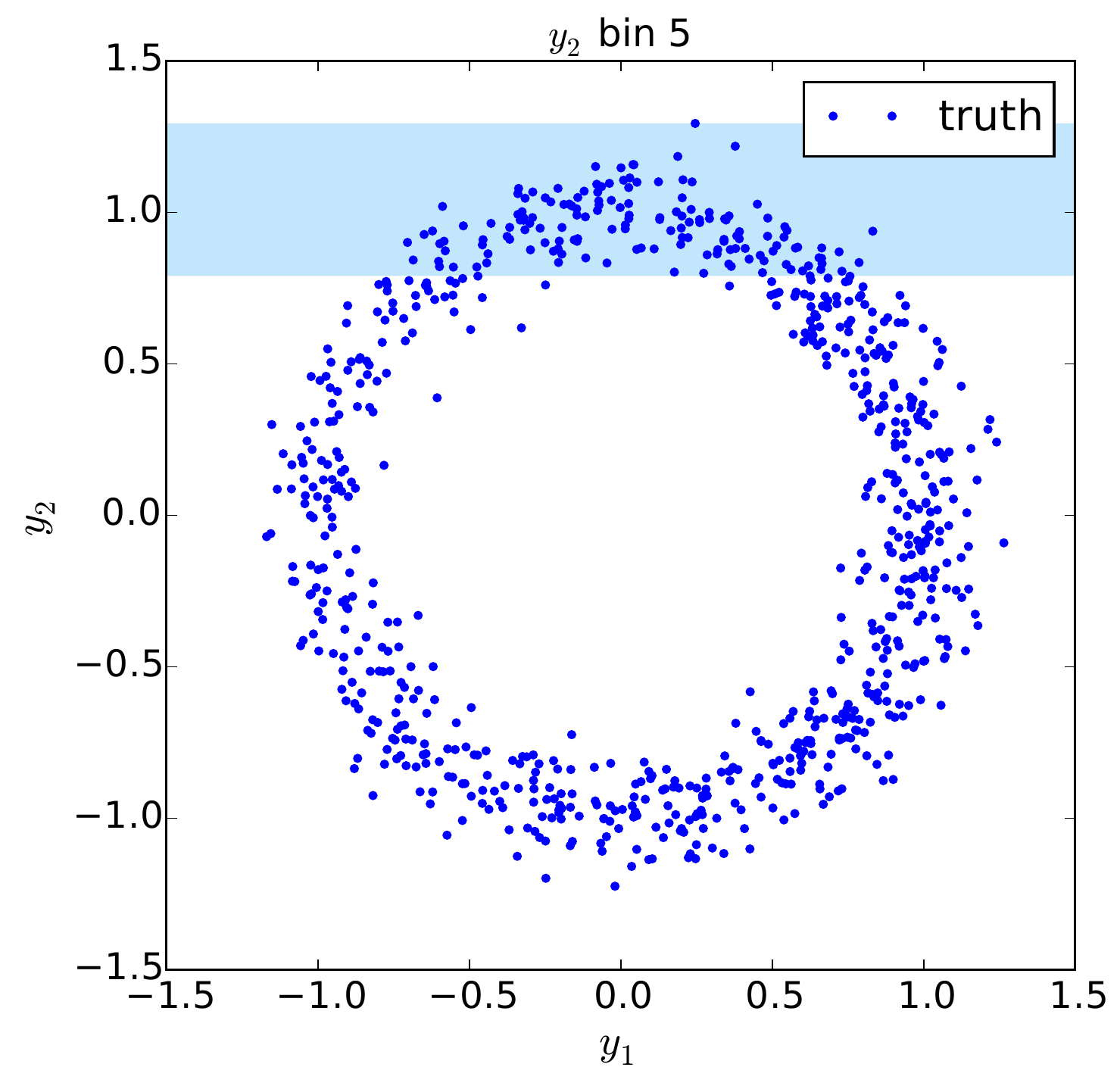}\\
\includegraphics[width=0.32\columnwidth]{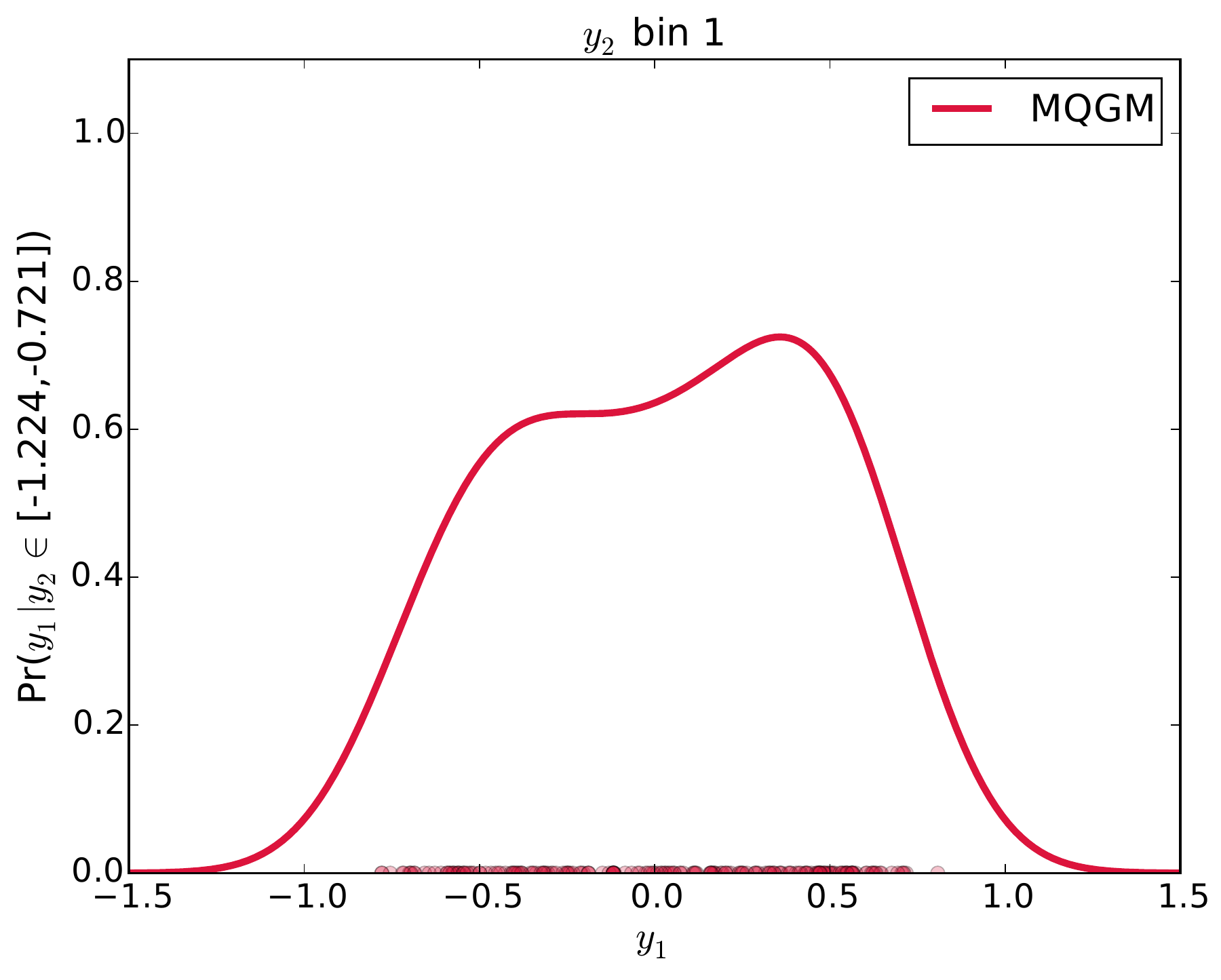}
\includegraphics[width=0.32\columnwidth]{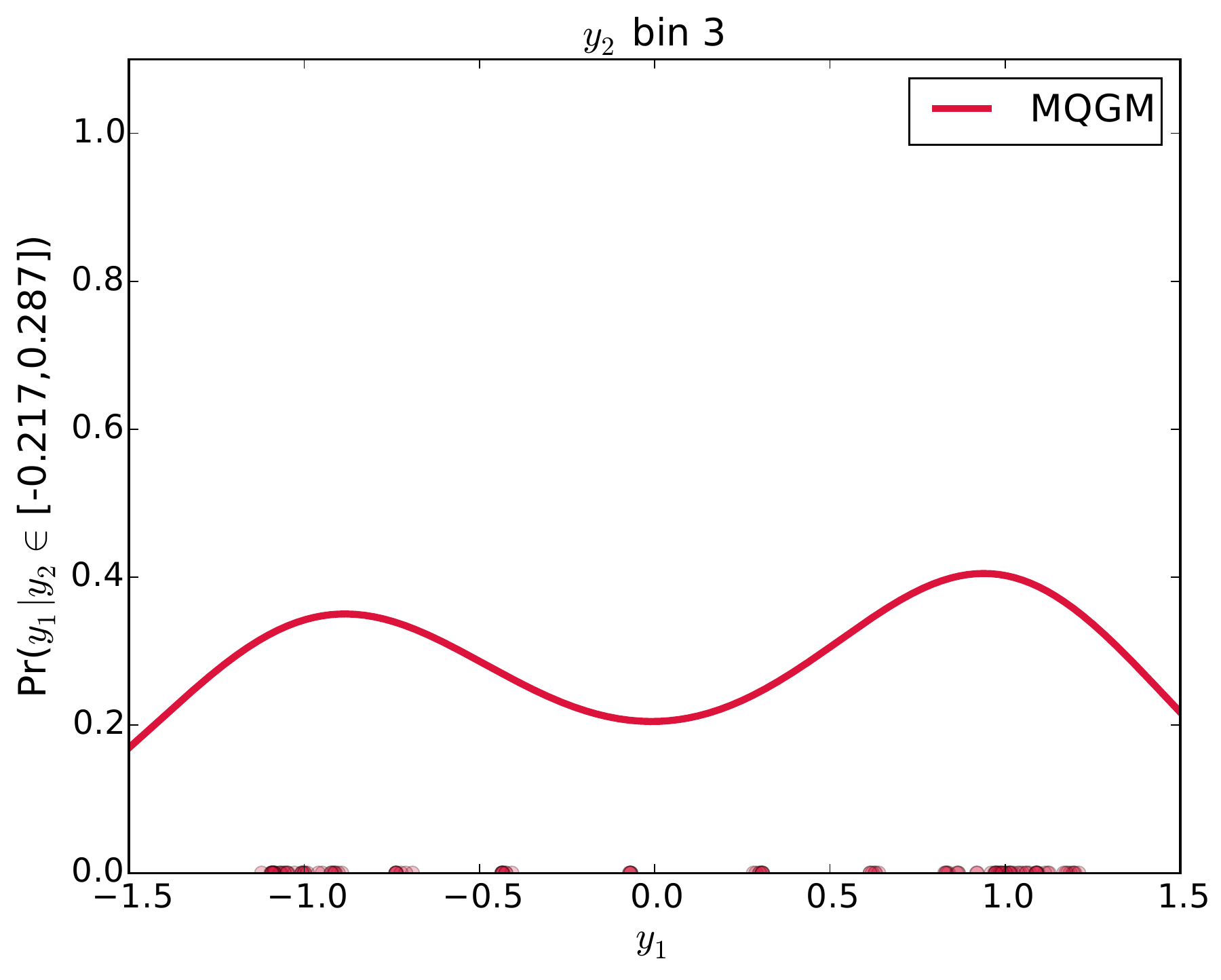}
\includegraphics[width=0.32\columnwidth]{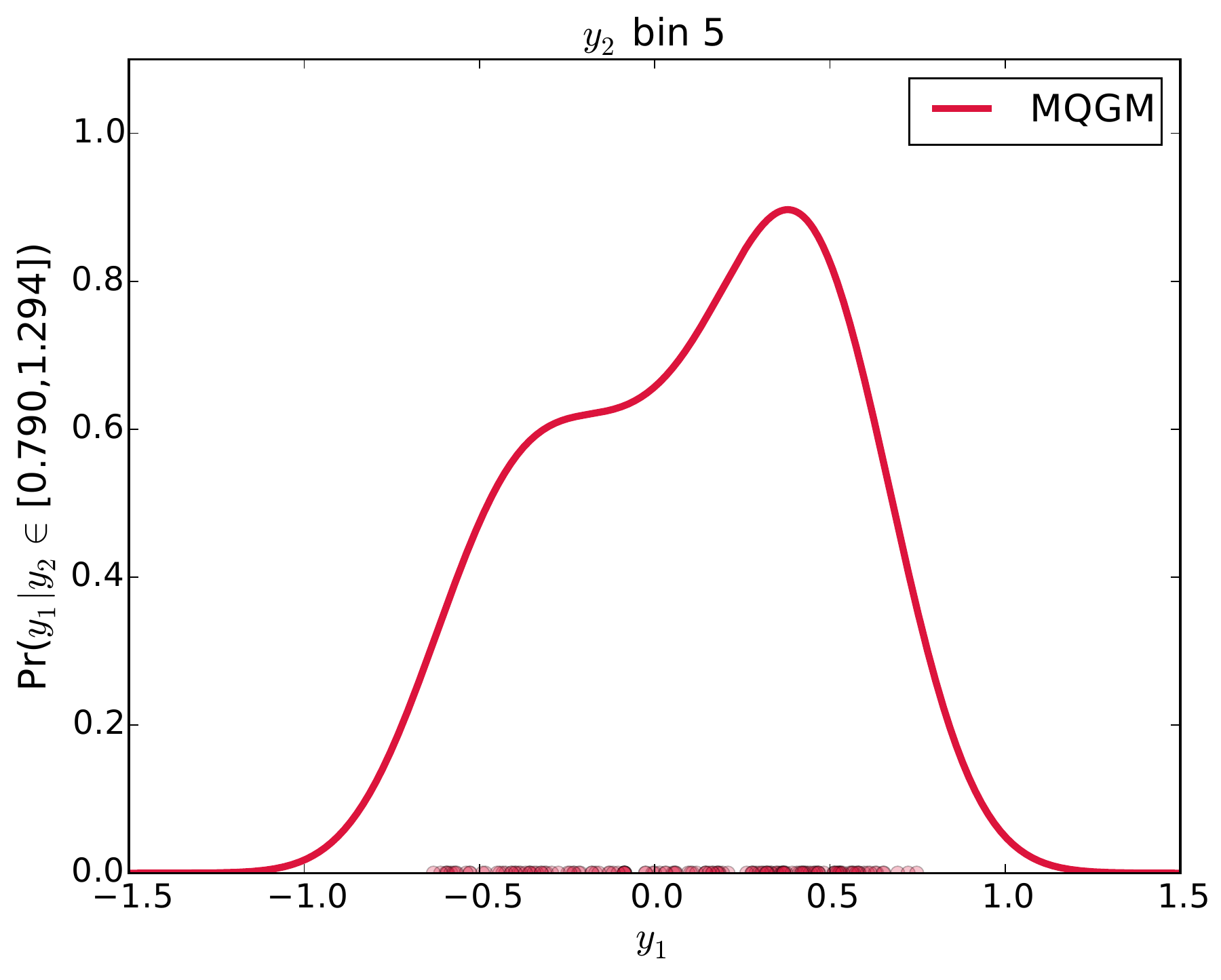}\\
\includegraphics[width=0.32\columnwidth]{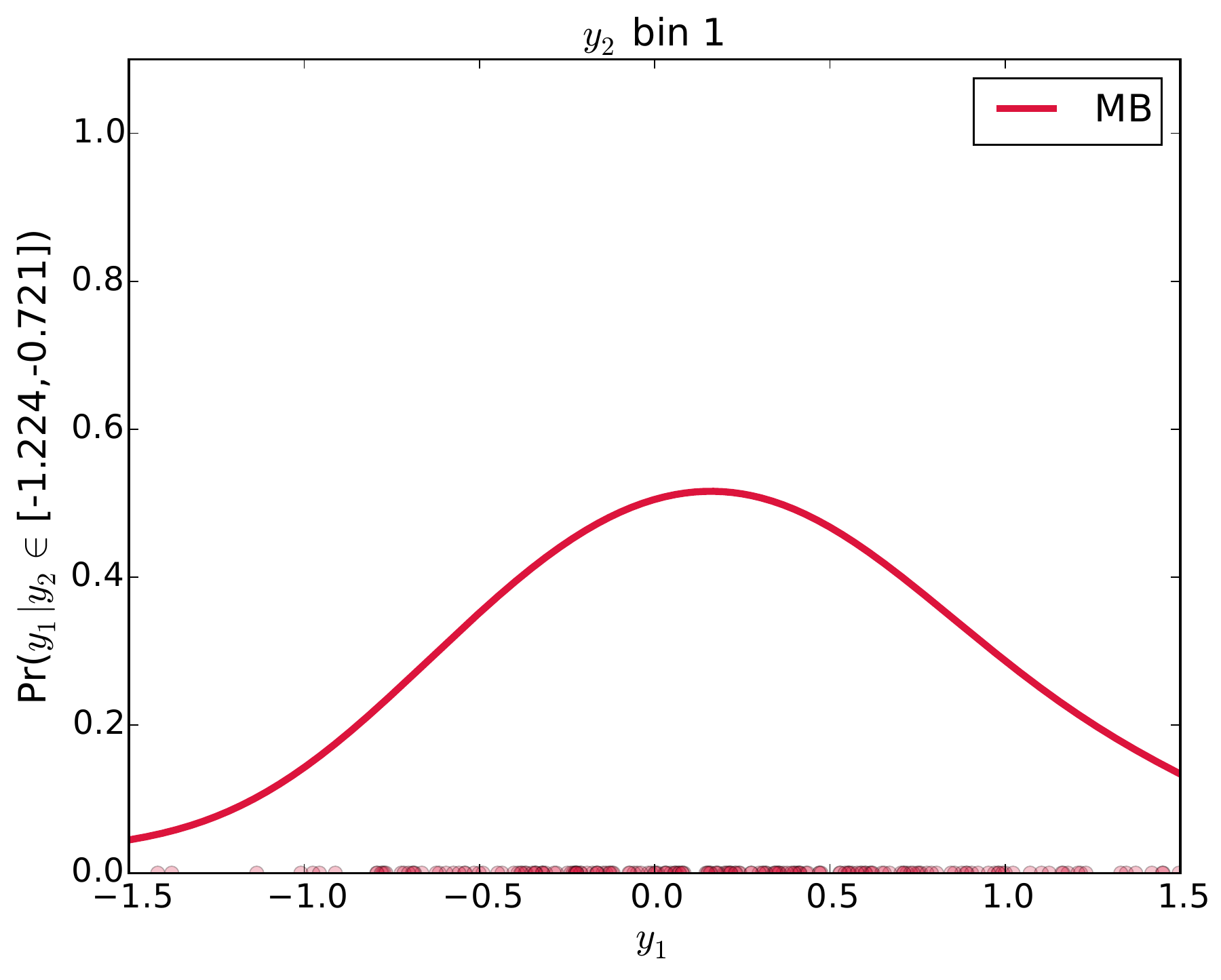}
\includegraphics[width=0.32\columnwidth]{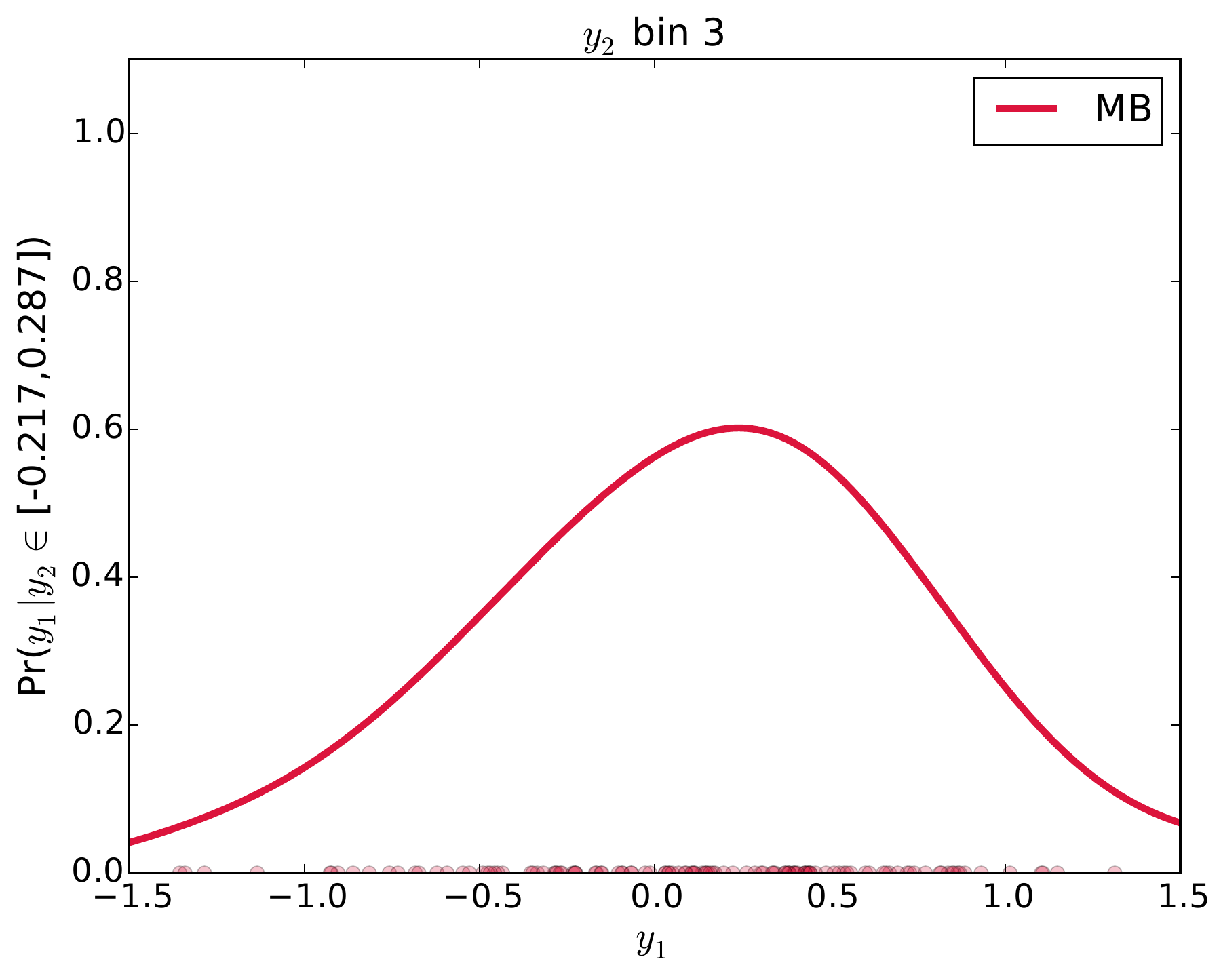}
\includegraphics[width=0.32\columnwidth]{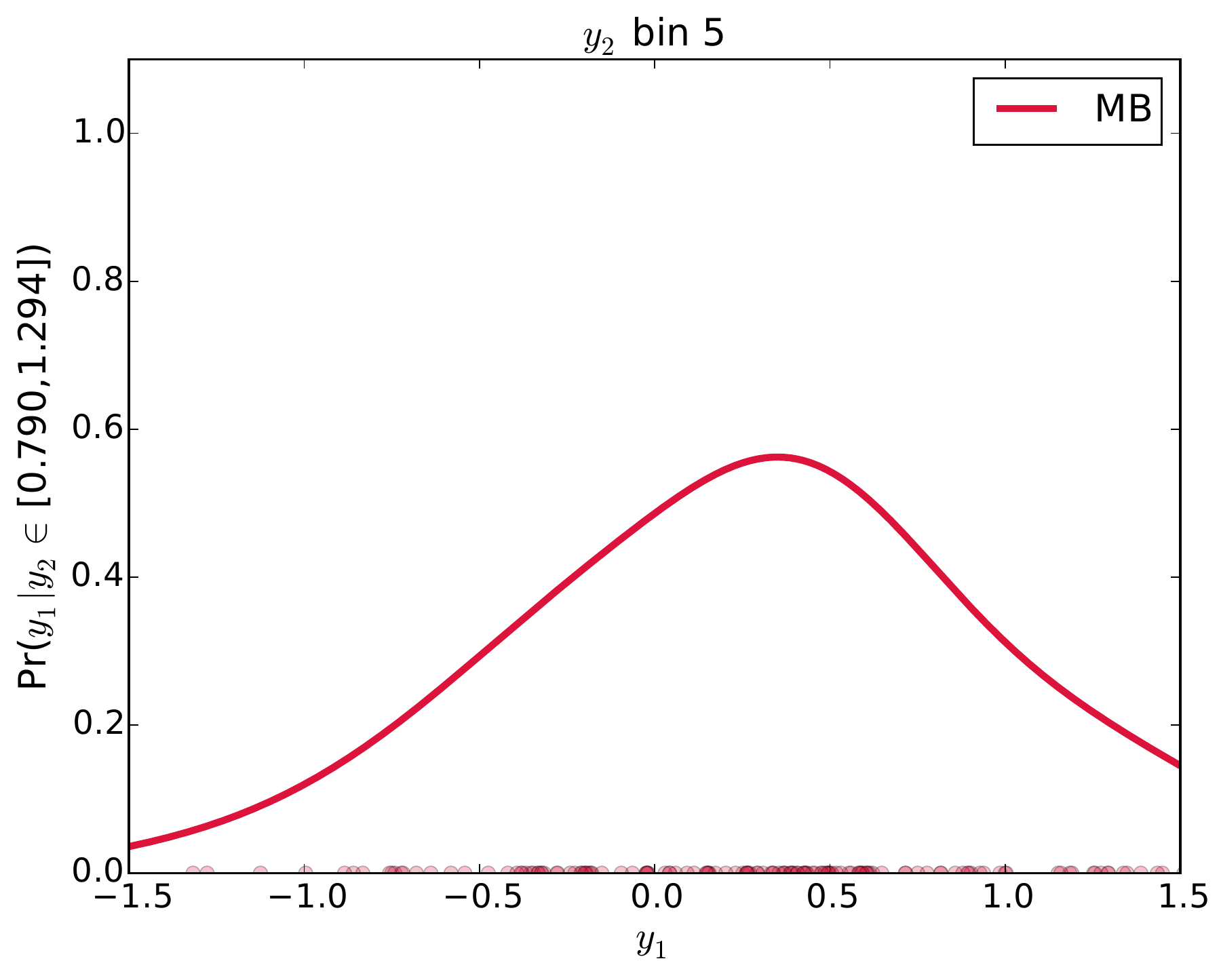} \\
\includegraphics[width=0.32\columnwidth]{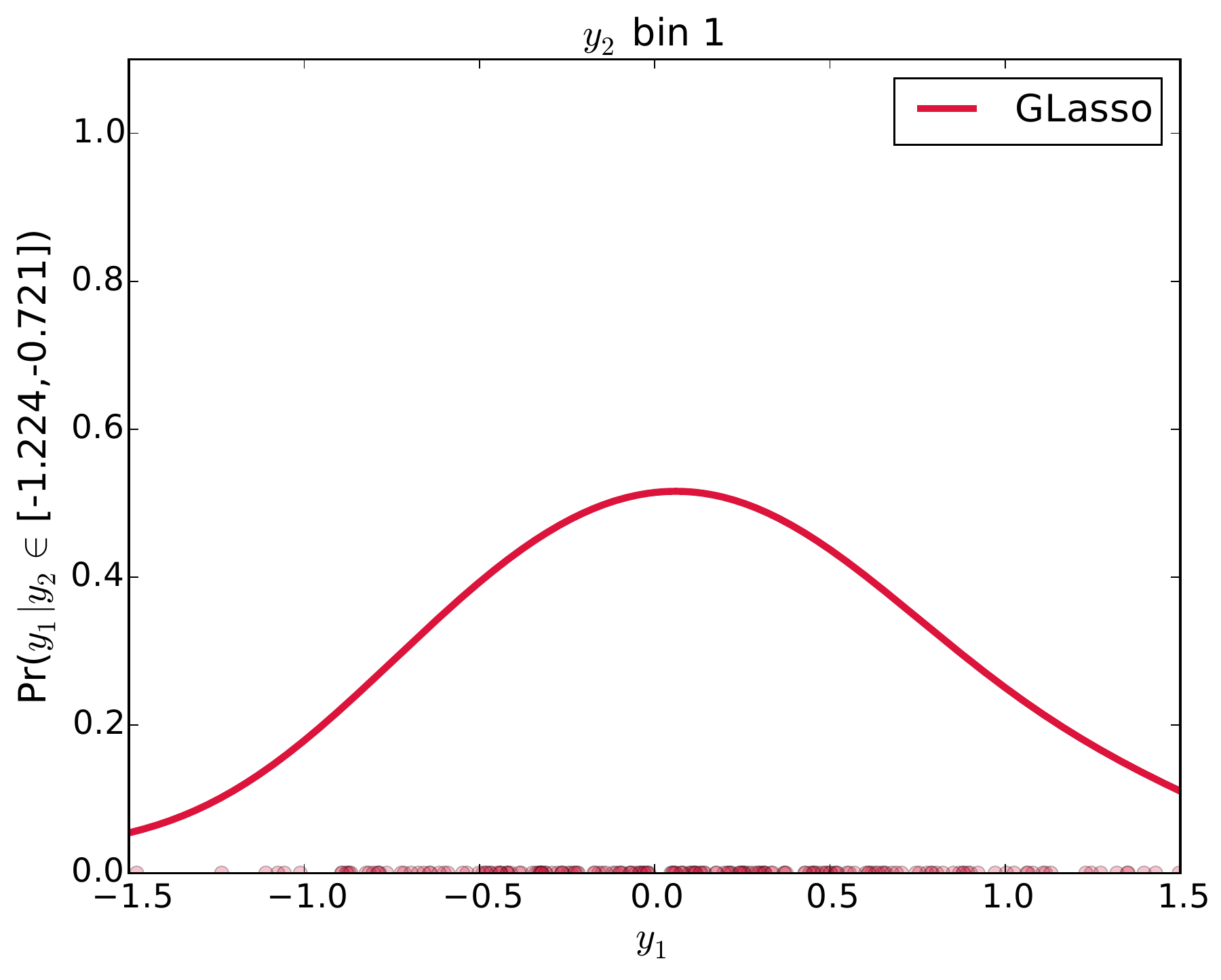}
\includegraphics[width=0.32\columnwidth]{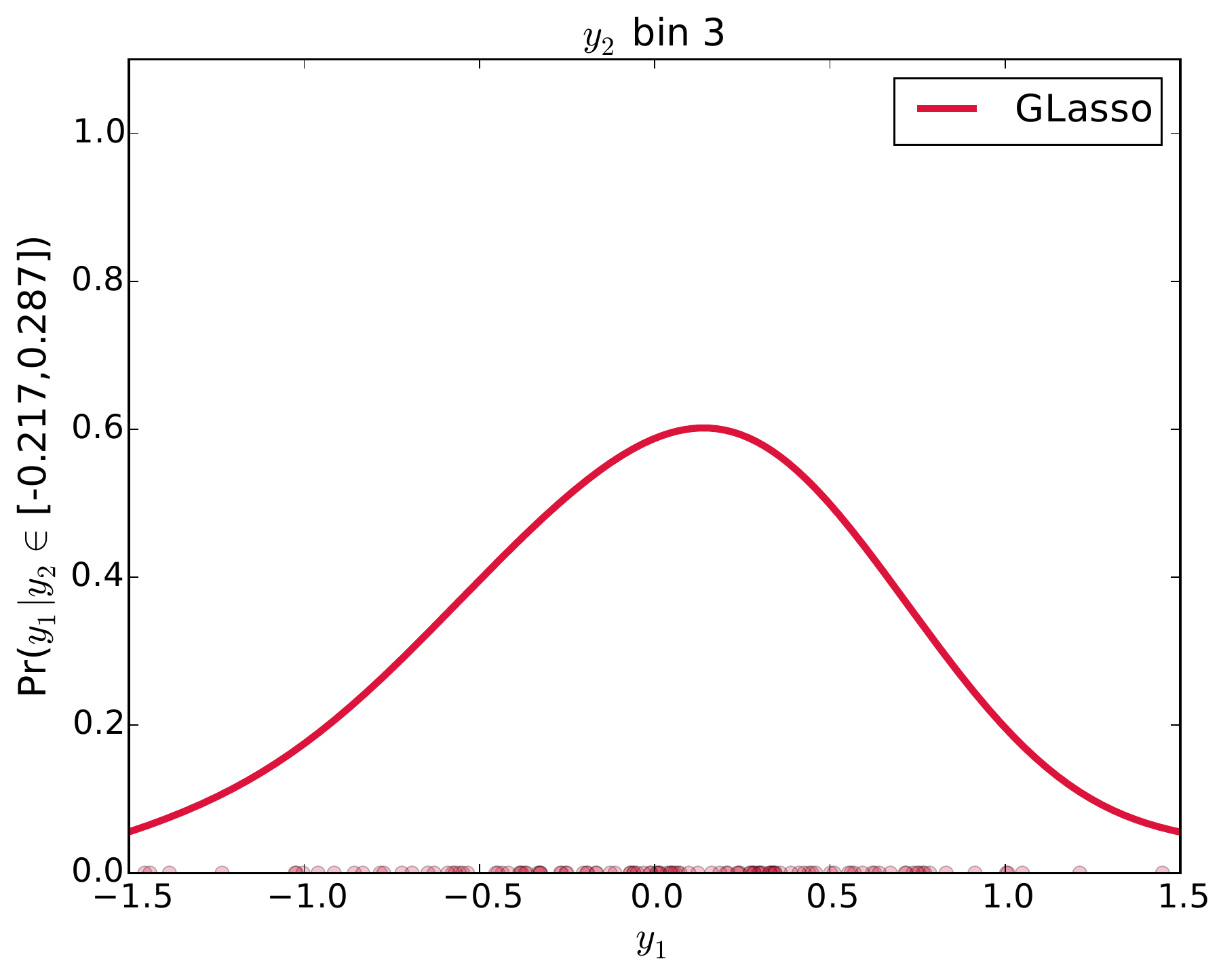}
\includegraphics[width=0.32\columnwidth]{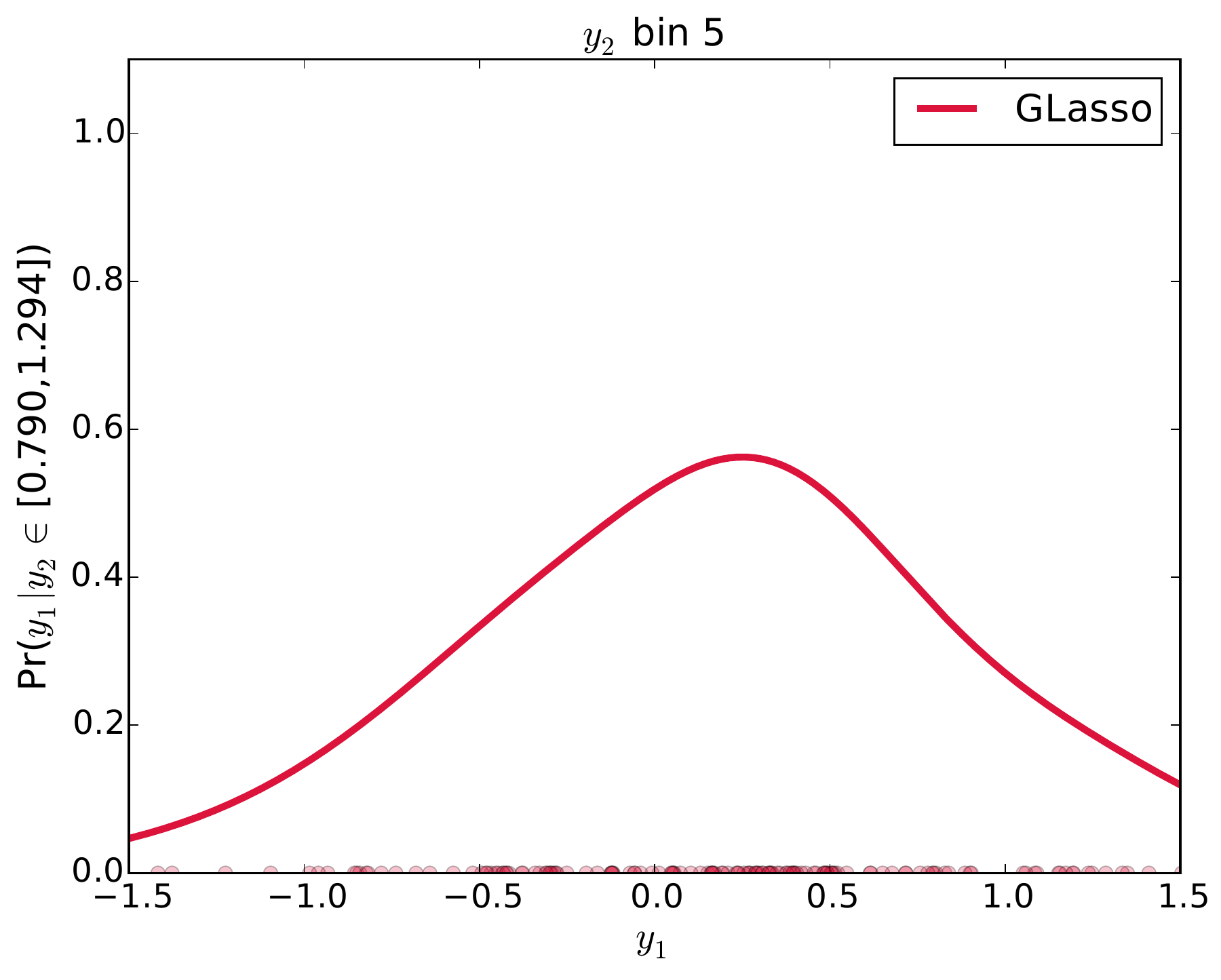} \\
\includegraphics[width=0.32\columnwidth]{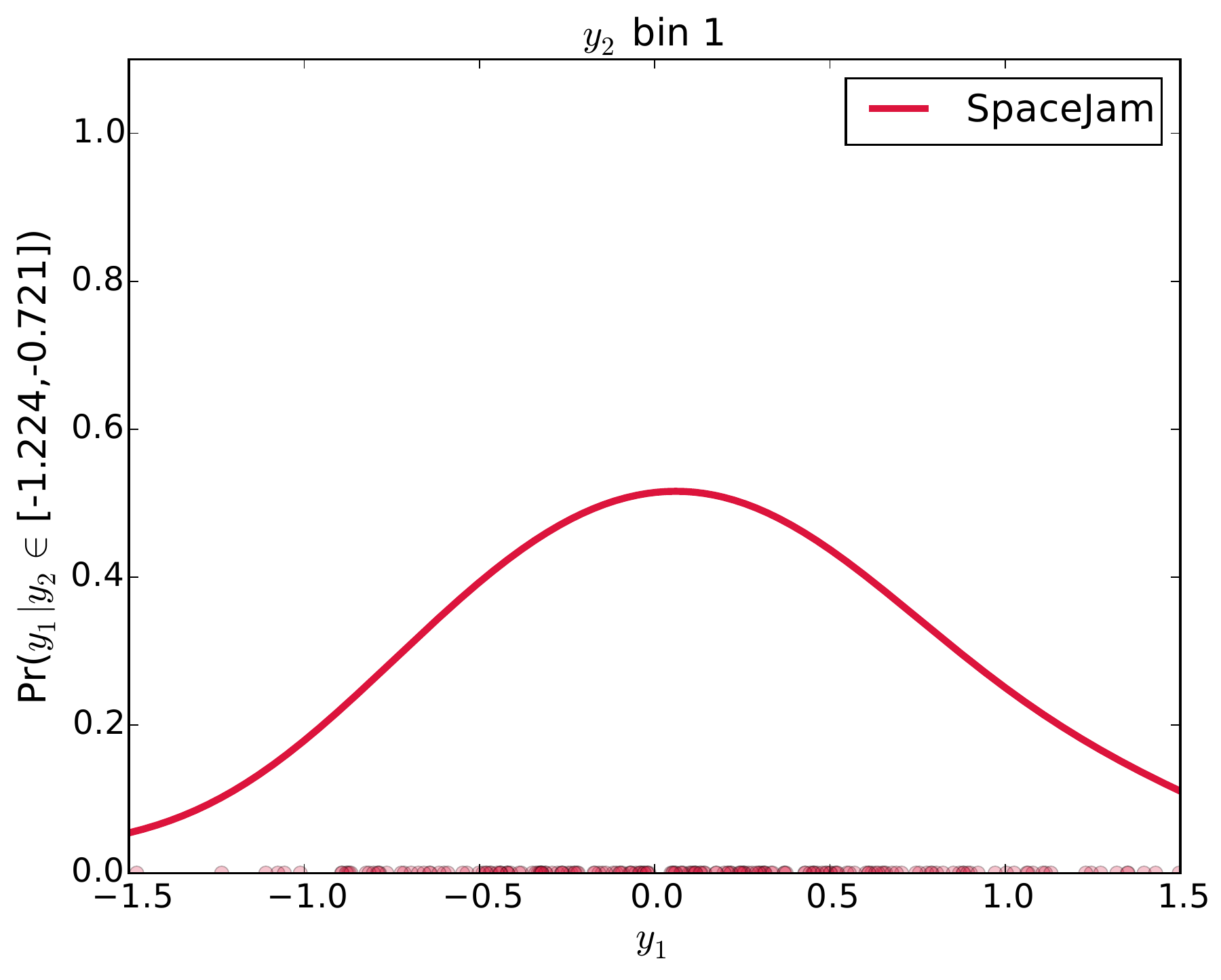}
\includegraphics[width=0.32\columnwidth]{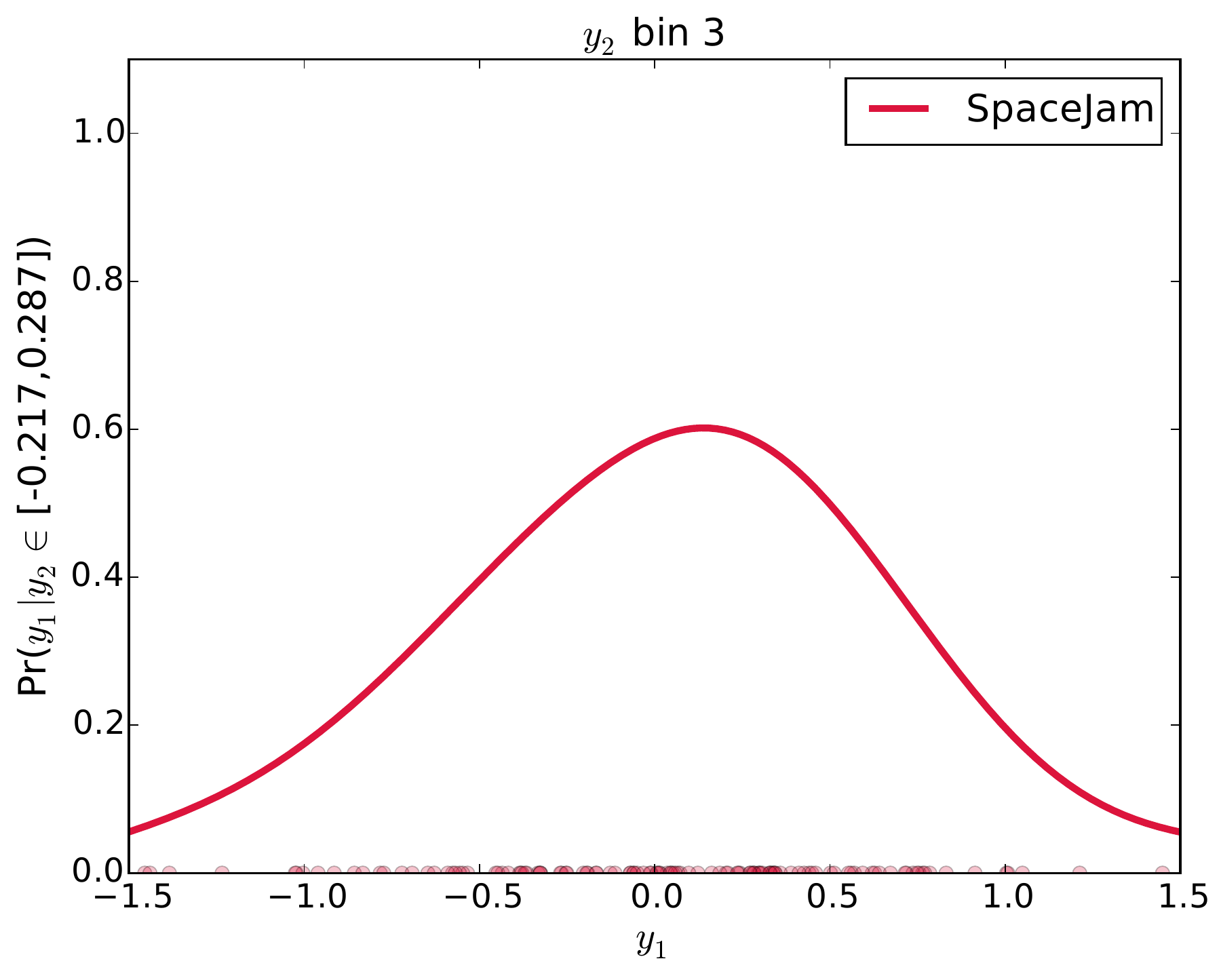}
\includegraphics[width=0.32\columnwidth]{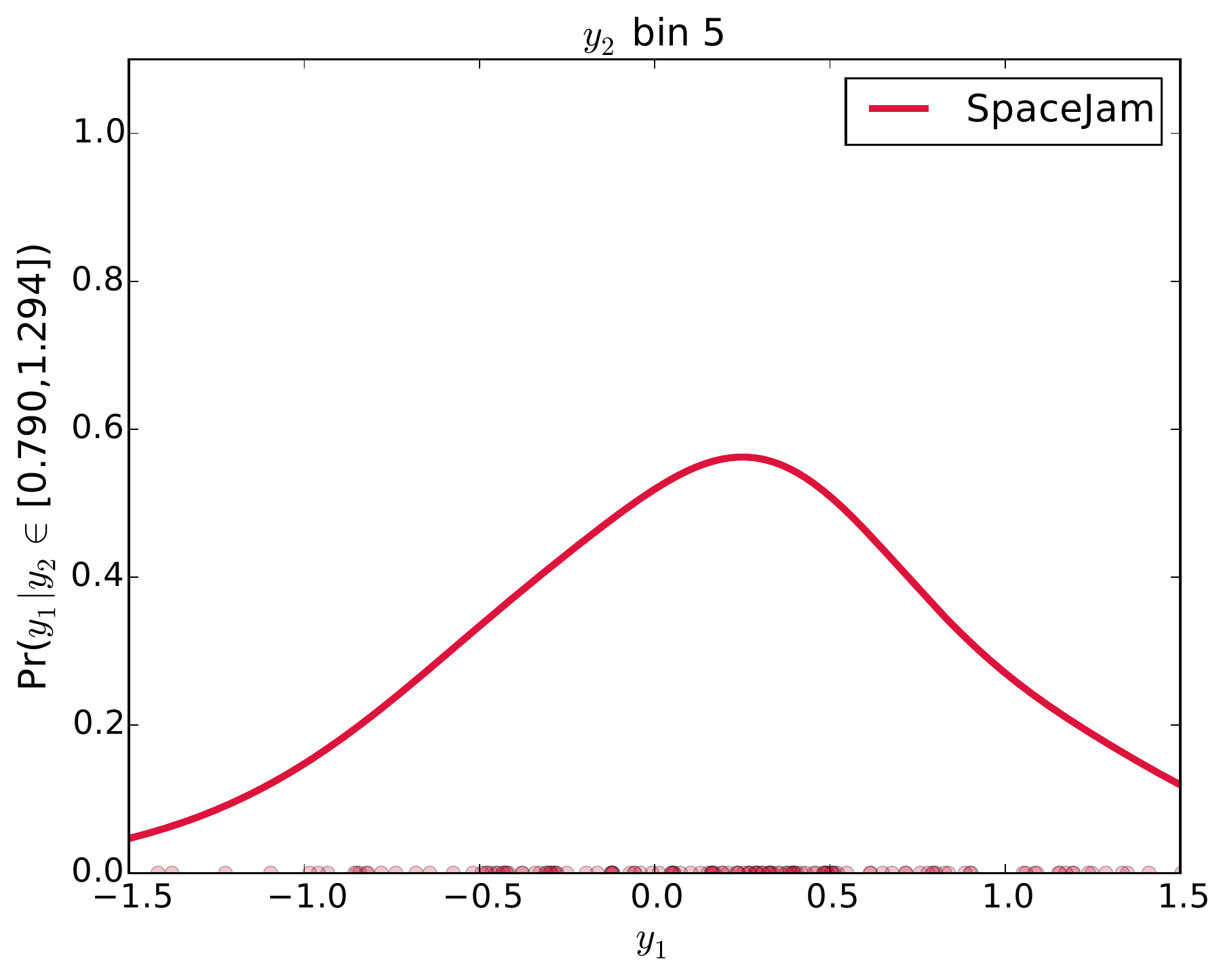}
\caption{Conditional distributions for \mqgm, \mb, \glasso, and \spacejam, fitted to samples from the ring distribution (\tiger~and \laplace's conditionals both look similar to \mb's).  First row: samples from the ring distribution, where each plot highlights the samples falling into a particular shaded bin on the $y_2$ axis. Second through fifth rows: conditional distributions of $y_1$ given $y_2$ for each method, where each plot conditions on the appropriate $y_2$ bin as highlighted in the first row. The \mqgm's conditional distributions are intuitive, appearing bimodal for bin 3, and more peaked for bins 1 and 5. \mb, \glasso, and \spacejam's densities appear (roughly) Gaussian, as expected.}
\label{fig:calib}
\end{figure}

\section{Additional numerical results for modeling flu epidemics}

Here, we plot samples from the marginal distributions of the percentages of flu reports at regions one, five, and ten throughout the year, which reveals the heteroskedastic nature of the data (just as in Section \ref{sec:flu}, for region six).

\begin{figure}[tb]
\centering
\includegraphics[width=1.8in]{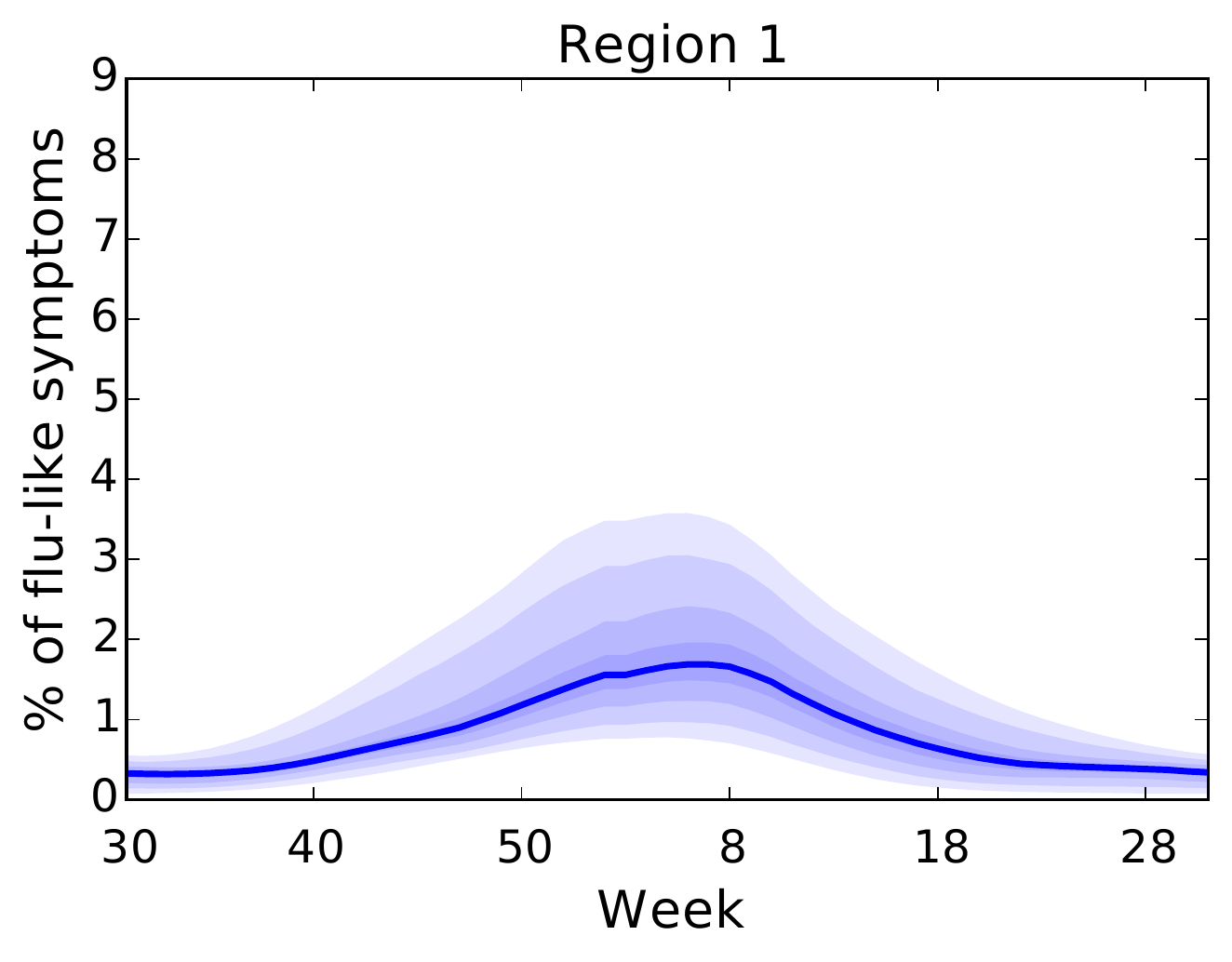}
\hfill
\includegraphics[width=1.8in]{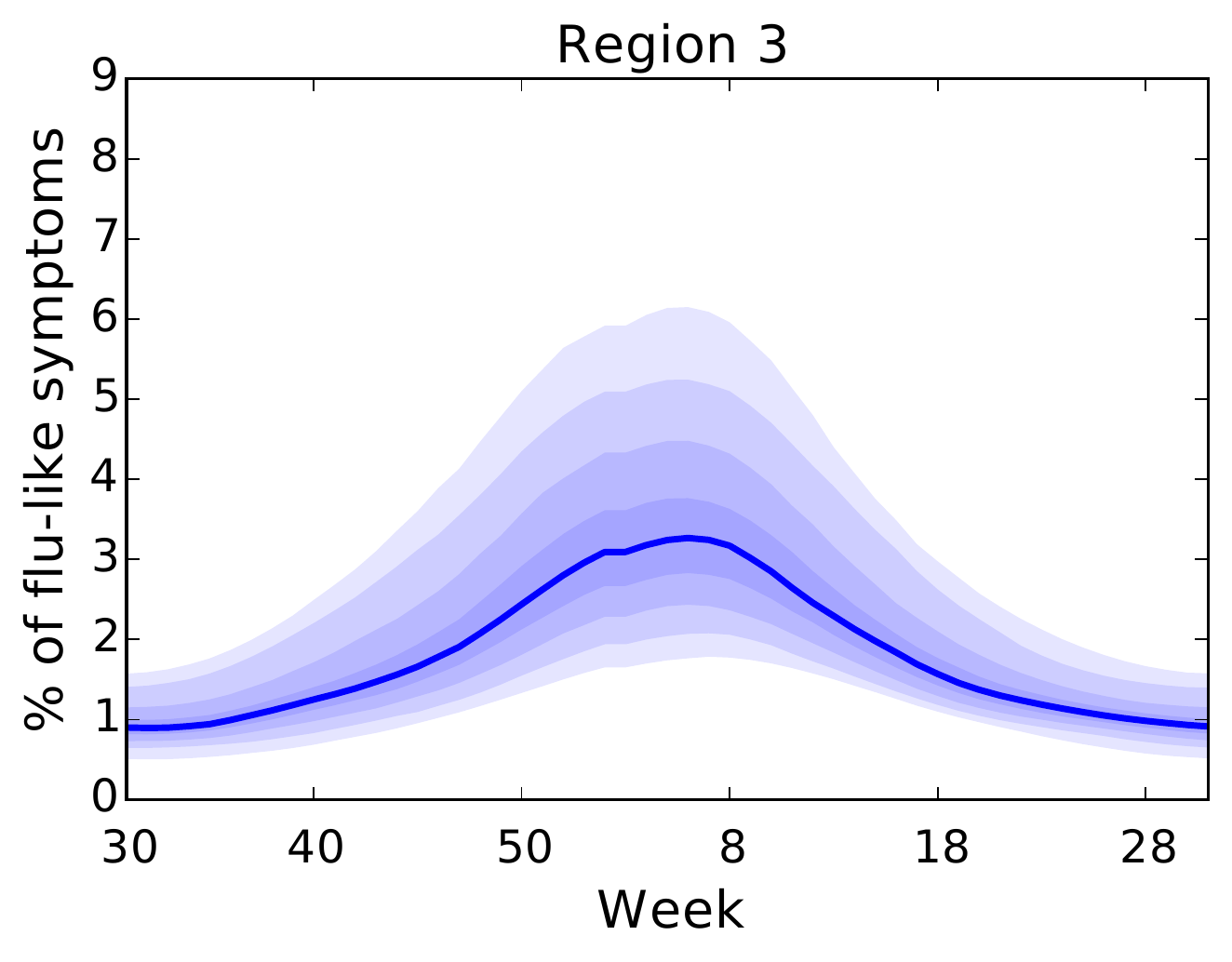} 
\hfill
\includegraphics[width=1.8in]{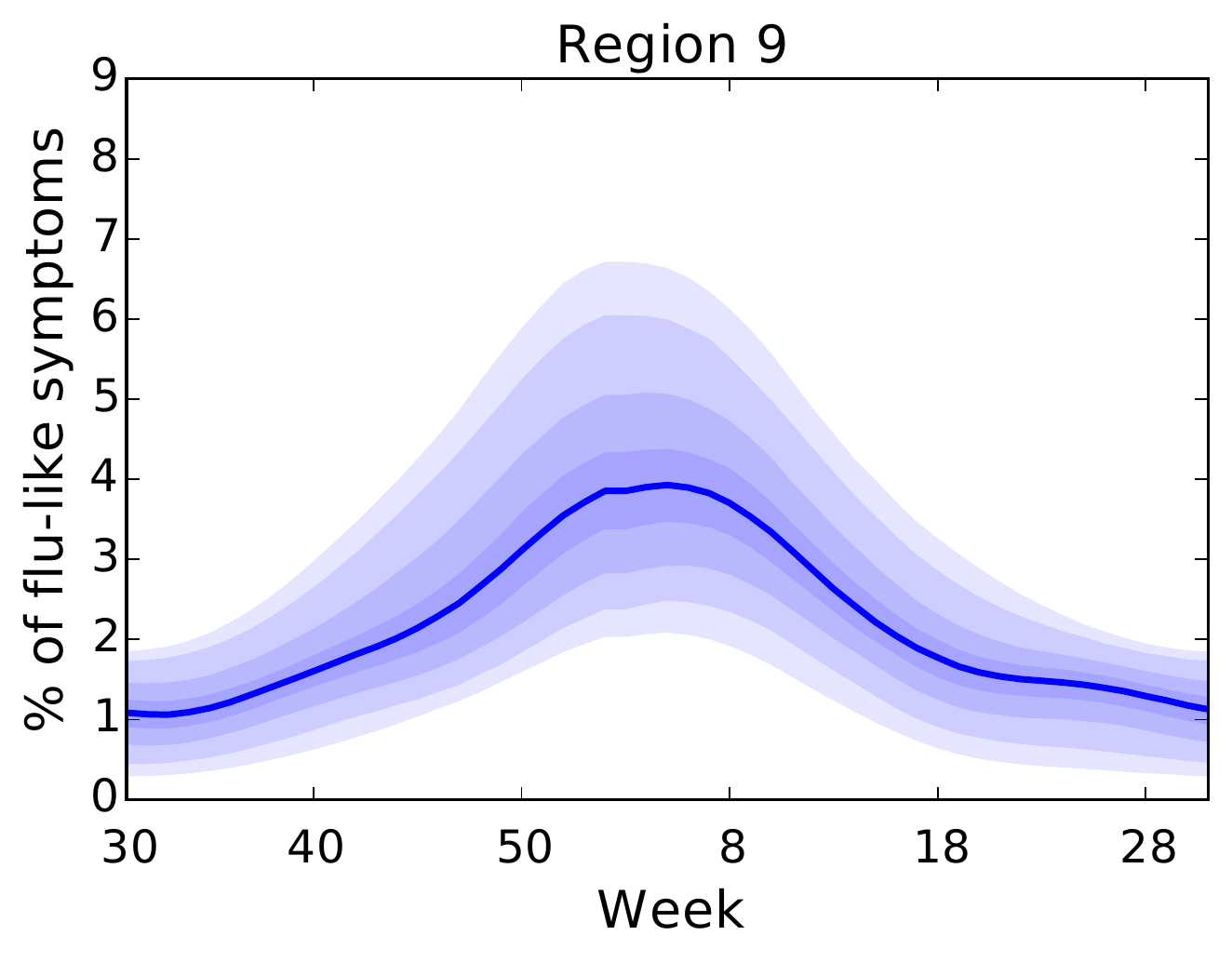}
\caption{Samples from the fitted marginal distributions of the weekly flu incidence rates at several regions of the U.S.; samples at larger quantile levels shaded lighter, median in darker blue.}
\end{figure}

\begin{figure}[tb]
\centering
\includegraphics[width=0.5\textwidth]{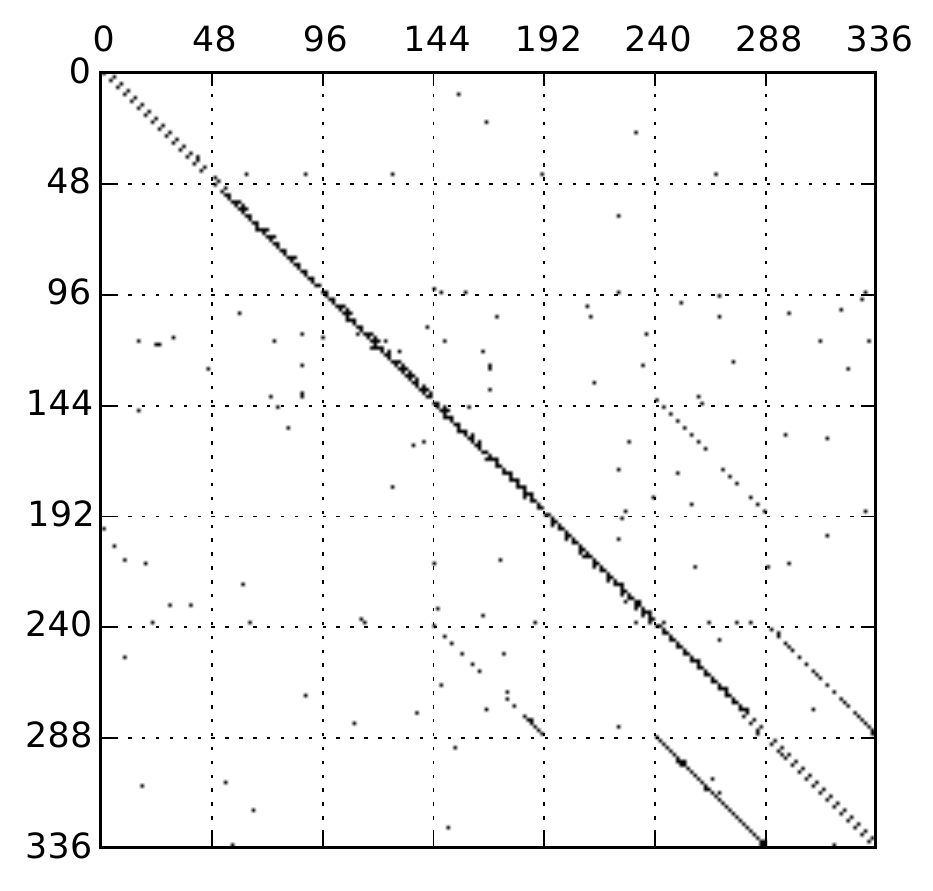} 
\caption{Conditional independencies recovered by the \mqgm~on the wind
farms data; each block corresponds to a wind farm, and black indicates
dependence.} 
\label{fig:wind}
\end{figure}

\section{Sustainable energy application}

We evaluate the ability of \mqgm~to recover the conditional
independencies between several wind farms on the basis of large-scale,
hourly wind power measurements; wind power is intermittent, and thus
understanding the relationships between wind farms can help
farm operators plan.  We obtained hourly wind power measurements from
July 1, 2009 through September 14, 2010 at seven wind farms ($n =
877$, see \cite{hong,wytock2013sparse,alnur} for details).  The primary
variables here encode the hourly wind power at a farm over two days
(\ie, 48 hours), thus $d = 7 \times 48 = 336$.
Exogenous variables were used to encode forecasted wind power and
direction as well as other historical measurements, for a total of 
$q = 3417$. We set $m=5$ and $r=20$. Fitting the 
\mqgm~here hence requires solving $48 \times 7 = 
336$ multiple quantile regression subproblems each of
dimension $\left( (336-1) \times 5 + 3417 \right) \times 20 
= 101,840$.  Each subproblem took roughly 87 minutes, comparable to the
algorithm of \cite{wytock2013sparse}.

Figure \ref{fig:wind} presents
the recovered conditional independencies; the nonzero super- and
sub-diagonal entries suggest that at any wind farm, the previous hour's
wind power (naturally) influences the next hour's, while the nonzero
off-diagonal entries, \eg, in the (4,6) block, uncover farms that may
influence one another. \cite{wytock2013sparse}, whose method
placed fifth in a Kaggle competition, as well as \cite{alnur} report similar findings (see
the left panels of Figures 7 and 3 in these papers, respectively).

\end{document}